\theoremstyle{plain}
\newtheorem{theorem}{Theorem}[section]
\newtheorem{lemma}[theorem]{Lemma}
\newtheorem{corollary}[theorem]{Corollary}
\newtheorem{proposition}[theorem]{Proposition}
\newtheorem{assumption}[theorem]{Assumption}
\newtheorem{remark}[theorem]{Remark}
\theoremstyle{nonumberplain}
\newtheorem{proof}{Proof}
\numberwithin{equation}{section} 
\SetMathAlphabet{\mathcal}{normal}{OMS}{cmsy}{m}{n} 
\SetMathAlphabet{\mathcal}{bold}{OMS}{cmsy}{m}{n} 
\providecommand{\ie}{i.~e.~}
\providecommand{\eg}{e.~g.~}
\providecommand{\cf}{cf.~}
\providecommand{\R}{\mathbb{R}}
\providecommand{\C}{\mathbb{C}}
\renewcommand{\C}{\mathbb{C}}
\providecommand{\T}{\mathbb{T}}
\renewcommand{\T}{\mathbb{T}}
\providecommand{\N}{\mathbb{N}}
\providecommand{\Z}{\mathbb{Z}}
\providecommand{\ii}{\mathrm{i}}
\providecommand{\e}{\mathrm{e}}
\renewcommand{\Re}{\mathrm{Re} \,}
\renewcommand{\Im}{\mathrm{Im} \,}
\providecommand{\eps}{\varepsilon}
\providecommand{\Cont}{\mathcal{C}}
\providecommand{\ran}{\mathrm{ran} \, }
\providecommand{\trace}{\mathrm{Tr} \,}
\providecommand{\dd}{\mathrm{d}}
\providecommand{\id}{\mathbf{1}}
\providecommand{\order}{\mathcal{O}}
\providecommand{\Fourier}{\mathcal{F}}
\providecommand{\trace}{\mathrm{Tr}}
\providecommand{\abs}[1]{\left \lvert #1 \right \rvert}
\providecommand{\sabs}[1]{\lvert #1 \vert}
\providecommand{\babs}[1]{\bigl \lvert #1 \bigr \rvert}
\providecommand{\norm}[1]{\left \lVert #1 \right \rVert}
\providecommand{\snorm}[1]{\lVert #1 \rVert}
\providecommand{\bnorm}[1]{\bigl \lVert #1 \bigr \rVert}
\providecommand{\bscpro}[2]{\bigl \langle #1 , #2 \bigr \rangle}
\providecommand{\sopro}[2]{\vert #1 \rangle \langle #2 \vert}
\providecommand{\ncint}{\mathrel{{\ooalign{$\int$\cr\kern+.07em\raise.15ex\hbox{$\pmb{\scriptstyle-}$}\cr}}}} \providecommand{\ncpartial}{\mathrel{{\ooalign{$\partial$\cr\kern+.29em\raise.79ex\hbox{$\pmb{\scriptstyle-}$}\cr}}}}
\providecommand{\Qgap}{Q_{\EFermi}}
\providecommand{\Alg}{\mathcal{A}}
\providecommand{\tracevol}{\mathcal{T}}
\providecommand{\Pol}{\mathcal{P}}
\providecommand{\shift}{\mathfrak{s}}
\providecommand{\Polad}{\Pol}
\providecommand{\Sone}{\mathbb{S}^1}
\providecommand{\Stwo}{\mathbb{S}^2}
\providecommand{\Blochb}{\mathcal{E}_{\mathrm{B}}}
\providecommand{\sphereb}{\mathcal{E}_{\Stwo}}
\providecommand{\Nvar}{\mathcal{N}}
\providecommand{\Svar}{\mathcal{S}}
\providecommand{\EFermi}{E}
\title{Topological Polarization \\ in Graphene-like Systems}
\author{Giuseppe De Nittis${}^{\ast}$ \& Max Lein${}^{\star}$}
\begin{document}

\maketitle
\vspace{-9mm}
\begin{center}
	$^{\ast}$ Department Mathematik, Universität Erlangen-Nürnberg \linebreak
	Cauerstrasse 11, D-91058 Erlangen, Germany \linebreak
	{\footnotesize \href{mailto:denittis@math.fau.de}{\texttt{denittis@math.fau.de}}}
	\medskip
	\\
	$^{\star}$ Kyushu University, Faculty of Mathematics \linebreak
	744 Motooka, Nishiku, Fukuoka, 819-0395, Japan \linebreak
	{\footnotesize \href{mailto:lein@ma.tum.de}{\texttt{lein@ma.tum.de}}}
\end{center}
\begin{abstract}
	In this article we investigate the possibility of generating \emph{piezoelectric orbital polarization} in graphene-like systems which are deformed periodically. 
	We start with discrete two-band models which depend on control parameters; in this setting, time-dependent model hamiltonians are described by loops in parameter space. Then, the gap structure at a given Fermi energy generates a non-trivial topology on parameter space which then leads to possibly non-trivial polarizations. 
	More precisely, we show the polarization, as given by the \emph{King-Smith--Vanderbilt formula}, depends only on the homotopy class of the loop; hence, a necessary condition for non-trivial piezo effects is that the fundamental group of the gapped parameter space must not be trivial. 
	The use of the framework of non-commutative geometry implies our results extend to systems with \emph{weak} disorder. 
	We then apply this analysis to the \emph{uniaxial strain model} for graphene which includes nearest-neighbor hopping and a stagger potential, and show that it supports non-trivial piezo effects; this is in agreement with recent physics literature. 
\end{abstract}
\noindent{\scriptsize \textbf{Key words:} Piezoelectric effect, King-Smith--Vanderbilt formula, graphene, topological quantization, random potentials}\\ 
{\scriptsize \textbf{MSC 2010:} 35Q41, 81Q70, 81R60, 82B44}

\tableofcontents

\section{Introduction and main results} 
\label{intro}
Piezoelectric materials are crystalline solids which become macroscopically charged when subjected to mechanical strain. One material that has recently moved into the limelight of piezoelectric physics due to the theoretical work \cite{Ong_Reed:engineered_piezoelectricity_graphene:2012} is graphene, and an experimental realization of these ideas would open up a lot of possibilities in the engineering of new piezoelectric devices. Much of graphene's peculiar properties \cite{Castro_Neto_et_al:electronic_properties_graphene:2009} stem from the conical intersections of valence and conduction band right at the Fermi energy. But the reason why it is an interesting material for piezoelectric devices is its unique mechanical robustness, allowing elastic deformations of up to $\unit[20]{\%}$ (as opposed to $\leqslant \unit[0.1]{\%}$ for normal materials) \cite{Liu_Ming_Li:2007,Lee_Wei_Kysar_Hone:elastic_properties_graphene:2008,Kim_et.all:2009}. 

To understand the link between graphene's band structure and piezoelectric properties, one needs a microscopic description of the piezoelectric effect. Such a description had eluded theoretical physicists until the mid-1970s when Martin \cite{Martin:polarization:1974} noticed that previous definitions of polarization in terms of microscopic dipole moments were incomplete. It took another 20 years until Resta \cite{Resta:electric_polarization:1992} and King-Smith and Vanderbilt \cite{King-Smith_Vanderbilt:polarization:1993} derived a formula for polarization from linear response theory. They recognized the crucial role of the adiabatic \emph{Berry phase} \cite{Berry:adiabatic:1984,Xiao_Chang_Niu:Berry_phase_electronic_properties:2010} and linked the difference in charge, the \emph{polarization $\Delta \mathcal{P} = \bigl ( \Delta \mathcal{P}_1 , \ldots , \Delta \mathcal{P}_d \bigr )$}, accumulated during a deformation in the time interval $[0,T]$ to 
\begin{align}
	\Delta \mathcal{P}_j := \ii \, \int_0^T \dd t \; \tracevol \Bigl ( P(t) \; \bigl [ \partial_t P(t) \; , \; \nabla_j P(t) \bigr ] \Bigr ) 
	.
	\label{intro:eqn:KSV_formula}
\end{align}
Here, $\tracevol$ denotes the \emph{trace per unit volume}, $P(t) = 1_{(-\infty,\EFermi)} \bigl ( H(t) \bigr )$ is the projection onto all states below the Fermi energy $\EFermi$ and $H(t)$ is the hamiltonian of the system. This equation is structurally identical to that for computing other Chern numbers such as those for the quantum Hall effect \cite{Thouless_Kohmoto_Nightingale_Den_Nijs:quantized_hall_conductance:1982,Bellissard_van_Elst_Schulz_Baldes:noncommutative_geometry_quantum_hall_effect:1994}. One crucial ingredient for the piezoelectric effect to occur is the absence of conducting states around the Fermi energy, \eg materials with a \emph{spectral gap} are good candidates. 

A mathematical justification of \eqref{intro:eqn:KSV_formula}, also called \emph{King-Smith--Vanderbilt formula}, has first been achieved by Panati, Sparber and Teufel \cite{Panati_Sparber_Teufel:polarization:2006} for the (commutative) case of continuous Schrödinger operators. In a later work, Schulz-Baldes and Teufel \cite{Schulz-Baldes_Teufel:random_polarization:2012} used the language of non-commutative geometry to establish \eqref{intro:eqn:KSV_formula} for \emph{dirty lattice systems}, \ie discrete operators which include the effects of random impurities. Both works also explore the topological nature of $\Delta \mathcal{P}$ in the case of \emph{periodic deformations} where $\Delta \mathcal{P}$ is quantized in appropriate units\footnote{The units for polarization are $\mbox{charge density} \times \mbox{distance}$, and a more careful consideration after restoring physical units yields $\Delta \Pol = \frac{\e}{\sabs{\mathbb{V}}} \, \sum_{j = 1}^d \Delta \Pol_j \, \gamma_j$ where $\e$ is the electron charge, $\sabs{\mathbb{V}}$ the volume of the Wigner-Seitz cell and the $\gamma_j$ are a basis for the lattice $\Gamma$ \cite[equation~(13)]{King-Smith_Vanderbilt:polarization:1993}. }; what is missing, however, are criteria that tell us \emph{which} periodic deformations lead to non-trivial polarization $\Delta \mathcal{P} \neq 0$. 

The main focus of this paper is the study of the piezoelectric effect for graphene subjected to periodic deformations. We will study the simplest kind of tight-binding model, the so-called \emph{uniaxial strain model}. It includes only nearest-neighbor interactions and a \emph{stagger potential}, and will be explained in more detail below. Our investigation has led us to the following three questions: 
\begin{itemize}
	\item[(Q1)] What is the topological origin of non-trivial polarizations?
	\item[(Q2)] Are there sufficiently general models applicable to graphene for which $\Delta \mathcal{P} \neq 0$? 
	\item[(Q3)] Is $\Delta \mathcal{P}$ stable under perturbations? 
\end{itemize}
Because our ideas can in principle be applied to any parameter-dependent system, we will formulate the first part of this work in more generality.

\subsection{The topology of the parameter space} 
\label{intro:parameter_space}
For the models we study, the hamiltonian $H(q)$ is described by a set of \emph{control parameters} $q = (q_1 , \ldots , q_N) \in \R^N$, \ie 
\begin{align*}
	H : Q \longrightarrow \Alg
\end{align*}
is a continuous (or even more regular) function that takes values in the selfadjoint elements of some algebra of bounded operators $\Alg$. We will always assume that $Q$ is a subset of $\R^N$. These parameters $q$ model the influence of external effects on the quantum system; In our example, the $q_j$ could be hopping parameters. 

Any choice of Fermi energy $\EFermi$ singles out configurations denoted with $Q_{\EFermi}$ made up of those values of $q \in Q$ for which 
\begin{enumerate}[(i)]
	\item $\EFermi$ lies in a spectral gap of $H(q)$, $\EFermi \not\in \sigma \bigl ( H(q) \bigr )$, and 
	\item there are states below $\EFermi$, \ie $P(q) := 1_{(-\infty,\EFermi)} \bigl ( H(q) \bigr ) \neq 0$. 
\end{enumerate}
We shall refer to $Q_{\EFermi}$ as the \emph{space of gapped configurations at $\EFermi$}. It is the topology of $Q_{\EFermi}$ which determines whether $\Delta \mathcal{P} = 0$ or not; more precisely, the \emph{fundamental group} $\pi_1(Q_{\EFermi})$ \cite{Hatcher:algebraic_topology:2002} can provide a classification for the piezoelectric effects for a given model system at a given Fermi energy. The idea is as follows: To each physical deformation which does not close the gap at $\EFermi$, we can associate a loop in parameter space $\eta : \Sone \rightarrow Q_{\EFermi}$ so that the time-dependent, $T$-periodic hamiltonian 
\begin{align}
	H_{\eta}(t) = H \bigl ( \eta(\nicefrac{2\pi t}{T}) \bigr ) 
	. 
	\label{intro:eqn:relation_model_hamiltonian_loops}
\end{align}
can be expressed in terms of the loop $\eta$ and the model hamiltonian $H$. The fact that the deformation should be continuous implies that $\eta$ is continuous. Then $H_{\eta}$ in turn defines a time-dependent Fermi projection 
\begin{align}
	P_{\eta}(t) = 1_{(-\infty,\EFermi)} \bigl ( H_{\eta}(t) \bigr ) 
	\label{intro:eqn:Fermi_projection}
\end{align}
which is then plugged into equation~\eqref{intro:eqn:KSV_formula} to obtain $\Delta \mathcal{P}(\eta)$ for each loop $\eta$. Overall, this procedure yields a map 
\begin{align*}
	\eta \mapsto \Delta \mathcal{P}(\eta) \in \Z^d
\end{align*}
from the space of loops in $Q_{\EFermi}$. 

That $\Delta \mathcal{P}(\eta)$ is a topological quantity is reflected in the fact that the value depends only on the equivalence class $[\eta] \in \pi_1(Q_{\EFermi})$: if $\eta$ and $\eta'$ are homotopic, then also the corresponding Fermi projections $P_{\eta}$ and $P_{\eta'}$ can be continuously deformed into one another (Proposition~\ref{KSV:lem:homotopic_loops_homotopic_projections}). Thus, the invariance of \eqref{intro:eqn:KSV_formula} under homotopies yields 
\begin{theorem}[Homotopy-invariance of $\Delta \mathcal{P}$]\label{intro:thm:homotopy_invariance_polarization}
	Under the technical conditions enumerated in Theorem~\ref{main_result:thm:Delta_Pol_Poincare_group}, the map 
	\begin{align}
		\Delta \mathcal{P}_{\ast} : \pi_1(Q_{\EFermi}) \longrightarrow \Z^d 
		, \quad 
		[\eta] \mapsto \Delta \mathcal{P}_{\ast}([\eta]) := \Delta \mathcal{P}(\eta)
		, 
		\label{intro:eqn:polarization_morphism}
	\end{align}
	is a group morphism where $\eta$ is any continuously differentiable representative of $[\eta]$. 
\end{theorem}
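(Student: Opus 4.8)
The plan is to recognise the King--Smith--Vanderbilt integral \eqref{intro:eqn:KSV_formula} as the period of a fixed $1$-form on $Q_{\EFermi}$, after which both well-definedness and the morphism property reduce to standard facts about integrating closed $1$-forms over loops. First I would rewrite $\Delta \mathcal{P}_j(\eta)$ purely in terms of the loop. Since $\EFermi \notin \sigma(H(q))$ on $Q_{\EFermi}$, the Riesz formula $P(q) = \frac{\ii}{2\pi} \oint_{\Gamma} \bigl( H(q) - z \bigr)^{-1} \, \dd z$ (with $\Gamma$ enclosing the spectrum below $\EFermi$) shows that $q \mapsto P(q)$ is as regular as $H$; this is the content underlying Proposition~\ref{KSV:lem:homotopic_loops_homotopic_projections}, and it guarantees that a homotopy of loops lifts to a homotopy of Fermi projections. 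For a $C^1$ representative $\eta$ the chain rule gives $\der_t P_{\eta}(t) = \frac{2\pi}{T} \sum_k \dot{\eta}_k(\nicefrac{2\pi t}{T}) \, (\der_{q_k} P)(\eta(\nicefrac{2\pi t}{T}))$, and substituting this into \eqref{intro:eqn:KSV_formula} and changing variables to $\phi = \nicefrac{2\pi t}{T}$ yields
\begin{align*}
	\Delta \mathcal{P}_j(\eta) = \oint_{\eta} \omega_j , \qquad \omega_j := \ii \sum_{k=1}^N \tracevol \Bigl( P \, [ \der_{q_k} P , \nabla_j P ] \Bigr) \, \dd q_k ,
\end{align*}
so that $\Delta \mathcal{P}_j(\eta)$ is the line integral over $\eta$ of a genuine $1$-form $\omega_j$ on $Q_{\EFermi}$, depending on $\eta$ only through its trace.

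The central step is to show that each $\omega_j$ is closed. Here I would compute $\dd \omega_j = \ii \sum_{k,l} \der_{q_l} \tracevol \bigl( P [ \der_{q_k} P , \nabla_j P ] \bigr) \, \dd q_l \wedge \dd q_k$ and verify that the antisymmetrisation in $(k,l)$ vanishes. This is where the non-commutative differential calculus does the work: using $P^2 = P$ (hence $P (\der P) P = 0$ and $\der P = (\der P) P + P (\der P)$), the fact that $\nabla_j$ is a derivation commuting with the parameter derivatives $\der_{q_k}$, and the cyclicity of $\tracevol$ together with $\tracevol(\nabla_j A) = 0$, one finds that the symmetric and boundary contributions cancel exactly as in the standard proof that the first Chern/Berry curvature form is closed. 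I expect this algebraic identity---together with the analytic justification that $\tracevol$ may be differentiated and that $P \in \Alg$ lies in the appropriate Sobolev-type subalgebra so that all terms are of trace class per unit volume---to be the main obstacle; it is precisely here that the technical hypotheses of Theorem~\ref{main_result:thm:Delta_Pol_Poincare_group} enter.

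Once $\dd \omega_j = 0$, well-definedness is immediate. If $\eta$ and $\eta'$ are homotopic, I would choose a smooth homotopy $F \colon [0,1] \times \Sone \to Q_{\EFermi}$ fixing the base point (smoothing the given continuous one), whence Stokes' theorem gives $\Delta \mathcal{P}_j(\eta) - \Delta \mathcal{P}_j(\eta') = \int_{[0,1] \times \Sone} F^{\ast}(\dd \omega_j) = 0$, the base-edge contributions vanishing because $F$ is constant there. Since the target $\Z^d$ is abelian, the map in fact factors through the abelianisation $H_1(Q_{\EFermi};\Z)$, so no base-point ambiguity arises and the value depends only on the free homotopy class. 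Finally, the morphism property is the additivity of line integrals under concatenation: for the group law $[\eta] \cdot [\eta'] = [\eta \ast \eta']$ in $\pi_1(Q_{\EFermi})$ one has $\oint_{\eta \ast \eta'} \omega_j = \oint_{\eta} \omega_j + \oint_{\eta'} \omega_j$, which is exactly $\Delta \mathcal{P}_{\ast}([\eta] \cdot [\eta']) = \Delta \mathcal{P}_{\ast}([\eta]) + \Delta \mathcal{P}_{\ast}([\eta'])$. Integrality of the periods, i.e. that the range lies in $\Z^d$ rather than $\R^d$, is supplied by Theorem~\ref{main_result:thm:Delta_Pol_Poincare_group} and need not be reproven here.
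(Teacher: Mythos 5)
Your proposal is correct in substance but takes a genuinely different route from the paper's. You trade the time integral for the period $\oint_{\eta} \omega_j$ of a fixed $1$-form $\omega_j = \ii \sum_k \tracevol \bigl ( P \, [ \partial_{q_k} P , \nabla_j P ] \bigr ) \, \dd q_k$ on $Q_{\EFermi}$ and reduce both well-definedness and the morphism property to closedness of $\omega_j$ plus Stokes and concatenation-additivity. The closedness does hold --- the antisymmetrised second-derivative terms cancel by the usual projection identities ($P (\partial P) P = 0$, so odd products of off-diagonal blocks are traceless), and the single surviving term is $\nabla_j$-exact and hence annihilated by $\tracevol$, the non-commutative analogue of integrating a total $k_j$-derivative over the torus --- but note that you leave precisely this crux as an expectation rather than a computation. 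The paper never descends to a form on parameter space: it first shows (Lemma~\ref{KSV:lem:homotopic_loops_homotopic_projections}) that homotopic loops give $\Cont^p$-homotopic Fermi projections, via the Riesz integral, Kato's explicit unitary intertwiner \eqref{KSV:eqn:unitary_intertwiner_explicity} and a covering-and-gluing argument, and then invokes the fact that \eqref{intro:eqn:KSV_formula} is a cyclic two-cocycle on $\widehat{\Alg} = \Cont(\Sone) \otimes \Alg$ pairing integrally with $K_0(\widehat{\Alg})$ \cite[Corollary~2]{Schulz-Baldes_Teufel:random_polarization:2012}, so homotopy invariance and $\Z^d$-valuedness arrive in one stroke. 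This difference is material for the one point you defer: periods of a closed $1$-form are a priori only real, so your method cannot by itself supply the codomain $\Z^d$, and you cannot borrow integrality from Theorem~\ref{main_result:thm:Delta_Pol_Poincare_group}, since that theorem \emph{is} the statement under proof; you must import it directly from the $K$-theoretic pairing or, in the periodic case, from the Chern-number description of Section~\ref{KSV:Bloch_bundle}. Finally, the smoothing step you wave at --- replacing a continuous (homotopy of) loop(s) by a $\Cont^1$ one \emph{inside} $Q_{\EFermi}$, not merely inside $Q$ --- is exactly what the paper's Lemma~\ref{reformulation:lem:Qgap_open} (openness of $Q_{\EFermi}$) and Proposition~\ref{KSV:prop:pi_1_simeq_pi_1} are for; it is part of the ``technical conditions'' and deserves more than a parenthesis.
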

The practical implication of this theorem for calculations is as follows: given a model $H : Q \longrightarrow \Alg$, a value for the Fermi energy $\EFermi$ and a deformation $\eta$, all we need to figure out is the \emph{equivalence class} $[\eta] \in \pi_1(Q_{\EFermi})$. Then we are free to use \emph{any} loop $\eta'$ in the equivalence class $[\eta]$ to compute the polarization. Moreover, we get an immediate criterion for the triviality of the polarization: 
\begin{corollary}\label{intro:cor:criterion_triviality}
	A necessary condition for $\Delta \mathcal{P}(\eta) \neq 0$ is $\pi_1(Q_{\EFermi}) \neq \{ 0 \}$. 
\end{corollary}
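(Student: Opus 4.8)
The plan is to prove the contrapositive: assuming $\pi_1(Q_{\EFermi}) = \{0\}$, I would show that $\Delta \mathcal{P}(\eta) = 0$ for every loop $\eta$. The entire argument rests on Theorem~\ref{intro:thm:homotopy_invariance_polarization}, which has already established that $\Delta \mathcal{P}_{\ast}$ is a well-defined group morphism on $\pi_1(Q_{\EFermi})$; once that structural result is in hand, the corollary is essentially immediate and requires no new analysis.

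First I would invoke the elementary fact that any group morphism sends the neutral element to the neutral element, so that $\Delta \mathcal{P}_{\ast}$ maps the identity class $[c_0] \in \pi_1(Q_{\EFermi})$ --- represented by a constant loop $c_0$ --- to $0 \in \Z^d$. If $\pi_1(Q_{\EFermi}) = \{0\}$, then this identity class is the only element of the fundamental group, so every loop $\eta$ satisfies $[\eta] = [c_0]$ and consequently $\Delta \mathcal{P}(\eta) = \Delta \mathcal{P}_{\ast}([\eta]) = \Delta \mathcal{P}_{\ast}([c_0]) = 0$. Taking the contrapositive yields the claim.

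As a sanity check that does not even invoke the group structure, I would verify directly that a constant loop produces vanishing polarization: if $\eta$ is constant then $H_{\eta}(t)$ is independent of $t$ by \eqref{intro:eqn:relation_model_hamiltonian_loops}, hence so is the Fermi projection $P_{\eta}(t)$ of \eqref{intro:eqn:Fermi_projection}, giving $\partial_t P_{\eta}(t) = 0$ and therefore a vanishing integrand in \eqref{intro:eqn:KSV_formula}. Combined with the homotopy-invariance of $\Delta \mathcal{P}$ (Proposition~\ref{KSV:lem:homotopic_loops_homotopic_projections} together with Theorem~\ref{intro:thm:homotopy_invariance_polarization}), this provides a second, self-contained route to the same conclusion, since triviality of $\pi_1(Q_{\EFermi})$ makes every $\eta$ null-homotopic and thus reducible to the constant loop.

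There is no genuine obstacle at the level of the corollary: all the mathematical substance --- the well-definedness of $[\eta] \mapsto \Delta \mathcal{P}(\eta)$ on homotopy classes and its compatibility with concatenation of loops --- is absorbed into Theorem~\ref{intro:thm:homotopy_invariance_polarization}. The only point requiring care is the logical direction: one must state the result as a necessary and not a sufficient condition, because a non-trivial fundamental group need not produce non-trivial polarization --- the morphism $\Delta \mathcal{P}_{\ast}$ may well have a non-trivial kernel.
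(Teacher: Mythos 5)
Your proposal is correct and follows the same route as the paper: the paper's justification is precisely that triviality of $\pi_1(Q_{\EFermi})$ makes every deformation homotopic to the constant (no-deformation) loop, whence $\Delta \mathcal{P}(\eta) = 0$ by the homotopy invariance of Theorem~\ref{intro:thm:homotopy_invariance_polarization}. Your additional direct check that a constant loop yields $\partial_t P_{\eta} = 0$ and hence a vanishing integrand in \eqref{intro:eqn:KSV_formula} is a harmless (and correct) elaboration of the same argument.
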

If $\pi_1(Q_{\EFermi}) = \{ 0 \}$, then all deformations supported in $Q_{\EFermi}$ are equivalent to the case of no deformation, and hence $\Delta \mathcal{P}(\eta) = 0$. It is in this sense that regions in $Q$ where the spectral gap at $\EFermi$ closes create the non-trivial topology necessary for a non-trivial piezo effect. 
\medskip

\noindent
The linearity of $\Delta \Pol_{\ast}$ implies it sends commutators of loops to $0$, and thus \eqref{intro:eqn:polarization_morphism} is also well-defined as a map from the \emph{abelianization} of $\pi_1(\Qgap)$ to $\Z^d$. The abelianization, however, corresponds to the homology group $H_1(\Qgap)$ \cite{Hatcher:algebraic_topology:2002}. This distinction, which is only important in the `exotic' cases of non-commutative $\pi_1(\Qgap)$, shows that the piezoelectricity depends more properly on the homological properties of $\Qgap$. 

\subsection{Tight-binding models for piezoelectricity in graphene} 
\label{intro:tight_binding_graphene}
After answering (Q1), let us turn our attention to graphene and the second question. To the best of our knowledge, with the exception of the Rice-Mele model in $d = 1$ \cite{Rice_Mele:1982,Onoda_Murakami_Nagaosa:2004}, there are no other concrete models in $d > 1$ for which the polarization has been calculated exactly. Our framework allows for the evaluation of Chern numbers (including the polarization) for so-called \emph{two-band hamiltonians} (\cf Section~\ref{topology}); in particular, we will consider the \emph{uniaxial strain model} in $d = 2$. This simple model incorporates all the hallmarks of piezoelectrical modifications of graphene proposed by theoretical physicists \cite{Ong_Reed:engineered_piezoelectricity_graphene:2012}, and we show that it allows for deformations which have non-trivial polarizations. 

Let us quickly recount the basics of the crystal structure of graphene: it is an essentially two-dimensional material consisting of a single layer of graphite. The carbon atoms are arranged in a honeycomb lattice (\cf Figure~\ref{intro:figure:honeycomb}) which is obtained by the juxtaposition of two triangular lattices $\Gamma \simeq \Z^2$ generated by the fundamental vectors 
\begin{align*}
	\gamma_1 = \tfrac{a}{2} \bigl ( 3 , + \sqrt{3} \bigr ) 
	, 
	&&
	\gamma_2 = \tfrac{a}{2} \bigl ( 3 , - \sqrt{3} \bigr ) 
	. 
\end{align*}
\begin{figure}[t]
	\hfil
	\resizebox{!}{80mm}{
		\input{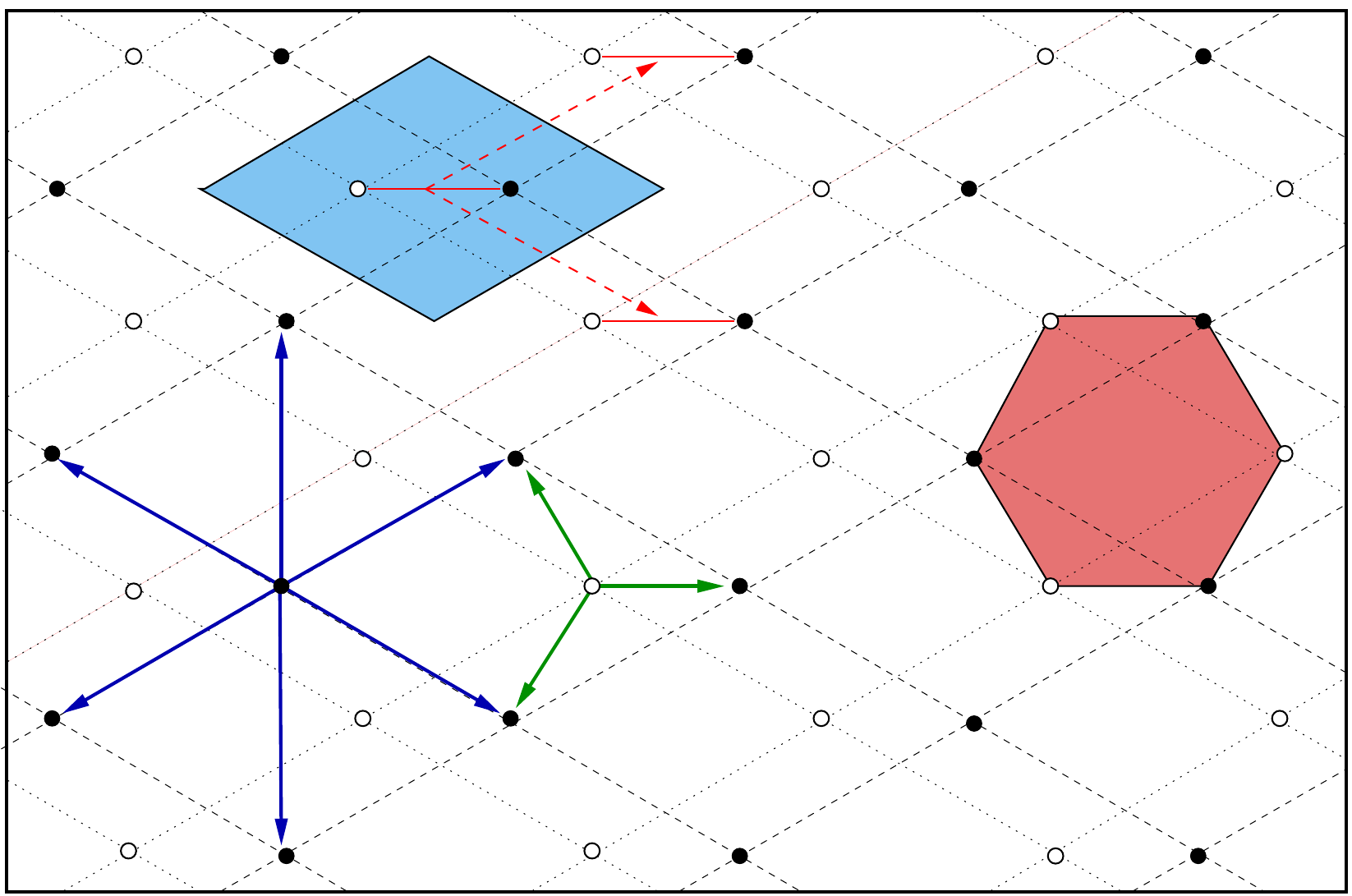_t}
	}
	\hfil
	\caption{The honeycomb lattice as the superposition of two triangular lattices $\Gamma$ with atoms of type $0$ (white) and type $1$ (black). Every
\emph{unit cell} $\mathbb{V}$ contains a pair of sites one of type 0 and the other of type 1. The green arrows $\delta_0,\delta_1,\delta_3$ connect sites of type 0 with the three \emph{nearest-neighbor sites} of type 1.
The blue arrows $\pm\gamma_1,\pm\gamma_2,\pm\gamma_3$ connect a given site with the six \emph{second-nearest-neighbor} sites.
$\psi^{(0)}_\gamma$ (resp. $\psi^{(1)}_\gamma$) denotes the component of the wave function for sites of type 0 (resp. 1) in the cell located in 
$\gamma$.}
	\label{intro:figure:honeycomb}
\end{figure}
Here $a \approx \unit[1.42]{\,}$Å is the distance between two carbon atoms. The vectors 
\begin{align*}
	\delta_0 = a (1,0)
	, 
	\qquad\qquad
	\delta_1 = \tfrac{a}{2} \bigl ( -1 , +\sqrt{3} \bigr )
	,
	\qquad\qquad
	\delta_2 = \tfrac{a}{2} \bigl ( -1 , -\sqrt{3} \bigr )
	,
\end{align*}
connect \emph{nearest-neighbor sites} belonging to different sublattices. The presence of two atoms per unit cell can be described by an internal degree of freedom usually referred to as isospin. Hence, if we ignore the electron's spin, the relevant Hilbert space is $\ell^2(\Gamma) \otimes \C^2$, the space of square summable sequences on the lattice $\Gamma$ with an internal $\C^2$ isospin degree of freedom. 

The simplest model which includes only nearest-neighbor hopping is defined by the hamiltonian 
\begin{align}
	T(q_1,q_2) = \left (
	\begin{matrix}
		0 & \id_{\ell^2(\Gamma)} + q_1 \, \mathfrak{s}_1 + q_2 \, \mathfrak{s}_2 \\
		\id_{\ell^2(\Gamma)} + q_1 \, \mathfrak{s}_1^* + q_2 \, \mathfrak{s}_2^* & 0 \\
	\end{matrix}
	\right )
	\label{intro:eqn:nn_hamiltonian}
\end{align}
where for $j = 1 , 2$ the operators 
\begin{align*}
	(\mathfrak{s}_j \psi)_{\gamma} := \psi_{\gamma - \gamma_j}
\end{align*}
are \emph{shifts} by $\gamma_j$ and the $q_j \in \R$ are amplitudes which quantify the hopping to nearest neighbors located in adjacent unit cells. We have fixed the hopping amplitude corresponding to shifts by $\delta_0$ to $1$ by fixing a suitable energy scale. The isotropic case $q_1 = q_2 = 1$ is the standard tight-binding model for graphene (up to a rescaling in energy of order $\approx \unit[-2.8]{eV}$). In configurations $(q_1,q_2)$ close to $(1,1)$, the band spectrum of \eqref{intro:eqn:nn_hamiltonian} has two conical intersections and no spectral gaps. 

In this framework, we assume the net effect of applied strains is captured as a change of hopping parameters $(q_1,q_2)$. The range of validity of this approximation has been studied extensively \cite{pereira_castro:2009,ribeiro_pereira_peres_briddon_castro:2009}, and in our units, it suffices to consider the $q_j$ in the range $[0,2]$. The dependence of the spectrum of $T(q_1,q_2)$ on the hopping parameters is well-known \cite{Hasegawa_Konno_Nakano_Kohomoto:zero_modes_honeycomb:2006} (\cf Figure~\ref{intro:figure:spectrum_H_nn}): $\sigma \bigl ( T(q_1,q_2) \bigr )$ is symmetric around the zero energy, and thus the relevant Fermi energy $\EFermi = 0$ lies directly where the spectral gap will open. 
\begin{figure}[t]
	\hspace{1mm}
	\resizebox{!}{56mm}{\input{./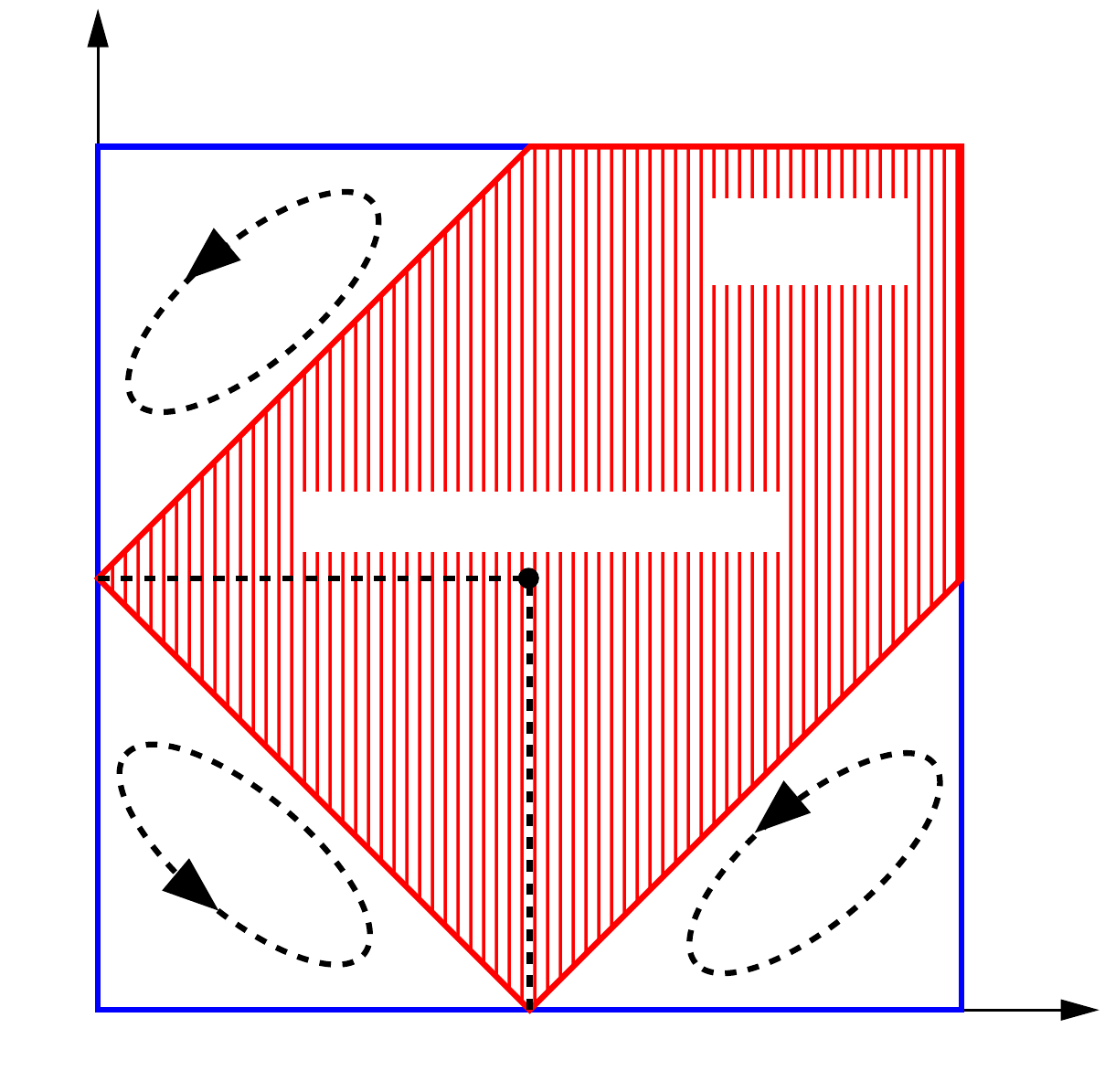_t}}\qquad\quad \resizebox{!}{52mm}{\input{./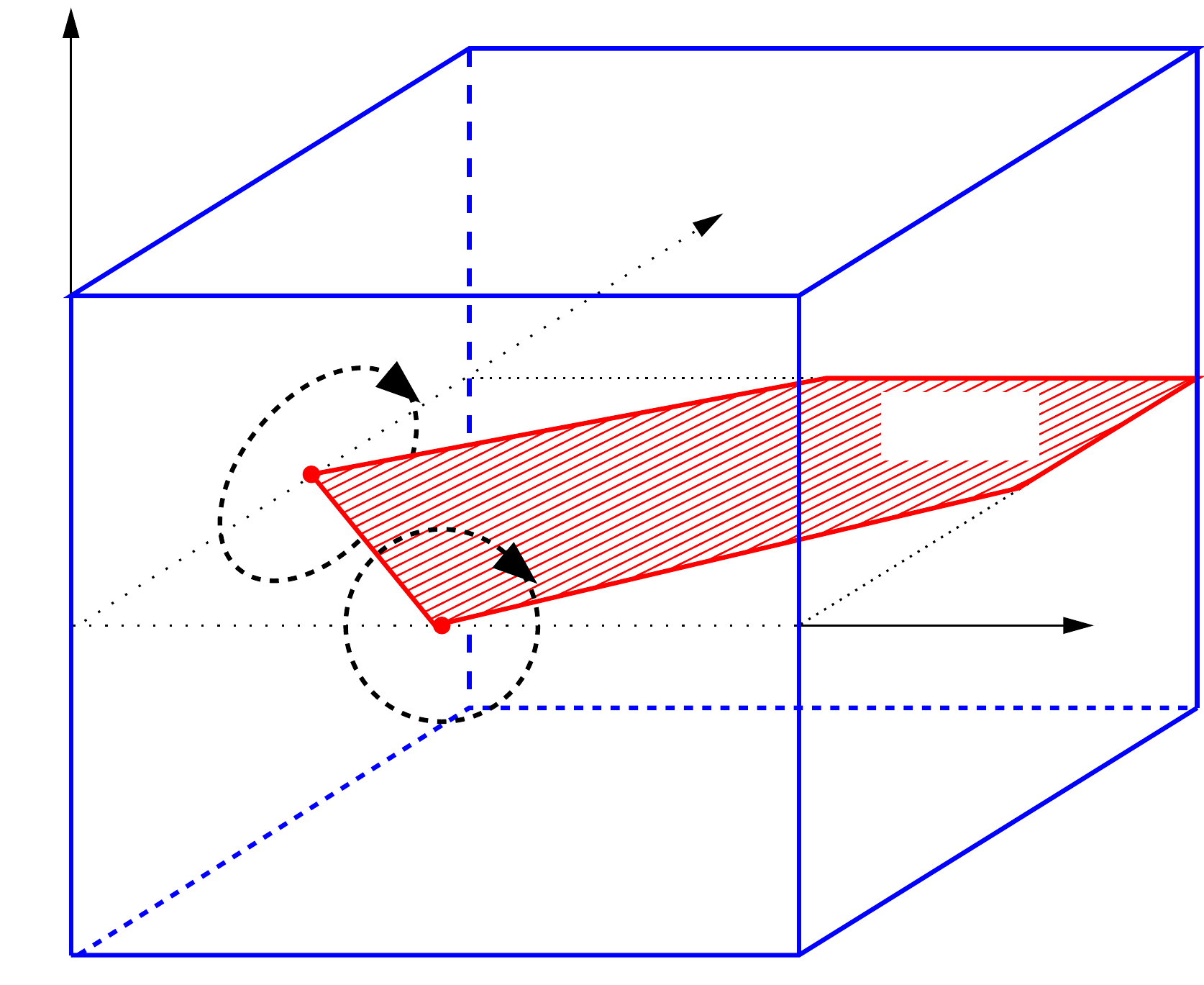_t}}
	\hspace{0mm}
	\\[-1.3em]
	\makebox[3mm]{}(a) \makebox[60mm]{}(b)
	\caption{(a) Representation of the parameter space $[0,2]^2$ for the hamiltonian $T(q_1,q_2)$ given in \eqref{intro:eqn:nn_hamiltonian} at Fermi energy $\EFermi = 0$. For values of the parameters in the (closed) red region the system is gapless. The gapped part of the parameter space is made by the three disjoint triangular regions in white and each of this region is simply connected.
	(b) Representation of the parameter space $Q:=[0,2]^2 \times [-1,1]$ for the hamiltonian $H(q_1,q_2,q_3)$ given in \eqref{intro:eqn:uniaxial_strain_hamiltonian} at Fermi energy $\EFermi=0$. The extra dimension $q_3$ given by the stagger perturbation allows the gapped parameter space $Q_0$ (\ie $Q$ minus the red region) to be path-connected. }
	\label{intro:figure:spectrum_H_nn}
\end{figure}
If we restrict ourselves to positive hopping parameters, then the part of the parameter space $Q_0$ where the gap is open is comprised of three disjoint simply connected components. Thus, $\pi_1(Q_0) = \{ 0 \}$ for each component and according to Corollary~\ref{intro:cor:criterion_triviality}, the piezoelectric effect has to be absent. The presence or absence of topological invariants is closely related to the symmetries of a system \cite{Altland_Zirnbauer:superconductors_symmetries:1997,Schnyder_Ryu_Furusaki_Ludwig:classification_topological_insulators:2008}: 
\emph{absence of inversion symmetry} is a necessary condition for a material to be piezoelectric. However, $T(q_1,q_2)$ has an inherent inversion symmetry. Let $\wp$ be the unitary operator defined by $(\wp \psi)_{\gamma} := \psi_{-\gamma}$. Then $\wp \mathfrak{s}_j \wp = \mathfrak{s}_j^*$ holds and a simple computation yields that if we tensor $\wp$ with the Pauli matrix $\sigma_1$, we obtain an inversion symmetry of $T(q_1,q_2)$, 
\begin{align*}
	\bigl [ T(q_1,q_2) \; , \; \wp \otimes \sigma_1 \bigr ] = 0 
	. 
\end{align*}
In other words, graphene is \emph{not intrinsically piezoelectric}. To have any hopes of seeing piezoelectric effects, graphene needs to be modified in such a way as to break its inherent inversion symmetry. One potential way to achieve this is to adsorb atoms on one side of the graphene sheet (\eg hydrogen, lithium, potassium or fluorine); the piezoelectric effect of modified graphene is then expected to be comparable to that of $3d$ piezoelectric materials \cite{Ong_Reed:engineered_piezoelectricity_graphene:2012}. The simplest way to capture this breaking of inversion symmetry in the model is to add a \emph{stagger potential} to $T$, \ie to consider the uniaxial strain hamiltonian 
\begin{align}
	H(q_1,q_2,q_3) := T(q_1,q_2) + q_3 \, \left (
	\begin{matrix}
		+ \id_{\ell^2(\Gamma)} & 0 \\
		0 & - \id_{\ell^2(\Gamma)} \\
	\end{matrix}
	\right )
	\label{intro:eqn:uniaxial_strain_hamiltonian}
\end{align}
instead. We take the parameter space to be $Q = [0,2]^2 \times [-1,+1]$. Now the gapped parameter region $Q_0$ is arcwise connected and has a non-trivial fundamental group $\pi_1(Q_0) \simeq \Z^2$ (\cf Proposition~\ref{nn_model:prop:pi_1_Qgap}). Hence, any loop $\eta : \Sone \longrightarrow Q_0$ can be continuously deformed into a loop which winds $n_1$ times around $(1,0,0)$ and $n_2$ times around $(0,1,0)$ (\cf Figure~\ref{intro:figure:spectrum_H_nn}~(b)), and we get 
\begin{align}
	\Delta \mathcal{P}(\eta) = n_1 \, \Delta \mathcal{P}(\eta_1) + n_2 \, \Delta \mathcal{P}(\eta_2) 
\end{align}
where $\eta_j$ are the loops indicated in Figure~\ref{intro:figure:spectrum_H_nn}~(b). In order to prove that this model supports non-trivial piezo effects, we need to show $\Delta \mathcal{P}(\eta_j) \neq 0$. In Sections~\ref{topology} and \ref{nn_model}, we develop a technique in the spirit of \cite{Kohmoto:quantization_Hall_conductance:1985} which allows us to compute the $\Delta \mathcal{P}(\eta_j)$ (and all other Chern numbers) explicitly. 
\begin{theorem}[Piezoelectric effect in the uniaxial strain model]
	There are periodic deformations $\eta$ of \eqref{intro:eqn:uniaxial_strain_hamiltonian} such that $\Delta \mathcal{P}(\eta) \neq 0$. More precisely, let 
	\begin{align*}
		\eta_1(t) := \bigl ( 1 + \eps \, \cos t , 0 , - \eps \, \sin t \bigr )
		,
		\qquad\qquad
		\eta_2(t) := \bigl ( 0 , 1 + \eps \, \cos t , - \eps \, \sin t \bigr )
		,
	\end{align*}
	be the two generators of $\pi_1(Q_0) \simeq \Z^2$ for some $\eps \in (0,1)$ and $H_{\eta_j}(t)$ the periodic deformation of the graphene Hamiltonian \eqref{intro:eqn:uniaxial_strain_hamiltonian} along $\eta_j$. Then $\Delta \mathcal{P}(\eta_1) = (1,0)$ and~$\Delta \mathcal{P}(\eta_2) = (0,1)$, and thus $\Delta \mathcal{P}(\eta) = (n_1,n_2)$ for $[\eta] = n_1 \, [\eta_1] + n_2 \, [\eta_2]$. 
\end{theorem}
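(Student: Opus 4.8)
The plan is to evaluate the King-Smith--Vanderbilt integral \eqref{intro:eqn:KSV_formula} for each generator by reducing it to a Brouwer degree. Since the uniaxial strain model \eqref{intro:eqn:uniaxial_strain_hamiltonian} is translation invariant, I would first pass to the Bloch--Floquet representation, in which the shifts $\mathfrak{s}_1,\mathfrak{s}_2$ are simultaneously diagonalized and become multiplication by the phases $\e^{-\ii\theta_1},\e^{-\ii\theta_2}$, with $\theta_j = k\cdot\gamma_j$ coordinates on the Brillouin torus $\T^2$. Then $H(q)$ becomes the $\C^2$-valued symbol $H(\theta;q) = d(\theta;q)\cdot\sigma$, where $\sigma=(\sigma_1,\sigma_2,\sigma_3)$ are the Pauli matrices and
\[
	d_1 = \Re f , \quad d_2 = -\Im f , \quad d_3 = q_3 , \quad f(\theta;q) = 1 + q_1 \, \e^{-\ii\theta_1} + q_2 \, \e^{-\ii\theta_2} .
\]
In this representation $\tracevol$ becomes the normalized integral $\int_{\T^2}$ of the matrix trace, the gradient $\nabla_j$ becomes $\der_{\theta_j}$, and the Fermi projection at $\EFermi = 0$ is $P(\theta;q) = \tfrac12\bigl(\id - \hat d\cdot\sigma\bigr)$ with $\hat d = d/\abs{d}$, well-defined exactly where the gap is open, i.e. $d\neq 0$.

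Inserting $P$ into \eqref{intro:eqn:KSV_formula} and using the two-band reduction carried out in Section~\ref{topology}, I would show that the $t$- and $\theta_j$-dependence organizes into the Berry curvature component, so that $\Delta\Pol_j(\eta)$ equals the first Chern number of $P$ over the two-torus spanned by $(t,\theta_j)$ --- equivalently the Brouwer degree of the Gauss map $\hat d : \T^2 \to \Stwo$, namely $\deg \hat d = \tfrac{1}{4\pi}\int_{\T^2} \hat d\cdot(\der_t\hat d\times\der_{\theta_j}\hat d)$. The integration over the transverse momentum $\theta_{j'}$ contributes only its period as an overall factor, which the normalization of $\tracevol$ cancels, because the $(t,\theta_j)$-Chern number is an integer depending continuously --- hence constantly --- on $\theta_{j'}$ as long as the gap stays open.

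It then remains to evaluate the degree along each generator. Along $\eta_1$ one has $q_2 = 0$, so $d$ is independent of $\theta_2$; hence $\der_{\theta_2}\hat d = 0$ and $\Delta\Pol_2(\eta_1) = 0$ at once. For $\Delta\Pol_1(\eta_1)$ I would compute the degree by counting, with orientation, the preimages of the north pole $\hat d = (0,0,1)$: substituting $q_1 = 1+\eps\cos t$, $q_3 = -\eps\sin t$ and solving $d_1 = d_2 = 0$ with $d_3 > 0$ yields the unique solution $(t,\theta_1) = (\tfrac{3\pi}{2},\pi)$, at which the Jacobian $\der(d_1,d_2)/\der(t,\theta_1)$ has sign $+1$; the same elementary check shows $d\neq 0$ everywhere on the loop, so that $\hat d$ is genuinely $\Stwo$-valued. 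Thus $\deg\hat d = 1$ and $\Delta\Pol(\eta_1) = (1,0)$. The loop $\eta_2$ is obtained from $\eta_1$ by the interchange $q_1\leftrightarrow q_2$, $\theta_1\leftrightarrow\theta_2$, under which the two polarization components swap, giving $\Delta\Pol(\eta_2) = (0,1)$. Finally, $\Delta\Pol(\eta) = n_1\Delta\Pol(\eta_1) + n_2\Delta\Pol(\eta_2) = (n_1,n_2)$ follows immediately from the group-morphism property of $\Delta\Pol_\ast$ established in Theorem~\ref{intro:thm:homotopy_invariance_polarization}.

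The main obstacle is the reduction step: justifying rigorously that the full three-parameter King-Smith--Vanderbilt integral collapses to an honest integer-valued Chern number over $(t,\theta_j)$, together with the bookkeeping of orientations and sign conventions that fixes the value to $+1$ rather than $-1$. Once this is in place, the degree count is routine, the only delicate point being the verification that the gap does not close along $\eta_j$ --- so that $\hat d$ is defined throughout --- which the explicit computation of $d$ supplies.
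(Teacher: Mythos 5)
Your proposal is correct and follows the paper's overall framework --- Bloch--Floquet reduction, two-band form $H = d\cdot\sigma$, identification of $\Delta\Pol_j(\eta)$ with the first Chern number of the Fermi projection over the $(t,k_j)$-torus (this is exactly the last column of the matrix $C([\eta])$ in Section~\ref{KSV:Bloch_bundle}) --- but it evaluates the Chern numbers by a genuinely different route. The paper realizes the Bloch bundle as the pullback $\Phi_\eta^*(\sphereb)$ of the Hopf bundle, locates the north and south pole varieties $\Nvar_{\eta_1},\Svar_{\eta_1}$, and computes $C_{1,3}([\eta_1])$ as the winding number $\frac{1}{2\pi}\oint \dd\varphi_{\eta_1}$ around $\Nvar_{\eta_1}$, which requires expanding $\partial_{k_1}\varphi_{\eta_1}$ and $\partial_t\varphi_{\eta_1}$ near the singular point and evaluating $\frac{\eps}{2\pi}\int_0^{2\pi}\frac{\dd s}{\cos^2 s+\eps^2\sin^2 s}=1$. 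You instead compute the Brouwer degree of $\hat d:\T^2\to\Stwo$ by counting signed preimages of the north pole; your preimage $(t,\theta_1)=(\tfrac{3\pi}{2},\pi)$ agrees with the paper's $\Nvar_{\eta_1}$ (at $k_1=-\pi$, $t=-\tfrac{\pi}{2}$), your gap check matches the paper's, and your argument that $\Delta\Pol_2(\eta_1)=0$ because $d$ is independent of $\theta_2$ is cleaner than the paper's inspection of $\Nvar_{\eta_1}\cap\T^2_{2,3}$. The preimage count is the more elementary and robust evaluation (no small-$\delta$ expansion, no explicit integral), at the cost of having to fix the orientation convention that makes the Jacobian sign come out as $+1$ rather than $-1$ --- which you rightly flag, and which in the paper is implicitly fixed by the choice of $c_1(\sphereb)$ and the counter-clockwise contour. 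The reduction step you identify as the main obstacle (collapsing the KSV integral to an integer Chern number over $(t,k_j)$) is handled in the paper at the same level of detail, namely by citing \cite{Schulz-Baldes_Teufel:random_polarization:2012} and the Bloch-bundle construction of Section~\ref{KSV:Bloch_bundle}, so nothing essential is missing from your argument relative to the paper's own standard of rigor.
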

For details of the calculations, we refer the interested reader to Section~\ref{nn_model}. 

\subsection{Stability under weak perturbations} 
\label{intro:randomness}
The uniaxial strain model above discussed is based on two simplifications: independence of electrons and absence of impurities. A more realistic model should include those aspects as well. Mathematically, we can include these effects by adding a potential $V$ to the uni\-axial strain hamiltonian \eqref{intro:eqn:uniaxial_strain_hamiltonian}, 
\begin{align}
	H_{\lambda}(q) := H(q) + \lambda \, V
	\label{intro:eqn:random_hamiltonian}
	. 
\end{align}
We assume $V \in \Alg$ is bounded ($\snorm{V}_{\Alg} = 1$ for simplicity); The perturbation can describe interactions between electrons in a mean-field approximation (periodic potential) as well as the effect of impurities (Anderson-type potential). The parameter $\lambda$ describes the strength of the perturbation. 
\begin{figure}[t]
	\hfil
	\resizebox{!}{70mm}{\input{./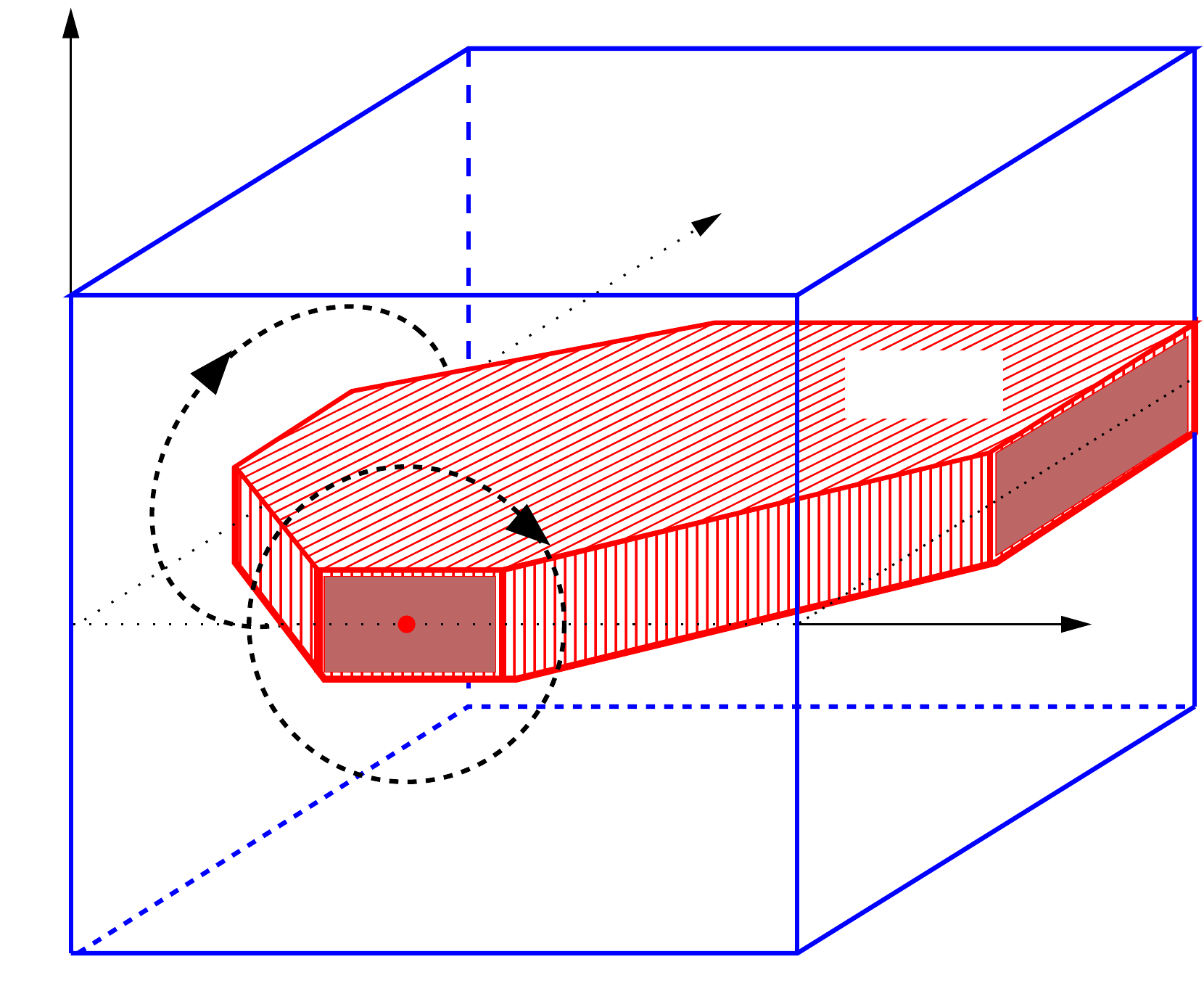_t}}
	\hfil
	\caption{Representation of the gapped parameter space for the model \eqref{intro:eqn:uniaxial_strain_hamiltonian} perturbed by a bounded potential $\lambda V$ in the regime of a weak perturbation $\lambda\ll1$. The topology of this space agrees with the topology of the unperturbed gapped parameter space.
	}
	\label{intro:figure:random_parameter_space}
\end{figure}
Standard perturbation theory says that if the distance between $\sigma \bigl ( H(q) \bigr )$ and $\EFermi$ is greater than $g > \lambda$, then $\EFermi\notin\sigma\big(H_\lambda(q)\big)$ \cite{Kato:perturbation_theory:1995}. If $Q_{\EFermi}$ denotes the gapped parameter space for the unperturbed hamiltonian $H(q)$ then the set
\begin{align}
	Q_{\EFermi,g} := \left\{ q \in Q_{\EFermi} \; \; \big | \; \; \text{dist} \bigl ( \sigma \bigl ( H(q) \bigr ) , \EFermi \bigr ) > g \right \} 
	\label{eq:intro2_def}
\end{align}
is certainly contained in the gapped parameter space of the perturbed hamiltonian $H_\lambda(q)$. In the \emph{weak perturbation} regime $\lambda\in[0,\lambda_*]$ with $\lambda_* < g$, the space $Q_{\EFermi}$ is a \emph{deformation retract} of the space $Q_{\EFermi,g}$ and so the two have same homotopic type \cite{Hatcher:algebraic_topology:2002}. Given a loop $\eta$ in $Q_{\EFermi,g}$ one can define two periodic time-dependent and gapped operators $H_\eta(t)$ and $H_{\lambda,\eta}(t)$ according to the prescription \eqref{intro:eqn:relation_model_hamiltonian_loops}. The Fermi projections associated to these two hamiltonians are homotopic in the algebra $\Alg$ (Proposition~\ref{prop:homotopy_disorder}). As a consequence of the homotopic invariance of the King-Smith--Vanderbilt formula \eqref{intro:eqn:KSV_formula} \cite[Corollary~2]{Schulz-Baldes_Teufel:random_polarization:2012} one deduces that $H_\eta(t)$ and $H_{\lambda,\eta}(t)$ produce the same polarization vector \cite[Corollary~3]{Schulz-Baldes_Teufel:random_polarization:2012}. This fact can be stated as follows:
\begin{theorem}\label{intro:main_result:thm:perturbation_effect}
	Piezoelectric effects persist under weak perturbations. 
\end{theorem}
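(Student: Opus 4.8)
The plan is to reduce the informal assertion to the homotopy-invariance of the King-Smith--Vanderbilt formula proven in \cite{Schulz-Baldes_Teufel:random_polarization:2012}. ``Persistence'' should be read as the statement that, for every admissible loop $\eta$, the polarization computed from the unperturbed family $H_\eta(t)$ agrees with the one computed from the perturbed family $H_{\lambda,\eta}(t)$ throughout the weak regime $\lambda \in [0,\lambda_*]$; in particular $\Delta \mathcal{P}(\eta) \neq 0$ is preserved. Thus it suffices to compare the two polarizations loop by loop, and to verify that the perturbed families are genuinely gapped so that \eqref{intro:eqn:KSV_formula} even makes sense for them.

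First I would secure a uniform spectral gap. Fixing a continuously differentiable representative $\eta$, its image $\eta(\Sone)$ is compact, and $q \mapsto \mathrm{dist}\bigl(\sigma(H(q)),\EFermi\bigr)$ is continuous and strictly positive on $Q_{\EFermi}$; hence there is $g>0$ with $\eta(\Sone) \subset Q_{\EFermi,g}$, the set defined in \eqref{eq:intro2_def}. Choosing $\lambda_* < g$ and invoking Kato's perturbation estimate \cite{Kato:perturbation_theory:1995}, the operator $H_\lambda\bigl(\eta(\cdot)\bigr)$ stays gapped at $\EFermi$ for every $\lambda \le \lambda_*$, so the perturbed Fermi projection $P_{\lambda,\eta}(t) = 1_{(-\infty,\EFermi)}\bigl(H_{\lambda,\eta}(t)\bigr)$ is well-defined. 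That no loop classes of $Q_{\EFermi}$ are lost in passing to $Q_{\EFermi,g}$ is guaranteed by the homotopy equivalence of the two spaces recorded before the theorem.

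The analytic heart is an interpolation in the perturbation strength. Setting $H_{s\lambda,\eta}(t) := H_\eta(t) + s\lambda V$ for $s \in [0,1]$, one has $s\lambda \le \lambda_* < g$, so by the same Kato bound the gap at $\EFermi$ stays open uniformly in $(s,t) \in [0,1]\times\Sone$. Writing each spectral projection through the Riesz formula $P_{s\lambda,\eta}(t) = \tfrac{\ii}{2\pi}\oint (H_{s\lambda,\eta}(t) - z)^{-1}\,\dd z$ along a fixed contour encircling the spectrum below $\EFermi$, the uniform gap makes $(s,t)\mapsto P_{s\lambda,\eta}(t)$ norm-continuous in $\Alg$. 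This is exactly the homotopy of gapped Fermi projections asserted in Proposition~\ref{prop:homotopy_disorder}, interpolating between $P_\eta$ at $s=0$ and $P_{\lambda,\eta}$ at $s=1$.

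Finally I would invoke the homotopy-invariance of \eqref{intro:eqn:KSV_formula} from \cite[Corollary~2]{Schulz-Baldes_Teufel:random_polarization:2012}: since $P_\eta$ and $P_{\lambda,\eta}$ belong to the same homotopy class of gapped projections in $\Alg$, \cite[Corollary~3]{Schulz-Baldes_Teufel:random_polarization:2012} yields that they produce the identical polarization vector, that is, the value of \eqref{intro:eqn:KSV_formula} is unchanged. Non-trivial polarizations therefore remain non-trivial, which is the claimed persistence. I expect the only genuine obstacle to be the uniform-gap step: one must guarantee that the gap does not close anywhere on the two-parameter family indexed by $(s,t)$, and this hinges on combining compactness of $\eta(\Sone)$ with the strict inequality $\lambda_* < g$. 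Once that is in place, norm-continuity of the resolvent and the cited invariance results close the argument.
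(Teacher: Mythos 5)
Your proposal is correct and follows essentially the same route as the paper: establish a uniform gap along the loop, interpolate in the perturbation strength to obtain a homotopy of Fermi projections (this is exactly Proposition~\ref{prop:homotopy_disorder}), and then invoke the homotopy-invariance of the King-Smith--Vanderbilt formula from \cite[Corollaries~2 and 3]{Schulz-Baldes_Teufel:random_polarization:2012}. The only (welcome) addition is your explicit compactness argument producing the uniform gap $g$ for an arbitrary loop in $Q_{\EFermi}$, where the paper instead starts from loops in $Q_{\EFermi,g}$ and appeals to the deformation retract.
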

This result applies directly to the case of the strained graphene (see Figure~\ref{intro:figure:random_parameter_space}).

\subsection{Organization of the paper} 
First, in Section~\ref{reformulation}, we will reformulate the problem in an algebraic language. Among other things, this allows us to include effects of weak disorder just as in \cite{Schulz-Baldes_Teufel:random_polarization:2012}. 
Next, in Section~\ref{KSV} we sketch the derivation of the King-Smith--Vanderbilt formula \eqref{intro:eqn:KSV_formula} and discuss its topological nature. In particular, we connect \eqref{intro:eqn:KSV_formula} to the topology generated by the gap at $\EFermi$ in parameter space. 
Then, two-band systems are discussed in Section~\ref{topology}; we show how to exploit the fact that the Bloch bundle can be written as the pull back of a reference bundle over the $2m$-sphere (the Hopf bundle). 
Finally, we compute the polarization for the uniaxial strain model in Section~\ref{nn_model}. 

\paragraph{Acknowledgements.} 
\label{intro:acknowledgements}
G.~D{.} gratefully acknowledges support by the Alexander von Humboldt Foundation and by the grant ANR-08-BLAN-0261-01. M.~L{.} is supported by Deutscher Akademischer Austauschdienst. G.~D{.} and M.~L{.} would like to thank the Hausdorff Research Institute for Mathematics in Bonn for the invitation at the trimester program \emph{“Mathematical challenges of materials science and condensed matter physics: From quantum mechanics through statistical mechanics to nonlinear pde”} where a large part of this work was done. The authors are indebted to H.~Schulz-Baldes, S.~Teufel, A.~Giuliani and M. Porta for many interesting discussions. G.~D{.} would like to thank M.~Kohmoto and A.~Trombettoni for suggesting important references on the subject. M.~L.~thanks K.~Dayal for useful references on graphene.

\section{Algebras of observables} 
\label{reformulation}
This section serves to introduce the main features of the models in which we are interested. The use of a $C^*$-algebraic approach allows us to formulate our results both, for periodic and random models simultaneously. 

In the following we will deal only with \emph{lattice models}, \ie with systems with an underlying geometry described by $d$-dimensional lattices 
\begin{align*}
	\Gamma := \Bigl \{ \gamma\in\R^d \; \big \vert \; \gamma=\mbox{$\sum_{j = 1}^d$} n_j \, \gamma_j \;, \;\;\; \; n = (n_1 , \ldots , n_d) \in \Z^d \Bigr \} 
\end{align*}
generated by $d$ linearly independent basis vectors $\{ \gamma_1 , \ldots , \gamma_d \}$ and set $\sabs{\gamma} := \sum_{j = 1}^d \sabs{n_j(\gamma)}$. We will consistently use the notation $n(\gamma) := \bigl ( n_1(\gamma) , \ldots , n_d(\gamma) \bigr ) \in \Z^d $ for the vector of coefficients which express $\gamma \in \Gamma$ in terms of the basis $\{ \gamma_1 , \ldots , \gamma_d \}$.

In order to include internal discrete degrees of freedom like spin and isospin we will usually consider tensorized objects of the form
\begin{equation}\label{sect1:tensor}
	\mathfrak{A} := \mathfrak{B}\;\otimes\;{\rm Mat}_r(\C)
\end{equation}
where $\mathfrak{B}$ is any complex normed (or locally convex) algebra and ${\rm Mat}_r(\C)$ is the algebra of the $r\times r$ matrices with complex entries. The fact that ${\rm Mat}_r(\C)$ is a finite dimensional algebra (hence \emph{nuclear}) implies that the \emph{topological} tensor product \eqref{sect1:tensor} is uniquely defined without ambiguities \cite{Treves:topological_vector_spaces:1967}. Moreover the following identification
\begin{equation*}
	\mathfrak{B}\;\otimes\;{\rm Mat}_r(\C)\;\simeq\;{\rm Mat}_r(\mathfrak{B})
\end{equation*}
will be tacitly used when convenient.

\subsection{The algebra of periodic operators} 
\label{reformulation:shift-algebra}
There is a canonical way to have a unitary representation of the lattice $\Gamma$ in terms of \emph{shift operators} on $\ell^2(\Gamma)$: to each generating vector $\gamma_j$, $j = 1 , \ldots , d$, we define 
\begin{align*}
	\shift_j : \ell^2(\Gamma) \longrightarrow \ell^2(\Gamma) 
	, 
	\quad\quad 
	\bigl ( \shift_j \psi \bigr )(\gamma) := \psi(\gamma - \gamma_j) 
	. 
\end{align*}
The shifts commute amongst each other and we can use multi-index notation to define the group action 
\begin{align}
	\shift : \Gamma \longrightarrow \mathcal{B} \bigl ( \ell^2(\Gamma) \bigr ) 
	, 
	\quad \quad
	\gamma \mapsto \shift_{\gamma} := \prod_{j = 1}^d \shift_j^{n_j(\gamma)} 
	. 
\end{align}
Starting from the algebra of finite linear combinations of shifts 
\begin{align*}
	\mathfrak{S}_{\mathrm{fin}} := \Bigl \{ \mathfrak{a} \in \mathcal{B} \bigl ( \ell^2(\Gamma) \bigr ) \; \big \vert \; \exists N \in \N_0\;\; :\; \; \mathfrak{a} = \mbox{$\sum_{\sabs{\gamma} \leqslant N}$} a_{\gamma} \, \shift_{\gamma} \Bigr \} 
	, 
\end{align*}
we define the \emph{shift algebra} $\mathfrak{S}$ as the completion of $\mathfrak{S}_{\mathrm{fin}}$ with respect to the operator norm on $\ell^2(\Gamma)$. Including also the internal degrees of freedom one defines on the spinorial Hilbert space
\begin{align*}
	\mathcal{H} := \ell^2(\Gamma) \otimes \C^r
\end{align*}
the tensorized \emph{algebra of periodic operators}
\begin{align}
	\mathfrak{A}_{\mathrm{per}} := \mathfrak{S} \otimes {\rm Mat}_r(\C)
	\simeq {\rm Mat}_r(\mathfrak{S})
	. 
	\label{periodic_algebra}
\end{align}
To avoid confusion between elements of the `abstract' Brillouin algebra $\Alg_{\mathrm{per}}$ (to be defined in Section~\ref{reformulation:Gelfand}) and its representation $\mathfrak{A}_{\mathrm{per}}$ we denote elements of $\mathfrak{A}_{\mathrm{per}}$ with a hat. The $C^*$-algebra $\mathfrak{A}_{\mathrm{per}}$ admits a \emph{differential structure} and an \emph{integration}. The \emph{position observable} $\hat{x} := (\hat{x}_1,\ldots,\hat{x}_d)$ is the vector-valued (unbounded) operator defined component-wise on $\ell^2(\Gamma)$ by
\begin{align}
	\bigl ( \hat{x}_j \psi \bigr )(\gamma) = n_j(\gamma) \, \psi(\gamma)
	. 
 \end{align}
The position operators let us define \emph{(spatial) derivations} on $\mathfrak{A}_{\mathrm{per}}$ by
\begin{align}
	\nabla_j \hat{A} := \ii \bigl [ \hat{A} \; , \, \hat{x}_j \otimes \id_r \bigr ]
	,
	\quad\quad 
	j = 1 , \ldots , d 
	.
\end{align}
Clearly, the $\nabla_j$'s are unbounded operators as can be readily seen from 
\begin{align}
	\nabla_j( \shift_{\gamma} \otimes M ) = - \ii \, n_j(\gamma) \, \shift_{\gamma} \otimes M
	\label{reformulation:eqn:derivation_shift} 
\end{align}
for every $M \in {\rm Mat}_r(\C)$. Nevertheless, the $\nabla_j$'s are initially defined on the dense subalgebra $\mathfrak{S}_{\mathrm{fin}}\otimes {\rm Mat}_r(\C)$ and then extended to their natural domain. As can be checked by explicit computation, these derivations are symmetric, $\bigl ( \nabla_j \hat{A} \bigr )^* = \nabla_j (\hat{A}^*)$, and satisfy the Leibnitz rule $\nabla_j (\hat{A} \, \hat{B}) = \nabla_j (\hat{A}) \, \hat{B} + \hat{A} \, \nabla_j(\hat{B})$. It is also easy to check that $\nabla_j$ and $\nabla_k$ commute. The \emph{gradient} operator $\nabla:=( \nabla_1 , \ldots , \nabla_d )$ is closable on a common natural domain. Let us define the norms
\begin{align}
	\snorm{\hat{A}}_p := \sum_{\abs{\alpha} \leqslant p} \bnorm{\nabla^\alpha \hat{A}}_{\mathcal{B}(\mathcal{H})} 
	\label{reformulation:eqn:k_norm_algebra}
\end{align}
where for any $\alpha=(\alpha_1\ldots \alpha_d)\in\N_0^d$ we used a multi index notation $\nabla^{\alpha} := \nabla_1^{\alpha_1} \cdots \nabla_d^{\alpha_d}$ with $|\alpha| = |\alpha_1| + \ldots + |\alpha_d|$. The completion of the dense subalgebra $\mathfrak{S}_{\mathrm{fin}} \otimes {\rm Mat}_r(\C)$ with respect to $\norm{\cdot}_p$ leads to the Banach-$\ast$ algebra of \emph{$p$-times differentiable operators} $\Cont^p(\mathfrak{A}_{\mathrm{per}})\subset\mathfrak{A}_{\mathrm{per}}$. From the definition it follows that $\Cont^{p+1}(\mathfrak{A}_{\mathrm{per}})\subset \Cont^p(\mathfrak{A}_{\mathrm{per}})$ for all $p \in \N_0$ with the convention that $\Cont^0(\mathfrak{A}_{\mathrm{per}})\equiv\mathfrak{A}_{\mathrm{per}}$. The algebra $\Cont^1(\mathfrak{A}_{\mathrm{per}})$ is the natural domain for $\nabla$ while the {Fréchet-$\ast$ algebra} $\Cont^{\infty}(\mathfrak{A}_{\mathrm{per}}) := \bigcap_{p \in \N} \Cont^p(\mathfrak{A}_{\mathrm{per}})$ (endowed with the inductive limit topology) is an \emph{invariant} domain of $\nabla$.
\begin{remark}\label{reformulation:remark:stability_algebras_functional_calculus}
	All the algebras $\Cont^p(\mathfrak{A}_{\mathrm{per}})$ are stable under holomorphic and continuous functional calculus. Moreover, one can prove that if $\hat{H} = \hat{H}^{\ast} \in \Cont^p(\mathfrak{A}_{\mathrm{per}})$ and $f \in \Cont^{p+1}(\R)$ hold, then $f(\hat{H}) \in \mathfrak{A}_{\mathrm{per}}$, defined through continuous functional calculus, is in fact an element of $\Cont^p(\mathfrak{A}_{\mathrm{per}})$ \cite[Lemma~3.2]{Bratteli_Elliott_Jorgensen:decomposition_derivations:1984}. This does not only hold for $\Alg_{\mathrm{per}}$, but also all non-commutative algebras which we will work with in this paper. 
\end{remark}
The second relevant structure, the integration, is defined on $\mathfrak{A}_{\mathrm{per}}$ by the so-called \emph{trace-per-unit-volume} $\tracevol$. Let $\delta_{\gamma}\otimes e_j$ be the \emph{canonical basis} of $\mathcal{H}$ where $\{e_1,\ldots,e_r\}$ denotes the canonical basis of $\C^r$ and $\delta_{\gamma}:= (\delta_{\gamma,\gamma'})_{\gamma' \in \Gamma}$ is the $\ell^2(\Gamma)$ normalized sequence with only one non-zero entry at the label $\gamma$. Then the trace-per-unit-volume of $\hat{A} \in \mathfrak{A}_{\mathrm{per}}$ is given by
\begin{align}
	\tracevol(\hat{A})\; :=\; \sum_{j=1}^r
	\bscpro{\delta_0 \otimes e_j}{\hat{A} \; \delta_0 \otimes e_j}_{\mathcal{H}}.
	\label{reformulation:eqn:trace_per_unit_volume:1}
\end{align}
This map is a $\ast$-linear functional $\tracevol : \mathfrak{A}_{\mathrm{per}} \longrightarrow \C$ with the trace property \cite[Lemma~1]{Schulz-Baldes_Teufel:random_polarization:2012}. The name trace-per-unit-volume is justified by the following observation: Let $\{ \Gamma_n \}_{n \in \N}$ be any \emph{Følner sequence} \cite{Folner:1955} of bounded subsets of the lattice $\Gamma$ such that $\Gamma_n \subset \Gamma_{n+1}$ and $\Gamma_n \nearrow \Gamma$ and denote with $\sabs{\Gamma_n}$ the cardinality of the finite set $\Gamma_n$. If one introduces the orthogonal projections
\begin{align*}
	\hat{\chi}_{\gamma} := \sopro{\delta_{\gamma}}{\delta_{\gamma}} \otimes \id_r 
	,
	\qquad\quad
	\hat{\chi}_{\Gamma_n} := \bigoplus_{\gamma \in \Gamma_n} \hat{\chi}_{\gamma}
	,
\end{align*}
and if one observes that $\hat{\chi}_{\gamma} = (\shift_{\gamma} \otimes \id) \, \hat{\chi}_{0} \, (\shift_{\gamma} \otimes \id_r)^*$, one can verify from \eqref{reformulation:eqn:trace_per_unit_volume:1} that
\begin{align*}
	\tracevol(\hat{A}) = \trace_{\mathcal{H}} \bigl ( \hat{\chi}_{0} \, \hat{A} \, \hat{\chi}_{0} \bigr )
	= \frac{1}{\sabs{\Gamma_n}} \trace_{\mathcal{H}} \bigl ( \hat{\chi}_{\Gamma_n} \, \hat{A} \, \hat{\chi}_{\Gamma_n} \bigr )
	= \lim_{n \rightarrow \infty}\; \frac{1}{\sabs{\Gamma_n}} \trace_{\mathcal{H}} \bigl ( \hat{\chi}_{\Gamma_n} \, \hat{A} \, \hat{\chi}_{\Gamma_n} \bigr ) 
	.
\end{align*}
The calculation uses in a crucial way the cyclicity of the trace and the translation invariance of $\hat{A} \in \mathfrak{A}_{\mathrm{per}}$, namely $\bigl [ \shift_{\gamma} \otimes \id_r \, , \, \hat{A} \bigr ] = 0$ for all $\gamma \in \Gamma$. The last term in the above equality provides the usual definition of the trace-per-unit-volume (see \cite{Veselic:integrate_density_states:2008} for a general review) and the equality does not depend on the particular choice of F{\o}lner sequence.

\subsection{The Brillouin algebra for periodic lattice systems} 
\label{reformulation:Gelfand}
According to \eqref{periodic_algebra} the non-commutative part of the algebra $\mathfrak{A}_{\mathrm{per}}$ comes entirely from the matricial factor ${\rm Mat}_r(\C)$ since the algebra $\mathfrak{S}$ is commutative. This last observation allows us to use the Gelfand-Naimark theorem \cite{Hoermander:complex_analysis:1990}: it establishes a $C^*$-algebra isomorphism between $\mathfrak{S}$ and the $C^*$-algebra of continuous functions $\Cont\big(\text{Spec}(\mathfrak{S})\big)$ where the \emph{algebraic spectrum} $\text{Spec}(\mathfrak{S})$ is a compact topological Hausdorff space. 
Since the $C^\ast$-algebra $\mathfrak{S}$ is generated by the $\shift_j$, we can use \cite[Proposition~5.5]{DeNittis_Panati:topological_Bloch_Floquet:2012} to write 
\begin{align*}
	\mathrm{Spec}(\mathfrak{S}) 
	= \prod_{j = 1}^d \sigma(\shift_j)
	= \Sone \times \cdots \times \Sone = \T^d 
	. 
\end{align*}
The \emph{Gelfand isomorphism} $i_G : \mathfrak{S} \longrightarrow \Cont(\T^d)$ is uniquely defined by its action on the generators $\shift_j \mapsto \e^{- \ii k_j}$, and it maps $\sum_{\gamma \in \Gamma} c_{\gamma} \, \shift_{\gamma}$ onto the corresponding trigonometric polynomial. 

Reversing the direction of the isomorphism yields a representation of the $C^\ast$-algebra $\Cont(\T^d)$ onto the $C^\ast$-algebra of operators $\mathfrak{S}$ on the Hilbert space $\ell^2(\Gamma)$. This representation extends after tensoring with the matricial part ${\rm Mat}_r(\C)$. More precisely, let us define the \emph{periodic Brillouin algebra}
\begin{align}\label{abtr_period_alg}
	\Alg_{\text{per}} := \Cont(\T^d) \otimes {\rm Mat}_r(\C).
\end{align}
Using the identifications $\Alg_{\text{per}}\simeq {\rm Mat}_r\big(\Cont(\T^d)\big)$ and $\mathfrak{A}_{\mathrm{per}}\simeq {\rm Mat}_r\big(\mathfrak{S}\big)$, and the Gelfand isomorphism $\imath_G^{-1}$ for each component, we define a faithful representation of the periodic Brillouin algebra on the algebra of periodic operators, 
\begin{align}
	\pi_{\text{per}} : \Alg_{\text{per}} \longrightarrow \mathfrak{A}_{\mathrm{per}} \subset \mathcal{B}(\mathcal{H})
	. 
	\label{rep_period_alg}
\end{align}
As we shall explain in the next subsection, this point of view extends in a natural way to the case of random operators. The representation $\pi_{\text{per}}$ can be concretely realized through the \emph{Fourier transform} 
\begin{align*}
 (\Fourier \psi)(k) := \sum_{\gamma \in \Gamma} \bigl ( \e^{- \ii k \cdot x} \psi \bigr )(\gamma) 
	= \sum_{\gamma \in \Gamma} \e^{- \ii k \cdot n(\gamma)} \, \psi(\gamma) 
\end{align*}
which is a unitary map $\Fourier : \ell^2(\Gamma) \rightarrow L^2(\T^d)$ between Hilbert spaces. A simple computation shows that $\Fourier\,\shift_j\,\Fourier^\ast=\e^{- \ii k_j}$ where the right-hand side must be interpreted as a multiplication operator on $L^2(\T^d)$. This means that the Gelfand isomorphism $\imath_G^{-1}$ is unitarily implemented by $\imath_G^{-1}(f) \equiv \Fourier^\ast\, f \, \Fourier \in \mathfrak{S}$ for each continuous function $f$ in $\Cont(\T^d)$. Tensorizing the Fourier transform by the identity matrix $\id_r$ one obtains also a unitary description of the representation \eqref{rep_period_alg}. More precisely for each continuous matrix-valued function $A \in \Alg_{\text{per}}$ one verifies that
\begin{align*}
	\pi_{\text{per}}(A) \equiv (\Fourier\otimes\id_r)^\ast \, A \; (\Fourier\otimes\id_r) =: \hat{A} \in \mathfrak{A}_{\mathrm{per}}.
\end{align*}
The fact that the algebra of operators $\mathfrak{A}_{\mathrm{per}}$ is just a faithful representation of the algebra $\Alg_{\text{per}}$ allows us to to investigate spectral and dynamical aspects directly in the algebra $\Alg_{\text{per}}$. The first advantage concerns the calculation of the spectrum. The above relation also means that $A \in \mathcal{A}_{\mathrm{per}}$ and $\hat{A} = \pi_{\mathrm{per}}(A) \in \mathfrak{A}_{\mathrm{per}}$ are unitarily equivalent, and thus $\sigma(A) = \sigma(\hat{A})$. Now the spectrum of $A$ is easily accessible since it acts as a matrix-valued multiplication operator on the \emph{fibered} Hilbert space $L^2(\T^d) \otimes \C^r$. Of particular interest is the case of a selfadjoint $\hat{H} = \hat{H}^{\ast} \in \mathfrak{A}_{\text{per}}$. In this case for each $k \in \T^d$ the operator $H(k)$ is a symmetric $r\times r$ matrix with eigenvalues $E_1(k) \leqslant \ldots \leqslant E_r(k)$. The functions $k \mapsto E_j(k)$ are called energy bands. As a standard result \cite{Reed_Simon:M_cap_Phi_4:1978} we have a complete characterization of the spectrum:
\begin{lemma}\label{reformulation:lem:spectrum_elements_Alg}
	The spectrum of any selfadjoint $\hat{H} = \hat{H}^* \in \mathfrak{A}_{\mathrm{per}}$ has empty singularly continuous components and consists of closed intervals, 
	\begin{align*}
		\sigma(\hat{H}) = I_1\; \cup \ldots \cup I_r	
	\end{align*}
	where $I_j:= \bigcup_{k \in \T^d} \bigl \{ E_j(k) \bigr \} = [ E_j^{\rm min} , E_j^{\rm max}]$.	
\end{lemma}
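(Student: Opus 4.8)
The plan is to reduce everything to the fibered picture already set up above and then run the standard Bloch--Floquet analysis. Since the Fourier transform $\Fourier \otimes \id_r$ implements a unitary equivalence between $\hat{H} = \hat{H}^\ast \in \mathfrak{A}_{\mathrm{per}}$ and the matrix-valued multiplication operator $A$ acting fiberwise by $(A\psi)(k) = H(k)\,\psi(k)$ on $L^2(\T^d) \otimes \C^r$, one has $\sigma(\hat{H}) = \sigma(A)$ and the two operators share the same spectral type. Thus it suffices to analyze $A$ as a decomposable operator $A = \int^\oplus_{\T^d} H(k)\,\dd k$.

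First I would record the regularity of the energy bands. For each $k$ the fiber $H(k)$ is a Hermitian $r \times r$ matrix, so its eigenvalues $E_1(k) \leqslant \ldots \leqslant E_r(k)$ are real; since the entries of $H(\cdot)$ are continuous on $\T^d$, the min--max principle shows that each $k \mapsto E_j(k)$ is continuous. As $\T^d$ is compact and connected, each image $I_j := E_j(\T^d)$ is a compact interval $[E_j^{\min},E_j^{\max}]$, which already yields the interval structure asserted in the statement.

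Next I would identify the spectrum. For a decomposable operator the standard criterion reads: $\lambda \in \sigma(A)$ if and only if for every $\eps > 0$ the set $\{ k \in \T^d : \sigma(H(k)) \cap (\lambda - \eps, \lambda + \eps) \neq \emptyset \}$ has positive Lebesgue measure. Because the $E_j$ are continuous and Lebesgue measure has full support on $\T^d$, this condition is equivalent to $\lambda$ lying in the range of some band, so that $\sigma(A) = \bigcup_{j=1}^r I_j$; here continuity is exactly what collapses the a~priori essential range onto the honest range.

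The delicate point, and the step I expect to be the main obstacle, is the absence of singularly continuous spectrum. After a measurable fiberwise diagonalization, the maximal spectral type of $A$ is governed by the push-forward measures $(E_j)_\ast(\dd k)$ of Lebesgue measure under the band maps, and mere continuity of $E_j$ is not enough to exclude a singular continuous part. What carries the argument is the extra rigidity of the Hamiltonians at hand: for finite-range hopping the entries of $H(k)$ are trigonometric polynomials, hence real-analytic, so the $E_j$ are continuous and piecewise real-analytic. Each band is then either constant --- a flat band, contributing an eigenvalue of infinite multiplicity and hence pure point spectrum --- or non-constant real-analytic, in which case its critical set is a proper analytic subvariety of measure zero and the coarea formula off this set shows $(E_j)_\ast(\dd k)$ to be absolutely continuous. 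In either case no singular continuous component arises, and combining the bands gives the claim; this is precisely the Bloch--Floquet decomposition recorded in \cite{Reed_Simon:M_cap_Phi_4:1978}. I would flag that it is this analytic input, rather than the bare $C^\ast$-structure of $\mathfrak{A}_{\mathrm{per}} \simeq \Cont(\T^d) \otimes \mathrm{Mat}_r(\C)$, that underwrites the no-singular-continuous-spectrum assertion.
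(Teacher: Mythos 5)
The paper offers no proof of this lemma at all---it is introduced as ``a standard result'' with a bare citation to Reed--Simon---so your reconstruction is precisely the argument being invoked: conjugate by $\Fourier \otimes \id_r$ to a decomposable multiplication operator, get continuity of the ordered eigenvalues from the min--max principle, use compactness and connectedness of $\T^d$ for the interval structure, and use the essential-range criterion (collapsed to the honest range by continuity) to identify $\sigma(\hat{H}) = \bigcup_j I_j$. That part of your proof is correct and, importantly, holds for \emph{every} selfadjoint element of $\mathfrak{A}_{\mathrm{per}}$.

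The caveat you flag at the end is, however, a genuine issue and should not be absorbed silently into the phrase ``the Hamiltonians at hand.'' The analyticity you invoke is available only on the dense subalgebra $\mathfrak{S}_{\mathrm{fin}} \otimes \mathrm{Mat}_r(\C)$ of finite hopping range; a general element of the norm-completion $\mathfrak{A}_{\mathrm{per}}$ has merely continuous fibers $H(k)$, and then the no-singular-continuous claim can actually fail. Already for $r = 1$ and $d = 1$, multiplication by a continuous $f$ on $L^2(\T^1)$ has a singular continuous component whenever $f$ is chosen as a homeomorphism carrying a fat Cantor set of positive measure onto a null Cantor set, since the push-forward $f_{\ast}(\dd k)$ is then nonatomic and singular. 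So your argument proves the lemma for trigonometric-polynomial (finite-range) Hamiltonians---which covers every operator actually used in this paper---but not for the algebra as literally quantified in the statement; the defect lies in the lemma's hypotheses rather than in your reasoning, and the correct fix is to restrict the singular-continuous assertion to fibers that are real-analytic (equivalently, to $\mathfrak{S}_{\mathrm{fin}} \otimes \mathrm{Mat}_r(\C)$ or its $C^{\omega}$ closure in the appropriate sense). One further small refinement you should add even in the analytic case: the \emph{ordered} eigenvalues $E_j$ are not globally piecewise analytic in $d \geqslant 2$ when bands cross; one argues instead via the discriminant of the characteristic polynomial, which is real-analytic, so that the crossing set is either a null set (off which the $E_j$ are locally analytic) or one regroups permanently degenerate bands before applying the coarea argument.
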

Also the differential structure and the integration defined on the operators algebra $\mathfrak{A}_{\mathrm{per}}$ have a counterpart on the level of the algebra $\Alg_{\text{per}}$. First of all, since any $A$ in $\Alg_{\mathrm{per}}$ can be seen as a map from the manifold $\T^d$ to the normed algebra ${\rm Mat}_r(\C)$ the $d$ partial derivatives $\partial_{k_j} A$ are well defined (assuming $A$ is sufficiently regular). A formal computation on linear combinations of generators and elementary tensor products leads to the relation 
\begin{align}
	\pi_{\text{per}}(\partial_{k_j} A) = \ii \bigl [ \pi_{\text{per}}(A) \; , \, \hat{x}_j \otimes \id_r \bigr ] = 
	\nabla_j \bigl ( \pi_{\text{per}}(A) \bigr )
	.
	\label{reformulation:eqn:i_F_intertwines_der_partial}
\end{align}
With the identification of the notation $\partial_{k_j}\equiv\nabla_j$ we can rewrite the equation \eqref{reformulation:eqn:i_F_intertwines_der_partial} as 
\begin{align}
	\pi_{\text{per}}\circ \nabla_j = \nabla_j\circ \pi_{\text{per}}
	\label{commut_gradient}
\end{align}
which means that the representation $\pi_{\text{per}}$ intertwines with the gradient $\nabla$. The above relation is well defined on a maximal domain. Let us introduce the regular subalgebras of $\Alg_{\mathrm{per}}$:
\begin{align*}
	\Cont^p(\Alg_{\mathrm{per}}) = \Cont^p(\T^d) \otimes {\rm Mat}_r(\C)\;\simeq\; {\rm Mat}_r\big(\Cont^p(\T^d)\big)
	,\qquad 
	p=0,1,\ldots,\infty 
	.
\end{align*}
Then it is straightforward to check that $\pi_{\text{per}}$ defines faithful $\ast$-algebra map
$$
\pi_{\text{per}} : \Cont^p(\Alg_{\mathrm{per}}) \; \longrightarrow \; \Cont^p(\mathfrak{A}_{\mathrm{per}}).
$$
This together with equation \eqref{commut_gradient} establishes that $\Cont^1(\Alg_{\mathrm{per}})\subset \Alg_{\mathrm{per}}$ is the natural domain for the gradient $\nabla$ and $\Cont^{\infty}(\Alg_{\mathrm{per}}) \subset \Alg_{\mathrm{per}}$ is an invariant domain.

On $\Alg_{\text{per}} \simeq {\rm Mat}_r \bigl ( \Cont(\T^d) \bigr )$ the integration $\tracevol$ involves a \emph{bona fide} integral and the trace, 
\begin{align}
	\tracevol(A) := \int_{\T^d} \frac{\dd k}{(2\pi)^d} \; \text{Tr}_{\C^r}\bigl ( A(k) \bigr )
	. 
	\label{eqn:trace_Cont_Td}
\end{align}
Here $\dd k$ is normalized such that $\int_{\T^d} \dd k = (2\pi)^d$. Also in this case one can verify (first on a dense subalgebra) the intertwining relation
\begin{align}
	\tracevol = \tracevol \circ \pi_{\text{per}}
	\label{commut_gradient}
\end{align}
between \eqref{eqn:trace_Cont_Td} and the trace-per-unit-volume \eqref{reformulation:eqn:trace_per_unit_volume:1}. In particular all properties listed in \cite[Lemma~1]{Schulz-Baldes_Teufel:random_polarization:2012} hold true also for \eqref{eqn:trace_Cont_Td}.
\begin{remark}\label{remark:notation2}
We stress that the use of the symbols $\nabla_j$ and $\tracevol$ both for the algebra $\Alg_{\text{per}}$ and its realization $\mathfrak{A}_{\mathrm{per}}$ has the great advantage of allowing us to write formulas like \eqref{intro:eqn:KSV_formula} independently of the specific realization in a given algebra. Moreover, the risk of confusion caused by this abuse of notation is minimal and, when necessary, the reference to the algebra will be mentioned explicitly.
\end{remark}
%


\subsection{Covariant families of random operators} 
\label{reformulation:covariant}
A random system on the lattice $\Gamma$ is described by hamiltonians of the form
$$
\hat{H}_{\omega} := \hat{H} + \lambda \hat{V}_{\omega}
$$
where $\hat{H}$ is a periodic operator, \ie an element of $\mathfrak{A}_{\mathrm{per}}$, $\hat{V}_{\omega}$ is a bounded operator on $\mathcal{H}$ which depends on a random parameter $\omega$ and a coupling constant $\lambda>0$. A typical example is the \emph{Anderson potential}
$$
\hat{V}_{\omega} := \sum_{\gamma\in\Gamma} \omega_{\gamma} \, \sopro{\delta_{\gamma}}{\delta_{\gamma}} \otimes \id_r 
$$
with $\omega_\gamma\in[0,1]$. The collection $\omega:=(\omega_\gamma)_{\gamma\in\Gamma}$ defines a \emph{configuration} of the disorder. The configurations take values on the space $\Omega:=[0,1]^{\Z^d}$ which turns out to be compact if endowed with the \emph{Tychonoff topology}. If all the one site configurations $\omega_\gamma$ are distributed on the interval $[0,1]$ according to the same probability measure $\dd\mu$, one can endow also $\Omega$ with the product probability measure $\dd\mathbb{P}:= \times_{\gamma \in \Gamma} \, \dd\mu$. The group $\Gamma$ acts on the topological space $\Omega$ by translations via $(\tau_{\gamma} \omega)_{\gamma'} := \omega_{\gamma' - \gamma}$. The measure $\dd\mathbb{P}$ is invariant and ergodic with respect to the group action $\tau$. Moreover, using the invariance of $H$ with respect to the translations $\shift_{\gamma} \otimes \id_r$, one can verify the \emph{covariance property}
\begin{align}
	(\shift_{\gamma} \otimes \id_r) \; \hat{H}_{\omega} \; (\shift_{\gamma} \otimes \id_r)^{\ast} &= \hat{H}_{\tau_{\gamma} \omega}
	,
	\qquad\quad
	\forall\;\gamma\in\Gamma
	.
	\label{covariance_relations}
\end{align}
The main features of the Anderson model can be used in order to provide a general definition of \emph{random lattice systems}. Following the above example, the randomness can be described by a triple $(\Omega,\dd\mathbb{P},\tau)$ where $\Omega$ is a compact Hausdorff space, $\dd\mathbb{P}$ is a borelian probability measure and $\tau$ is an action of $\Gamma$ on $\Omega$ by homomorphisms. The measure $\dd\mathbb{P}$ is required to be invariant end ergodic with respect to $\tau$. Associated with this structure we can consider family of bounded operators $(\hat{A}_{\omega})_{\omega\in\Omega}\subset\mathcal{B}(\mathcal{H})$ such that: (i) the map $\omega \mapsto \hat{A}_{\omega}$ is strongly continuous and (ii) the covariance property \eqref{covariance_relations} holds true. We refer to such a $(\hat{A}_{\omega})_{\omega\in\Omega}$ as a \emph{covariant family of random operators}. We stress that in view of \eqref{covariance_relations} periodic operators can be identified with constant covariant operators, namely with a random family such that $\hat{A}_{\omega} = \hat{A}$ for (almost) all $\omega\in\Omega$.

Instead of a particular realization, one studies the covariant family of random operators. Many of their important properties are in fact deterministic, \eg spectrum and spectral components \cite{Pastur:1980,Kunz_Souillard:1980}
\begin{align*}
	\sigma(\hat{A}_{\omega}) = \Sigma,\qquad\quad 
	\mathbb{P}\text{-a.~e. } \omega \in \Omega 
	, 
\end{align*}
and the $\mathbb{P}$-a.~s{.} existence of the trace-per-unit-volume \cite{Bellissard:K_theory_Cstar_algebras:1986}
\begin{align}
	\tracevol(\hat{A}_{\omega}) := \lim_{n \rightarrow \infty}\; \frac{1}{\sabs{\Gamma_n}} \trace_{\mathcal{H}} \bigl ( \hat{\chi}_{\Gamma_n} \, \hat{A}_{\omega} \, \hat{\chi}_{\Gamma_n} \bigr ) 
	= \int_\Omega \dd \mathbb{P} \; \trace_{\mathcal{H}} \bigl ( \hat{\chi}_{0} \, \hat{A}_{\omega} \, \hat{\chi}_{0} \bigr )
	 \label{reformulation:eqn:trace_per_unit_volume_rand_cov}
	 . 
\end{align}
Both of the above properties are consequences of the \emph{Birkhoff ergodic theorem}.

Also the notion of derivative extends to random families of operators as a $\mathbb{P}$-a.~s{.} property and gives rise to $p$-times differentiable and smooth covariant families of random operators. 

\subsection{The non-commutative Brillouin algebra for random lattice systems} 
\label{reformulation:random}

The aim of this section is to construct an `abstract' $C^\ast$-algebra which encodes all the topological and geometrical characteristics of the set of covariant random operators. This algebra can be thought of as a generalization of the $C^\ast$-algebra  of periodic observables $\Alg_{\text{per}}$ described in Section \ref{reformulation:Gelfand}. This construction has been developed by Bellissard during the 1980's \cite{Bellissard:K_theory_Cstar_algebras:1986,Bellissard:Cstar_algebras_solid_state_physics:1988}.

Given a covariant family $(\hat{A}_{\omega})_{\omega\in\Omega}$, the main idea is to consider $A(\omega,\gamma): = \bscpro{\delta_0}{\hat{A}_{\omega}\; \delta_\gamma}_{\mathcal{H}}$ as $r\times r$ \emph{matrix-valued symbols} for covariant operator families and to construct a $C^\ast$-algebra out of these symbols, given by an adequate crossed product. First one endows the topological vector space $\Cont_{\rm c}\big(\Omega\times\Gamma,{\rm Mat}_r(\C)\big)$ of continuous functions with compact support on $\Omega\times\Gamma$ and values in ${\rm Mat}_r(\C)$ with a $\ast$-algebra structure:
\begin{align}
	AB(\omega,\gamma)\;&:=\;\sum_{\gamma'\in\Gamma}A(\omega,\gamma')\,B\big(\tau_{-\gamma'} \omega , \gamma-\gamma'\big)
	, 
	\\
	A^\ast(\omega,\gamma)\;&:=\;A\big(\tau_{-\gamma} \omega , -\gamma\big)^*
	. 
	\label{reformulation:random_ast_structure}
\end{align}
For any $\omega \in \Omega$, a representation of this $\ast$-algebra on $\mathcal{H}$ is given by
\begin{align}
	\bigl ( \pi_{\omega}(A) \Psi \bigr ) (\gamma) &:= \sum_{\gamma'\in\Gamma} A\big(\tau_{-\gamma} \omega ,\gamma'-\gamma\big) \, \Psi(\gamma')
	\label{reformulation:random_rep}
\end{align}
where $\Psi\in \mathcal{H}$ and $\Psi(\gamma)\in\C^r$ for all $\gamma\in\Gamma$. From \eqref{reformulation:random_rep} it follows that the different representations $\pi_{\omega}$ are related by the covariance relation
\begin{align*}
	(\shift_{\gamma} \otimes \id_r) \;\pi_{\omega}(A)\;(\shift_{\gamma} \otimes \id_r)^{\ast} = \pi_{\tau_\gamma \omega}(A)
\end{align*}
and are strongly continuous in $\omega$. With the notation $\pi_{\omega}(A) =: \hat{A}_{\omega}$ one can see that the family of representations $\pi_{\omega}$ sends elements of the $\ast$-algebra $\Cont_{\rm c}\big(\Omega\times\Gamma,{\rm Mat}_r(\C)\big)$ to covariant families of random operators.

If we complete $\Cont_{\rm c}\big(\Omega\times\Gamma,{\rm Mat}_r(\C)\big)$ with respect to the $C^\ast$-norm
\begin{align*}
	\snorm{A} := \sup_{\omega\in\Omega}  \bnorm{\pi_{\omega}(A)}_{\mathcal{B}(\mathcal{H})} 
	, 
\end{align*}
we obtain the \emph{(non-commutative) Brillouin algebra} 
\begin{align}\label{abtr_rand_alg}
	\Alg := \bigl ( \Cont(\Omega)\rtimes\Gamma\bigr ) \otimes {\rm Mat}_r(\C)
	\simeq {\rm Mat}_r\big(\Cont(\Omega)\rtimes\Gamma\big)
\end{align}
where $\Cont(\Omega) \rtimes \Gamma$ denotes the \emph{cossed-product} $C^\ast$-algebra \cite{Williams:crossed_products:2007}. Note that since $\Omega$ is compact, the algebra $\Cont(\Omega)$ is unital. Hence, the crossed product 
\begin{align}
	\Alg_0 := \bigl ( \C \rtimes \Gamma \bigr ) \otimes \mathrm{Mat}_r(\C)
	\simeq \mathrm{Mat}_r \bigl ( \C \rtimes \Gamma \bigr ) 
	\label{reformulation:eqn:Alg_0}
\end{align}
is a $C^*$-subalgebra of $\Alg$ which consists of the elements that are independent of $\omega$. 

The algebra $\Alg$ carries a differential structure and an integration. The first is given by a gradient $\nabla:=(\nabla_1,\ldots,\nabla_j)$ defined by
\begin{align*}
	(\nabla_j A)(\omega,\gamma) := \ii \, n_j(\gamma) \, A(\omega,\gamma)
	. 
\end{align*}
The domain of this gradient is the subalgebra $\Cont^1(\Alg)$ which is the completion of the dense algebra $\Cont_{\rm c}\big(\Omega\times\Gamma,{\rm Mat}_r(\C)\big)$ with respect to the Banach norm $\snorm{\cdot}_1$ given by a formula analogous to \eqref{reformulation:eqn:k_norm_algebra}. More generally, with the usual procedure, one can define also the algebras of $p$-times differentiable elements $\Cont^p(\Alg)$ and the algebra of smooth elements $\Cont^{\infty}(\Alg)$ which is an invariant domain of $\nabla$. A simple computation provides
\begin{align}
	\pi_{\omega} \bigl ( \nabla_j A \bigr ) = \ii \, \bigl [ \pi_{\omega}(A), \hat{x}_j\otimes\id_r \bigr ] 
	= \nabla_j \big(\pi_{\text{per}}(A)\big).
	\label{reformulation:eqn:i_F_intertwines_der_partial_ran}
\end{align}
which means that the representations $\pi_{\omega}$ intertwine with the gradient $\nabla$, 
\begin{align}
\pi_{\omega} \circ \nabla_j = \nabla_j \circ \pi_{\omega}
\label{commut_gradient_rand}
. 
\end{align}
An integration is defined on $\Alg$ by the tracial states
\begin{align}
	\tracevol(A) := \int_{\Omega} \dd \mathbb{P}\; \text{Tr}_{\C^r}\big(A(\omega,0)\big)
	. 
	\label{eqn:trace_probability}
\end{align}
A comparison between the definitions \eqref{reformulation:random_rep} and
\eqref{reformulation:eqn:trace_per_unit_volume_rand_cov} provides also in this case the $\mathbb{P}$-a.~s{.} intertwining relation
\begin{align}
	\tracevol = \tracevol\circ \pi_{\omega}
	. 
	\label{commut_gradient_rand}
\end{align}

\subsection{Perturbations of periodic operators} 
\label{reformulation:perturbations}
To consider perturbed periodic operators, we start by showing how to identify $\Alg_0 \subset \Alg$ made up of elements independent of $\omega$ (\cf equation~\eqref{reformulation:eqn:Alg_0}) with $\Alg_{\mathrm{per}}$: Equation~\eqref{reformulation:random_rep} says that every $\pi_{\omega}$ maps elements of ${\Alg}_{0}$ to bounded operators in $\mathcal{H}$ which commute with translations $\shift_j \otimes \id_m$, \ie to periodic operators. 
Periodic operators in turn are represented faithfully by $\pi_{\mathrm{per}}$. Following this reasoning, we see that 
\begin{align*}
	\pi_{\omega}^{-1} \circ \pi_{\text{per}} : {\Alg}_{\text{per}} \hookrightarrow {\Alg}
\end{align*}
establishes an isomorphism between ${\Alg}_{\text{per}}$ and ${\Alg}_{0}$, and that this isomorphism does not depend on $\omega$. 

Starting with an hamiltonian $H = H^* \in {\Alg}_{\text{per}} \subset \Alg$ which has a spectral gap at $\EFermi$ (\emph{gap condition}), 
\begin{align*}
	\mathrm{dist} \bigl ( \EFermi , \sigma(H) \bigr ) \geqslant g > 0 
	, 
\end{align*}
we can perturb it by a bounded \emph{periodic} potential in such a way that $H_{\lambda} = H + \lambda V_{\text{per}}$ is still an element of ${\Alg}_{\text{per}}$ or by a bounded \emph{random} covariant potential so that $H_{\lambda} = H + \lambda V_{\omega}$ is now an element of ${\Alg}$. In both cases $H_{\lambda}$ converges to the unperturbed periodic hamiltonian $H$ in $\Alg$ as $\lambda \rightarrow 0$. Consequently, standard perturbation theory in the sense of Kato \cite{Kato:perturbation_theory:1995} guarantees the persistence of the gap as long as the perturbation is not too strong: there exists $\lambda_{\ast} < \frac{g}{2}$ such that 
\begin{align*}
	\mathrm{dist} \bigl ( \EFermi , \sigma(H_{\lambda}) \bigr ) \geqslant \frac{g}{2} 
\end{align*}
holds for all $\lambda \in [0,\lambda_{\ast}]$. This means, we can define the Fermi projection $P_{\lambda} := 1_{(-\infty,\EFermi]}(H_{\lambda})$ for all $\lambda \in [0,\lambda_{\ast}]$. By standard results, $P_{\lambda}$ is also an element of $\Alg$ (\cf Remark~\ref{reformulation:remark:stability_algebras_functional_calculus}); furthermore, if the perturbation is periodic, then the Fermi projection is also in $\Alg_{\mathrm{per}}$. Then the continuity of $\lambda \mapsto P_{\lambda}$ can also be interpreted in the following way: 
\begin{proposition}\label{prop:homotopy_disorder}
	Under the conditions listed above, $[0,\lambda_{\ast}] \ni \lambda \mapsto P_{\lambda} \in \Alg$ is a homotopy. 
\end{proposition}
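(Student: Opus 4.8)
The plan is to realize each Fermi projection as a Riesz (Cauchy) integral of the resolvent and to read off both the projection property and the norm-continuity from this representation, taking care that every estimate is uniform in $\lambda$. The payoff is that ``homotopy'' here means nothing more than a norm-continuous path of projections in $\Alg$, and the resolvent integral delivers both halves of this statement at once.

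First I would exploit the uniform gap. Since $\dist(\EFermi,\sigma(H_\lambda)) \geqslant \tfrac{g}{2}$ for every $\lambda \in [0,\lambda_*]$ and since $\snorm{H_\lambda} \leqslant \snorm{H} + \lambda_* < \snorm{H} + \tfrac{g}{2}$ is bounded uniformly in $\lambda$, all spectra $\sigma(H_\lambda)$ lie in one fixed compact interval and avoid the segment of radius $\tfrac{g}{2}$ about $\EFermi$. This allows me to fix, once and for all, a single positively oriented contour $\mathcal{C} \subset \C$ that encircles the part of every $\sigma(H_\lambda)$ lying below $\EFermi$, lies entirely in the common resolvent set, and stays at distance at least $\tfrac{g}{4}$ from all the spectra. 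Because the $H_\lambda$ are self-adjoint elements of the $C^*$-algebra $\Alg$, this yields the uniform resolvent bound $\snorm{(z - H_\lambda)^{-1}}_{\Alg} = \dist(z,\sigma(H_\lambda))^{-1} \leqslant \tfrac{4}{g}$ for all $z \in \mathcal{C}$ and all $\lambda \in [0,\lambda_*]$.

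Next I would set
\[
	P_\lambda := \frac{1}{2\pi \ii} \oint_{\mathcal{C}} (z - H_\lambda)^{-1} \, \dd z .
\]
As $\Alg$ is unital (recall $\Cont(\Omega)$ is unital) and closed under inversion, each resolvent $(z - H_\lambda)^{-1}$ is an element of $\Alg$; the integrand is then a norm-continuous $\Alg$-valued function on the compact contour $\mathcal{C}$, so the integral converges in $\Alg$ and $P_\lambda \in \Alg$. By the standard Riesz functional calculus $P_\lambda$ coincides with the spectral projection $1_{(-\infty,\EFermi]}(H_\lambda)$: since $\sigma(H_\lambda)$ avoids $\EFermi$, the function $1_{(-\infty,\EFermi]}$ is continuous (indeed locally constant) on $\sigma(H_\lambda)$, so its functional calculus produces a self-adjoint idempotent, and Remark~\ref{reformulation:remark:stability_algebras_functional_calculus} again confirms membership in $\Alg$. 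Finally, for continuity I would invoke the second resolvent identity
\[
	(z - H_{\lambda'})^{-1} - (z - H_\lambda)^{-1} = (z - H_{\lambda'})^{-1} \, (H_{\lambda'} - H_\lambda) \, (z - H_\lambda)^{-1} ,
\]
combined with $\snorm{H_{\lambda'} - H_\lambda} = \abs{\lambda' - \lambda} \, \snorm{V} = \abs{\lambda' - \lambda}$ and the bound $\tfrac{4}{g}$, to get $\snorm{(z - H_{\lambda'})^{-1} - (z - H_\lambda)^{-1}} \leqslant \bigl(\tfrac{4}{g}\bigr)^2 \abs{\lambda' - \lambda}$ uniformly on $\mathcal{C}$. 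Integrating and writing $\ell$ for the (finite) length of $\mathcal{C}$ gives
\[
	\snorm{P_{\lambda'} - P_\lambda}_{\Alg} \leqslant \frac{\ell}{2\pi} \Bigl( \frac{4}{g} \Bigr)^2 \abs{\lambda' - \lambda} ,
\]
so $\lambda \mapsto P_\lambda$ is Lipschitz, hence a norm-continuous path of projections, i.e. a homotopy.

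The hard part will not be any single estimate but making the whole construction uniform in $\lambda$: producing one contour $\mathcal{C}$ valid for the entire family and the accompanying uniform resolvent bound. Once the uniform gap and the uniform operator-norm bound on $H_\lambda$ are secured, the remaining steps (Riesz calculus, the second resolvent identity, the contour estimate) are routine. The only other point requiring attention is verifying that every object lives in $\Alg$ rather than merely in $\mathcal{B}(\mathcal{H})$, which is guaranteed because $\Alg$ is a unital $C^*$-algebra closed under inversion and under norm-convergent integrals, together with the functional-calculus stability of Remark~\ref{reformulation:remark:stability_algebras_functional_calculus}.
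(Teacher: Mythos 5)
Your proposal is correct and follows essentially the same route as the paper: the uniform gap lets one fix a contour independent of $\lambda$, the Fermi projection is written as a Cauchy integral of the resolvent, and the (second) resolvent identity yields norm-continuity of $\lambda \mapsto P_\lambda$. You merely make the paper's three-line sketch quantitative by supplying the uniform resolvent bound and the resulting Lipschitz estimate.
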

\begin{proof}
	The continuity of the family of bounded operators $H_{\lambda}$ in $\lambda$ and the resolvent identity imply the local continuity of the resolvents $(H_{\lambda} - z)^{-1}$. The gap allows us to write the Fermi projection as a Cauchy integral involving a contour that can be chosen independently of $\lambda$. Hence, $\lambda \mapsto P_{\lambda}$ is continuous or, in other words, it is a homotopy. 
\end{proof}
The importance of this result resides in the fact that all physical quantities which depend only on the homotopy class of the spectral projection can be computed in the limit of zero disorder. This meta result is generally known as \emph{stability under weak perturbation}.

\section{The King-Smith--Vanderbilt formula} 
\label{KSV}
The model hamiltonians $H : Q \longrightarrow \mathcal{A}$ we are interested in are parametrized by a parameter space $Q$. The latter is always a suitable path-connected $\Cont^p$-submanifold of $\R^N$ where notions such as taking derivatives are tacitly inherited from $\R^N$.\footnote{Indeed, one may also consider parameter spaces which are Riemannian manifolds, but for our intents and purposes, this is not necessary.} $\mathcal{A}$ is a $\ast$-algebra. We shall always make the following technical 
\begin{assumption}\label{assumption:hamiltonian}
	We assume $H : Q \subset \R^N \longrightarrow \Cont^1(\mathcal{A})$ is a $\Cont^p$ map with $p \geqslant 3$ taking values in the selfadjoint elements of $\Cont^1(\mathcal{A})$. 
\end{assumption}
Let us pick a Fermi energy $\EFermi$ and consider the gapped parameter set $Q_{\EFermi}$ defined as in the introduction. Then the relation~\eqref{intro:eqn:relation_model_hamiltonian_loops} mediates between loops $\eta \in \Cont(\Sone,Q_{\EFermi})$ and time-dependent, $T$-periodic hamiltonians $H_{\eta}(t)$. Note that if $Q_{\EFermi}$ has several connected components, then $\pi_1(\Qgap)$ is the direct sum of the fundamental groups of the connected components.

\subsection{Derivation from first principles} 
\label{KSV:derivation}
To give a self-contained presentation, we will sketch the derivation of \eqref{intro:eqn:KSV_formula}. The assumption $H_{\eta}(t) \in \Cont^1(\mathcal{A})$ implies that the current operator 
\begin{align*}
	\jmath_{\eta}(t) := \nabla H_{\eta}(t)
\end{align*}
is \emph{bounded}. 

The dynamical polarization is the expectation value of the charge transported over one period, and a quick computation \cite[Proposition~4]{Schulz-Baldes_Teufel:random_polarization:2012} yields 
\begin{align*}
	\Delta \mathcal{P}_{\mathrm{dyn}}(\eta) :& \negmedspace= \int_0^T \dd t \, \tracevol \left ( P(t) \; \nabla H_{\eta}(t) \right ) 
	\\
	&= \ii \, \int_0^T \dd t \; \tracevol \Bigl ( P(t) \; \bigl [ \partial_t P(t) \; , \, \nabla_j P(t) \bigr ] \Bigr ) 
	, 
\end{align*}
where $P(t)$ is the solution to the Liouville equation with initial state $P_{\eta}(0)$ as given by equation~\eqref{intro:eqn:Fermi_projection}. 

Assuming the deformation is sufficiently slow and regular, we can approximate $\Delta \mathcal{P}_{\mathrm{dyn}}(\eta)$ with $\Delta \Pol(\eta)$ by replacing the time-evolved projection $P(t)$ with the Fermi projection $P_{\eta}(t)$: if $t \mapsto H_{\eta}(t)$ is $\Cont^p$ as a $T$-periodic map from $\R$ to $\Cont^1(\Alg)$, then \cite[Theorem~1]{Schulz-Baldes_Teufel:random_polarization:2012} states that the error is of $(p-2)$th order in the adiabatic parameter $\eps$, 
\begin{align*}
	\Delta \Pol_{\mathrm{dyn}}(\eta) = \Delta \Pol(\eta) + \order(\eps^{p-2}) 
	. 
\end{align*}
%

\subsection{The polarization as a topological quantity} 
\label{KSV:topological}
A second, and for our purposes equally important result, \cite[Corollary~2]{Schulz-Baldes_Teufel:random_polarization:2012}, says that $\Delta \Pol$ is invariant under $\Cont^1$-homotopies of \emph{projections} which gives us leeway in how to calculate $\Delta \Pol$. The remainder of Section~\ref{KSV} serves to show that instead of looking at homotopies of \emph{projections}, it suffices to look at homotopies in \emph{parameter space}. 
\begin{lemma}\label{KSV:lem:homotopic_loops_homotopic_projections}
	If $\eta$ and $\eta'$ are in the same equivalence class of the $p$-regular homotopy group $\pi_1^p(\Qgap)$, then also the Fermi projections $P_{\eta}$ and $P_{\eta'}$ are $\Cont^p$-homotopic. 
\end{lemma}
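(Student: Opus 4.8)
The plan is to \emph{lift} the given homotopy of loops in parameter space to a homotopy of Fermi projections by means of the Riesz projection formula. Let $\Xi : [0,1] \times \Sone \longrightarrow \Qgap$ be a $p$-regular homotopy with $\Xi(0,\cdot) = \eta$ and $\Xi(1,\cdot) = \eta'$; since $\eta,\eta'$ represent the same class in $\pi_1^p(\Qgap)$, such a $\Xi$ exists, is $\Cont^p$ in both variables and takes values in $\Qgap$. Composing with the model Hamiltonian and reparametrising as in \eqref{intro:eqn:relation_model_hamiltonian_loops}, I would set $\hat{H}(s,t) := H\bigl(\Xi(s,\nicefrac{2\pi t}{T})\bigr)$ and define the candidate homotopy of projections by
\begin{align*}
    \mathcal{P}(s,t) := 1_{(-\infty,\EFermi)}\bigl(\hat{H}(s,t)\bigr).
\end{align*}
By construction $\mathcal{P}(0,\cdot) = P_{\eta}$ and $\mathcal{P}(1,\cdot) = P_{\eta'}$, and for each fixed $s$ the map $t \mapsto \mathcal{P}(s,t)$ is a $T$-periodic loop of projections because $\Xi(s,\cdot)$ is a loop in $\Qgap$; so all that remains is to verify the joint $\Cont^p$-regularity of $\mathcal{P}$.

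The analytic core is the Riesz representation
\begin{align*}
    \mathcal{P}(s,t) = \frac{1}{2\pi\ii}\oint_{\Gamma}\bigl(z - \hat{H}(s,t)\bigr)^{-1}\,\dd z,
\end{align*}
where $\Gamma$ is a closed contour in the resolvent set encircling the part of the spectrum lying below $\EFermi$. The crucial point is that $\Gamma$ may be chosen \emph{once and for all}: the image $\Xi([0,1]\times\Sone)$ is compact and contained in $\Qgap$, so $q \mapsto \mathrm{dist}\bigl(\EFermi,\sigma(H(q))\bigr)$ attains a strictly positive minimum $g$ on it, and all the Hamiltonians $\hat{H}(s,t)$ are uniformly norm-bounded. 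Hence a single contour staying at distance $\geqslant g/2$ from every spectrum separates $\sigma(\hat{H}(s,t)) \cap (-\infty,\EFermi)$ from $\sigma(\hat{H}(s,t)) \cap (\EFermi,\infty)$ for all $(s,t)$.

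For the regularity I would argue as follows. First, $\hat{H}$ is a $\Cont^p$ map of $(s,t)$ into the selfadjoint elements of $\Cont^1(\Alg)$, being the composition of the $\Cont^p$ map $H$ with the $\Cont^p$ map $\Xi$ and the smooth reparametrisation (chain rule for maps into a Banach algebra). Since the resolvent map $A \mapsto (z-A)^{-1}$ is smooth on the open set of elements for which $z-A$ is invertible, with derivatives expressed through the resolvent identity and bounded uniformly in $z \in \Gamma$ and in $(s,t)$, differentiation under the integral sign over the compact contour $\Gamma$ shows that $(s,t)\mapsto\mathcal{P}(s,t)$ is jointly $\Cont^p$. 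Stability of $\Cont^1(\Alg)$ under holomorphic functional calculus (Remark~\ref{reformulation:remark:stability_algebras_functional_calculus}) guarantees that $\mathcal{P}(s,t)$ in fact lies in $\Cont^1(\Alg)$, so that $\mathcal{P}$ is a genuine $\Cont^p$-homotopy between $P_{\eta}$ and $P_{\eta'}$ in the relevant algebra; because $\eta$ and $\eta'$ agree at the basepoint, so do the associated projections and the homotopy is based.

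I expect the main obstacle to be the bookkeeping of the two independent notions of regularity — smoothness in the external parameters $(s,t)$ versus the internal differentiability measured by $\nabla$ inside $\Cont^1(\Alg)$ — and, in particular, the justification of differentiation under the integral sign with values in the Banach-$\ast$ algebra $\Cont^1(\Alg)$ rather than merely in $\mathcal{B}(\mathcal{H})$. Establishing the uniform contour is routine given compactness, but one must check that the uniform operator-norm bounds upgrade to uniform $\snorm{\cdot}_1$-bounds on the resolvents, which is precisely where the hypothesis $H : Q \to \Cont^1(\Alg)$ (rather than merely into $\Alg$) enters.
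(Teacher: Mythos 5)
Your proposal is correct, and its analytic core coincides with the first step of the paper's own proof: both represent the Fermi projection as a Riesz--Cauchy integral of the resolvent along a contour in the gap and transfer the $\Cont^p$ regularity of $(s,t)\mapsto H_{\Lambda(s)}(t)$ to the projections via the resolvent identity and differentiation under the contour integral. Where you diverge is that you stop there — a jointly $\Cont^p$ family $\mathcal{P}(s,t)$ interpolating between $P_{\eta}$ and $P_{\eta'}$ is already a $\Cont^p$-homotopy of projections, which is all the statement asks for — whereas the paper goes on to cover $[0,1]$ by finitely many intervals on which consecutive projections are norm-close, builds the explicit Kato intertwining unitaries $U(s;t)$ so that $P_{\Lambda(s)}(t) = U(s;t)\,P_{\eta}(t)\,U(s;t)^*$, and then glues and smooths the resulting local homotopies via \cite[Corollary~17.8.1]{Bott_Tu:diff_forms_algebraic_topology:1982}. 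That extra machinery buys a \emph{unitarily implemented} homotopy, which is convenient for the subsequent $K_0$-theoretic reading of $\Delta\Pol$, but it is not needed for the lemma as stated. Two points in your write-up are genuine improvements in explicitness: the uniform choice of contour via compactness of $\Xi([0,1]\times\Sone)$ and continuity of $q\mapsto\mathrm{dist}\bigl(\EFermi,\sigma(H(q))\bigr)$ (the paper only invokes local inner/outer continuity of spectra), and the remark that the uniform bounds must be taken in $\snorm{\cdot}_1$ rather than merely in operator norm, which is exactly where the hypothesis $H : Q \to \Cont^1(\Alg)$ and Remark~\ref{reformulation:remark:stability_algebras_functional_calculus} enter.
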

\begin{proof}
	Let $\eta , \eta' \in \Cont^p(\Sone,\Qgap)$ be two loops with $[\eta]_p = [\eta']_p \in \pi_1^p(\Qgap)$. By definition, there exists a $\Cont^p$-homotopy 
	\begin{align*}
		\Lambda : [0,1] \longrightarrow \Cont^p(\Sone,\Qgap) 
	\end{align*}
	which connects $\eta = \Lambda(0)$ with $\eta' = \Lambda(1)$. The continuity of $(s,t) \mapsto H_{\Lambda(s)}(t)$ ensures the inner and outer continuity of $\sigma \bigl ( H_{\Lambda(s)}(t) \bigr )$ (see \eg \cite[Corollary~2.6]{Mantoiu_Purice:continuity_spectra:2009}); moreover, the resolvent $(s,t) \mapsto \bigl ( H_{\Lambda(s)}(t) - z \bigr )^{-1}$ inherits the $\Cont^p$ regularity of $q \mapsto H(q)$ and $\Lambda(s)$. Hence, writing $P_{\Lambda(s)}(t)$ as a Cauchy integral, we see that the map $(s,t) \mapsto P_{\Lambda(s)}(t)$ is also $\Cont^p$. 
	
	Now we cover $[0,1]$ with finitely many open intervals $\{ \mathcal{V}_{\alpha} \}$ such that 
	\begin{align}
		\sup_{t \in \Sone} \bnorm{P_{\Lambda(s)}(t) - P_{\Lambda(s')}(t)} < 1 
		\label{topology_Bloch:eqn:norm_distance_projections}
	\end{align}
	holds for all $s , s' \in \mathcal{V}_{\alpha}$. 
	
	Let us initially assume \eqref{topology_Bloch:eqn:norm_distance_projections} holds for all $s , s' \in [0,1]$. Then this condition implies the existence of a family of unitaries \cite[equation~(4.38)]{Kato:perturbation_theory:1995} 
	\begin{align}
		U(s;t) := \Bigl ( P_{\Lambda(s)}(t) \, P_{\eta}(t) + \bigl ( \id - P_{\Lambda(s)}(t) \bigr ) \, \bigl ( \id - P_{\eta}(t) \bigr ) \Bigr ) \, \Bigl ( \id - \bigl ( P_{\Lambda(s)}(t) - P_{\eta}(t) \bigr )^2 \Bigr )^{- \nicefrac{1}{2}} 
		\label{KSV:eqn:unitary_intertwiner_explicity}
	\end{align}
	which intertwines $P_{\eta}(t)$ and 
	\begin{align}
		P_{\Lambda(s)}(t) = U(s;t) \, P_{\eta}(t) \, U(s;t)^* 
		. 
		\label{KSV:eqn:projection_homotopy}
	\end{align}
	This unitary $U(s;t)$ inherits the $\Cont^p$ regularity from $P_{\Lambda(s)}(t)$ and $P_{\eta}(t)$. 
	
	Let us return to the general case: If we need several open intervals $\{ \mathcal{V}_{\alpha} \}$ to cover $[0,1]$ so that \eqref{topology_Bloch:eqn:norm_distance_projections} is satisfied on each of them, the above procedure yields a collection of homotopies $\{ P_{\alpha}(s;t) \}$ where each of the $P_{\alpha}(\cdot;t)$ is only defined on $\mathcal{V}_{\alpha}$. Gluing these homotopies together yields a homotopy $P(s;t)$ connecting $P_{\eta}(t)$ and $P_{\eta'}(t)$ which is $\Cont^p$ almost everywhere, but only continuous at the gluing points. However, we can invoke \cite[Corollary~17.8.1]{Bott_Tu:diff_forms_algebraic_topology:1982} and modify the homotopy $P(\cdot;t)$ to make it $\Cont^p$ everywhere. 
\end{proof}
According to \cite[Section~3]{Schulz-Baldes_Teufel:random_polarization:2012}, the quantity $\Delta \Pol(\eta)$ given by \eqref{intro:eqn:KSV_formula} (for a fixed $\eta$) is a \emph{two-cocycle} on the {extended} $C^\ast$-algebra $\widehat{\Alg}:=\Cont(\Sone)\otimes\Alg$ endowed with the {extended} gradient $\widehat{\nabla}:=(\ii \partial_t,\nabla)$ and the extended trace $\widehat{\tracevol}:=\int_0^T \dd t \, \tracevol(\cdot)$. Such an object provides the non-commutative analog of the Chern invariant in the spirit of \emph{non-commutative differential calculus}. More precisely, $\widehat{\tracevol}$ can be seen a map between the $K$-group $K_0(\widehat{\Alg})$ and $\Z$ \cite{Connes:noncommutative_geometry:1994,Varilly_Figueroa_Gracia_Bondia:noncommutative_geometry:2001}: $\widehat{\tracevol}$ applied to a projection in $\widehat{\Alg}$ yields an integer, and if two projections are $K_0$-equivalent, they are mapped to the same integer. On the other hand, Lemma~\ref{KSV:lem:homotopic_loops_homotopic_projections} says that for $\eta$ and $\eta'$ from the same equivalence class of the $p$-regular homotopy group $\pi_1^p(\Qgap)$, the Fermi projections $P_{\eta}$ and $P_{\eta'}$ are homotopic in $\Alg$, and so in the same $K_0$-class. This leads to the following result:
\begin{proposition}\label{KSV:prop:Delta_Pol_1_morphism}
	For any integer $p \geqslant 1$ equation~\eqref{intro:eqn:KSV_formula} induces a group morphism 
	\begin{align*}
		\Delta \Pol_{\ast}^p : \Cont^p(\Sone,\Qgap) \longrightarrow \Z^d
		, 
		\quad
		[\eta]_p \mapsto \Delta \Pol_{\ast}^p \bigl ( [\eta]_p \bigr ) := \Delta \Pol(\eta)
		. 
	\end{align*}
\end{proposition}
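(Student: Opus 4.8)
The plan is to show that the map $[\eta]_p \mapsto \Delta\Pol(\eta)$ is (a) well-defined on homotopy classes and (b) a group homomorphism, leveraging the two results already established: Lemma~\ref{KSV:lem:homotopic_loops_homotopic_projections} (homotopic loops yield $\Cont^p$-homotopic Fermi projections) together with the homotopy-invariance of $\Delta\Pol$ under $\Cont^1$-homotopies of projections from \cite[Corollary~2]{Schulz-Baldes_Teufel:random_polarization:2012}. The integer-valuedness of $\Delta\Pol$ comes from the interpretation of $\widehat\tracevol$ as a pairing $K_0(\widehat\Alg)\to\Z$ discussed just above the statement.

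First I would verify that $\Delta\Pol_{\ast}^p$ is well-defined, i.e.\ independent of the chosen representative $\eta$ of the class $[\eta]_p$. Suppose $[\eta]_p=[\eta']_p$ in $\pi_1^p(\Qgap)$. By Lemma~\ref{KSV:lem:homotopic_loops_homotopic_projections} the Fermi projections $P_\eta$ and $P_{\eta'}$ are $\Cont^p$-homotopic (in particular $\Cont^1$-homotopic, since $p\geqslant1$) in the extended algebra $\widehat\Alg$. By \cite[Corollary~2]{Schulz-Baldes_Teufel:random_polarization:2012}, $\Delta\Pol$ is invariant under such homotopies of projections, so $\Delta\Pol(\eta)=\Delta\Pol(\eta')$. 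This shows the value depends only on the class, so the map descends to $\pi_1^p(\Qgap)$.

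Next I would establish the homomorphism property, $\Delta\Pol_{\ast}^p([\eta]_p\cdot[\eta']_p)=\Delta\Pol_{\ast}^p([\eta]_p)+\Delta\Pol_{\ast}^p([\eta']_p)$. The key point is that the Fermi projection associated to a concatenation $\eta\ast\eta'$ of loops is (up to $\Cont^p$-homotopy) the ``sum'' of the projections $P_\eta$ and $P_{\eta'}$ in the sense of the $K_0$-class, because concatenation of loops corresponds to addition in the $K_0$-group under the pairing with $\widehat\tracevol$. Concretely, the $T$-periodic hamiltonian $H_{\eta\ast\eta'}$ traverses the path of $\eta$ on $[0,T/2]$ and that of $\eta'$ on $[T/2,T]$; since $\Delta\Pol$ is given by the time integral in \eqref{intro:eqn:KSV_formula}, additivity of the integral over the two subintervals — combined with the invariance under reparametrization (itself a homotopy of projections) — yields the claimed additivity. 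Equivalently, one uses that the map from loops to $K_0(\widehat\Alg)$ is additive under concatenation and that $\widehat\tracevol$ is linear on $K_0$.

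The main obstacle I anticipate is the additivity step, and in particular making rigorous the claim that concatenation of loops corresponds to addition in $K_0$. The difficulty is twofold: first, the concatenated loop $\eta\ast\eta'$ is only continuous (not $\Cont^p$) at the gluing point, so one must smooth it as in the final paragraph of the proof of Lemma~\ref{KSV:lem:homotopic_loops_homotopic_projections}, invoking \cite[Corollary~17.8.1]{Bott_Tu:diff_forms_algebraic_topology:1982}; second, one must correctly identify the $K_0$-class of the projection over the concatenated loop. The cleanest route is to argue entirely at the level of the integral \eqref{intro:eqn:KSV_formula}: split $\int_0^T$ into $\int_0^{T/2}+\int_{T/2}^T$, reparametrize each piece so that it runs over a full period while traversing $\eta$ (respectively $\eta'$), and observe that each reparametrization is a $\Cont^1$-homotopy of the relevant projection and hence leaves $\Delta\Pol$ unchanged. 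This reduces additivity to the elementary additivity of the integral, sidestepping the delicate $K$-theoretic bookkeeping while still using homotopy-invariance as the essential tool.
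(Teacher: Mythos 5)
Your proof is correct and follows essentially the same route as the paper: well-definedness on homotopy classes via Lemma~\ref{KSV:lem:homotopic_loops_homotopic_projections} combined with the homotopy-invariance of $\Delta \Pol$ under $\Cont^1$-homotopies of projections, and additivity coming from the structure of the King-Smith--Vanderbilt integral. The paper is terser on the homomorphism property (it merely asserts that the linearity of $\Delta \Pol_{\ast}^p$ ``follows directly from the linearity of $\Delta \Pol$''), whereas you spell out the concatenation/reparametrization argument and correctly flag the need to smooth the concatenated loop at the gluing point.
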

The linearity of $\Delta \Pol_{\ast}^p$ follows directly from the linearity of $\Delta \Pol$; also $\Delta \Pol_{\ast}^p \bigl ( [0]_p \bigr ) = 0$ is immediate, because $\partial_t P_0 = 0$ for Fermi projection associated to the constant loop $0$. 

\subsection{The periodic case: the Bloch bundle} 
\label{KSV:Bloch_bundle}
In the periodic case, the polarization $\Delta \Pol$ can be seen as Chern numbers associated to the so-called \emph{Bloch bundle}, and the $\Cont^1$-diffeotopy invariance is but one consequence of this fact. 
Using the notation of Section~\ref{reformulation:Gelfand}, the Fermi projection $P_{\eta}(t) \in \Alg_{\mathrm{per}} \simeq \mathrm{Mat}_r \bigl ( \Cont(\T^d) \bigr )$ can be viewed as a family of projections $P_{\eta}(k,t)$ on $\C^r$ indexed by $(k,t) \in \T^d$. Then in analogy to \cite{Nenciu:exponential_loc_Wannier:1983,Panati:triviality_Bloch_bundle:2006,DeNittis_Lein:exponentially_loc_Wannier:2011}, the disjoint union 
\begin{align}
	\Blochb(\eta) := \bigsqcup_{(k,t) \in \T^{d+1}} \ran P_{\eta}(k,t) 
	\label{topology_Bloch:eqn:Bloch_bundle}
\end{align}
defines the so-called \emph{Bloch bundle}. The continuity of $(k,t) \mapsto P_{\eta}(k,t)$ directly implies 
\begin{align*}
	\pi : \Blochb(\eta) \longrightarrow \T^{d+1}
\end{align*}
is a vector bundle. This construction is nothing but a particular application of the \emph{Serre-Swan theorem} \cite{Swan:vector_bundles_proj_modules:1962}. Since homotopic loops define $K_0$-equivalent projections in the module $\Cont(\Sone)\otimes \mathrm{Mat}_r\bigl ( \Cont(\T^d) \bigr ) \simeq \mathrm{Mat}_r \bigl ( \Cont(\T^{d+1}) \bigr )$ (\cf Lemma~\ref{KSV:lem:homotopic_loops_homotopic_projections}) and isomorphic vector bundles arise from $K_0$-equivalent projections \cite{Husemoller:fiber_bundles:1966}, an immediate consequence is 
\begin{proposition}\label{topology_Bloch:prop:Bloch_bundle_homotopy_class}
	The Bloch bundle $\pi : \Blochb(\eta) \longrightarrow \T^{d+1}$ depends only on the homotopy class $[\eta] \in \pi_1(\Qgap)$, \ie if $\eta \sim \eta'$, then the bundles $\Blochb(\eta)$ and $\Blochb(\eta')$ are isomorphic. In particular, all Chern classes agree, $c_n \bigl ( \Blochb(\eta) \bigr ) = c_n \bigl ( \Blochb(\eta') \bigr )$. 
\end{proposition}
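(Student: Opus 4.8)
The plan is to reduce the statement to two facts already at our disposal: the homotopy-invariance of the Fermi projection from Lemma~\ref{KSV:lem:homotopic_loops_homotopic_projections}, and the Serre-Swan dictionary \cite{Swan:vector_bundles_proj_modules:1962} between projections over a compact Hausdorff space and vector bundles over it. Recall from the construction of $\Blochb(\eta)$ that, in the periodic case, $P_{\eta}(t) \in \Alg_{\mathrm{per}} \simeq \mathrm{Mat}_r \bigl ( \Cont(\T^d) \bigr )$ is a continuous family of projections on $\C^r$ indexed by $(k,t) \in \T^{d+1}$; equivalently, $P_{\eta}$ is a single projection in the module $\mathrm{Mat}_r \bigl ( \Cont(\T^{d+1}) \bigr )$, and $\Blochb(\eta)$ is, by definition, the vector bundle that Serre-Swan associates to it. Since isomorphism classes of vector bundles over $\T^{d+1}$ are in bijection with $K_0$-equivalence classes of such projections \cite{Husemoller:fiber_bundles:1966}, it is enough to show that $\eta \sim \eta'$ forces $P_{\eta}$ and $P_{\eta'}$ into the same $K_0$-class.

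First I would invoke Lemma~\ref{KSV:lem:homotopic_loops_homotopic_projections}: if $\eta$ and $\eta'$ are homotopic loops in $\Qgap$, then the associated Fermi projections are $\Cont^p$-homotopic. Concretely this supplies a $\Cont^p$ homotopy $\Lambda$ of loops and, with it, a continuous family $(s,t) \mapsto P_{\Lambda(s)}(t)$ of projections in $\mathrm{Mat}_r \bigl ( \Cont(\T^{d+1}) \bigr )$ interpolating between $P_{\eta}$ at $s = 0$ and $P_{\eta'}$ at $s = 1$. A continuous path of projections is exactly a $K_0$-homotopy, so the previous paragraph already closes the argument. To make the resulting bundle isomorphism explicit — which is how I would actually present it — I would assemble the homotopy into a single projection-valued function on $[0,1] \times \T^{d+1}$, and hence, again via Serre-Swan, into a vector bundle $\mathcal{E}$ over $[0,1] \times \T^{d+1}$ whose restrictions to the boundary slices $\{ 0 \} \times \T^{d+1}$ and $\{ 1 \} \times \T^{d+1}$ are $\Blochb(\eta)$ and $\Blochb(\eta')$ respectively. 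Because $[0,1]$ is contractible, the two inclusions $\T^{d+1} \hookrightarrow [0,1] \times \T^{d+1}$ at $s = 0$ and $s = 1$ are homotopic, so the homotopy-invariance of pullback bundles over the compact base $\T^{d+1}$ gives $\Blochb(\eta) \simeq \Blochb(\eta')$.

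The one genuinely substantive point — and thus the main obstacle — is the passage from the homotopy of projections to the bundle isomorphism; everything else is bookkeeping. The delicate issue is that $\mathcal{E}$ must actually be a vector bundle, \ie the fibers $\ran P_{\Lambda(s)}(k,t)$ must have locally constant rank. This is guaranteed by the local unitary equivalence already exploited in the proof of Lemma~\ref{KSV:lem:homotopic_loops_homotopic_projections}: on each sufficiently small parameter interval one has $\sup_{t} \bnorm{P_{\Lambda(s)}(t) - P_{\Lambda(s')}(t)} < 1$ as in \eqref{topology_Bloch:eqn:norm_distance_projections}, which forces the ranks to coincide, and by connectedness of $[0,1]$ the rank is then globally constant. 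Once the isomorphism $\Blochb(\eta) \simeq \Blochb(\eta')$ is secured, the equality of Chern classes $c_n \bigl ( \Blochb(\eta) \bigr ) = c_n \bigl ( \Blochb(\eta') \bigr )$ follows at once, since Chern classes are invariants of the isomorphism type of a complex vector bundle.
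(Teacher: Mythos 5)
Your proposal is correct and follows essentially the same route as the paper: the paper also derives the statement as an immediate consequence of Lemma~\ref{KSV:lem:homotopic_loops_homotopic_projections} (homotopic loops give homotopic, hence $K_0$-equivalent, Fermi projections) combined with the Serre--Swan correspondence under which $K_0$-equivalent projections yield isomorphic vector bundles. The only difference is that you spell out the standard projection-homotopy-to-bundle-isomorphism step (via the bundle over $[0,1]\times\T^{d+1}$ and local constancy of rank), which the paper leaves implicit with a citation.
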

We will now connect $\Delta \Polad$ to the Chern class: the homology group $H_2(\T^{d+1}) \cong \Z^{\frac{1}{2}d(d+1)}$ \cite[Section~V.C]{DeNittis_Lein:exponentially_loc_Wannier:2011} is generated by the $2$-tori 
\begin{align*}
	\T^2_{j,n} :& \negmedspace= \left \{ (k,t) \in \T^{d+1} \; \; \big \vert \; \; \bigl ( k_1 , \ldots , \cancel{k_j} , \ldots , \cancel{k_n} , \ldots , k_d , t \bigr ) = \ast \; \mbox{ where } \; \ast \in \T^{d-1} \right \} 
	, 
	\\
	\T^2_{n,d+1} :& \negmedspace= \left \{ (k,t) \in \T^{d+1} \; \; \big \vert \; \; \bigl ( k_1 , \ldots , \cancel{k_n} , \ldots , k_d , \cancel{t} \bigr ) = \ast \; \mbox{ where } \; \ast \in \T^{d-1} \right \} 
	, 
\end{align*}
where $\ast \in \T^{d-1}$ can be chosen arbitrarily for each $0 \leqslant j < n \leqslant d$ separately. Here, $\bigl ( k_1 , \ldots , \cancel{k_n} , \ldots , k_d \bigr ) \in \T^{d-1}$ means we omit $k_n$. Fixing the point $\ast$ corresponds to a choice of embedding $\T^2_{j,n} \hookrightarrow \T^{d+1}$, and how we identify $\T^2_{j,n}$ with $[-\pi,+\pi)^2 \hookrightarrow [-\pi,+\pi)^{d+1}$. Clearly, this choice has no bearing on the resulting first Chern numbers 
\begin{align}
	C_{j,n} ([\eta]) := \int_{\T^2_{j,n}} c_1 \bigl ( \Blochb(\eta) \bigr ) \in \Z 
	, 
	\label{topology_Bloch:eqn:Chern_numbers}
\end{align}
and in calculations some choices are more convenient than others. If we set $C_{j,n} ([\eta]) := - C_{n,j} ([\eta])$ for $n < j$, define $C_{j,j} ([\eta]) := 0$ and arrange 
\begin{align*}
	C ([\eta]) := \left (
	\begin{array}{c c c : c}
		 & & & + \Delta \Pol_1 ([\eta]) \\
		 & \Omega_{\mathrm{B}} ([\eta]) & & \vdots \\
		 & & & + \Delta \Pol_d ([\eta]) \\ \hdashline
		 - \Delta \Pol_1 ([\eta]) & \cdots & - \Delta \Pol_d ([\eta]) & 0 \\ 
	\end{array}
	\right )
\end{align*}
in an antisymmetric matrix, then the adiabatic polarization $\Delta \Polad ([\eta])$ make up the non-trivial components of the last column. The remainder $\Omega_{\mathrm{B}} ([\eta])$ is comprised of the Chern numbers which are relevant for the quantum Hall effect. 

\subsection{The topology of the parameter space} 
\label{KSV:parameter_space}
Let us return to the general case including disorder. In view of Proposition~\ref{topology_Bloch:prop:Bloch_bundle_homotopy_class} and the $K_0$-equivalence of homotopic projections, the differentiability condition in Proposition~\ref{KSV:prop:Delta_Pol_1_morphism} seems superfluous. Indeed, for a wide variety of cases, making this distinction is not necessary because $\pi_1^1(\Qgap)$ is isomorphic to $\pi_1(\Qgap)$. This is the case, for instance, if $Q_{\EFermi}$ is an open subset of $\R^N$, because then, the inclusion map 
\begin{align*}
	\imath : \Cont^{\infty}(\Sone,Q_{\EFermi}) \longrightarrow \Cont(\Sone,Q_{\EFermi})
\end{align*}
induces an \emph{iso}morphism $\imath_{\ast} : \pi_1^{\infty}(Q_{\EFermi}) \longrightarrow \pi_1(Q_{\EFermi})$ between smooth and continuous first homotopy groups \cite[Corollary~17.8.1]{Bott_Tu:diff_forms_algebraic_topology:1982}. 

In general, the inclusion map $\imath : \Cont^1(\Sone,\Qgap) \longrightarrow \Cont(\Sone,\Qgap)$ induces only a \emph{homo}morphism
\begin{align*}
	\imath_{\ast} : \pi_1^1(\Qgap) \longrightarrow \pi_1(\Qgap) 
	. 
\end{align*}
Very often, it is not necessary to check whether $\pi_1^1(\Qgap) \simeq \pi_1(\Qgap)$ holds for a variety of choice of $\EFermi$, but it suffices to verify $\pi_1^1(Q) \simeq \pi_1(Q)$ instead. 
\begin{proposition}\label{KSV:prop:pi_1_simeq_pi_1}
	$\pi_1^1(Q) \simeq \pi_1(Q)$ $\Longrightarrow$ $\pi_1^1(\Qgap) \simeq \pi_1(\Qgap)$ $\forall E \in \R$
\end{proposition}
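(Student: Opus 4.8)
The plan is to show that the implication follows from the basic fact that $\Qgap$ is an open subset of $Q$ whenever $Q$ is an open submanifold of $\R^N$, and that openness propagates the regularity-independence of the fundamental group from the ambient parameter space down to each gapped subset. First I would recall the structural reason behind the statement: by Assumption~\ref{assumption:hamiltonian}, the map $H : Q \longrightarrow \Cont^1(\Alg)$ is $\Cont^p$ with $p \geqslant 3$, and the map $q \mapsto \mathrm{dist}\bigl ( \EFermi , \sigma(H(q)) \bigr )$ is continuous (by inner and outer continuity of the spectrum, as used in the proof of Lemma~\ref{KSV:lem:homotopic_loops_homotopic_projections}). Consequently the two defining conditions (i)--(ii) for membership in $\Qgap$ are \emph{open} conditions on $q$, so $\Qgap$ is an open subset of $Q$.

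The key step is then the Whitney approximation argument. For an open subset $U$ of a $\Cont^{\infty}$-manifold (or of $\R^N$), the inclusion $\imath : \Cont^{\infty}(\Sone , U) \hookrightarrow \Cont(\Sone , U)$ induces an isomorphism on fundamental groups, because any continuous loop can be $\Cont^{\infty}$-approximated staying inside the open set $U$, and any continuous homotopy between smooth loops can likewise be smoothed; this is precisely the content of \cite[Corollary~17.8.1]{Bott_Tu:diff_forms_algebraic_topology:1982}, already invoked in the excerpt. Since $\Qgap$ is open in $Q$ and, under the hypothesis, $\pi_1^1(Q) \simeq \pi_1(Q)$, I would argue that the same smoothing mechanism that works for $Q$ restricts to each open subset $\Qgap \subset Q$. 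More carefully, if $Q$ itself is open in $\R^N$ then $\Qgap$ is open in $\R^N$, and the approximation theorem applies to $\Qgap$ directly, giving $\pi_1^1(\Qgap) \simeq \pi_1(\Qgap)$ for every $\EFermi$. The only subtlety is when $Q$ is a genuine $\Cont^p$-submanifold with boundary rather than an open set; there the hypothesis $\pi_1^1(Q) \simeq \pi_1(Q)$ is doing real work, encoding that $Q$ is already ``nice enough'' to smooth loops, and one transports this along the open inclusion $\Qgap \hookrightarrow Q$.

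I would carry out the argument as follows. Take a loop $\eta \in \Cont(\Sone , \Qgap)$; viewed as a loop in $Q$, the hypothesis $\pi_1^1(Q) \simeq \pi_1(Q)$ furnishes a $\Cont^1$ loop $\tilde{\eta}$ in $Q$ homotopic to $\eta$. Because $\Qgap$ is open and the homotopy can be taken $\Cont^0$-small (one first homotopes within a tubular neighborhood of the image, which remains inside $\Qgap$ by openness), the approximating loop $\tilde{\eta}$ and the connecting homotopy stay inside $\Qgap$; hence $\imath_{\ast} : \pi_1^1(\Qgap) \longrightarrow \pi_1(\Qgap)$ is surjective. Injectivity is dual: a $\Cont^1$-nullhomotopic-in-$\Qgap$ loop is nullhomotopic continuously, and the smoothing of the nullhomotopy again stays in the open set $\Qgap$, so the map is injective. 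This establishes the claimed isomorphism $\pi_1^1(\Qgap) \simeq \pi_1(\Qgap)$ for all $\EFermi \in \R$.

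The main obstacle I anticipate is justifying that the approximating loop and the smoothing homotopy remain inside the \emph{open} set $\Qgap$ rather than merely inside $Q$: one must control the $\Cont^0$-size of the approximation relative to the distance from $\partial \Qgap$. This is handled by compactness of $\Sone$ and openness of $\Qgap$ (the image $\eta(\Sone)$ is compact, hence at positive distance from $Q \setminus \Qgap$), so a sufficiently fine approximation cannot leave $\Qgap$. Once this uniform-distance bound is in place, the rest is a routine application of the smoothing theorem, and the implication follows.
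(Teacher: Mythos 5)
Your proposal is correct and follows essentially the same route as the paper: first establish that $\Qgap$ is open in $Q$ via inner and outer continuity of the spectra (the paper isolates this as Lemma~\ref{reformulation:lem:Qgap_open}), then obtain surjectivity of $\imath_{\ast}$ by choosing a tubular neighborhood of the continuous loop contained in $\Qgap$ (using compactness of $\Sone$) and applying the hypothesis $\pi_1^1(Q) \simeq \pi_1(Q)$ inside that neighborhood, with injectivity following from openness. Your additional remarks on controlling the $\Cont^0$-size of the approximation and on smoothing nullhomotopies make explicit what the paper leaves implicit, but the underlying argument is the same.
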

To prove this, we need a lemma which is interesting in its own right: 
\begin{lemma}\label{reformulation:lem:Qgap_open}
	$Q_{\EFermi} \subseteq Q$ is open with respect to the relative topology of $Q$. 
\end{lemma}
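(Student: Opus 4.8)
The plan is to show that every point $q_0 \in \Qgap$ admits a relative neighborhood in $Q$ on which both defining conditions (i) and (ii) of $\Qgap$ remain satisfied. The whole argument is a parameter-space version of the resolvent/Cauchy-integral reasoning already used in Proposition~\ref{prop:homotopy_disorder} and in the proof of Lemma~\ref{KSV:lem:homotopic_loops_homotopic_projections}, the only new ingredient being that the deformation parameter is now $q$ rather than $\lambda$ or the homotopy variable.

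First I would fix $q_0 \in \Qgap$ and set $g_0 := \dist \bigl ( \EFermi , \sigma(H(q_0)) \bigr )$, which is strictly positive because condition (i) holds at $q_0$ and the spectrum of the selfadjoint element $H(q_0)$ is closed. By Assumption~\ref{assumption:hamiltonian} the map $q \mapsto H(q)$ is continuous into $\Cont^1(\mathcal{A})$, hence also into the ambient $C^*$-algebra $\mathcal{A}$; therefore there is a relative neighborhood $\mathcal{U}$ of $q_0$ in $Q$ with $\snorm{H(q) - H(q_0)}_{\mathcal{A}} < g_0/2$. Since $H(q)$ and $H(q_0)$ are selfadjoint, the standard spectral perturbation estimate \cite{Kato:perturbation_theory:1995} gives $\sigma(H(q)) \subseteq \{ E \in \R : \dist(E , \sigma(H(q_0))) < g_0/2 \}$, so that $\dist(\EFermi , \sigma(H(q))) \geqslant g_0/2 > 0$ for all $q \in \mathcal{U}$. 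This establishes condition (i) on $\mathcal{U}$.

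Next I would handle condition (ii). Because the gap at $\EFermi$ is uniform on $\mathcal{U}$ and the operators $H(q)$ are uniformly norm-bounded there (again by continuity of $H$), I can fix a single contour $\mathcal{C}$ encircling the part of the spectrum lying below $\EFermi - g_0/2$ and write the Fermi projection as the Riesz integral $P(q) = \frac{1}{2\pi \ii} \oint_{\mathcal{C}} (z - H(q))^{-1} \, \dd z$ for every $q \in \mathcal{U}$; this also re-confirms $P(q) \in \mathcal{A}$ (cf. Remark~\ref{reformulation:remark:stability_algebras_functional_calculus}). The resolvent identity together with the continuity of $q \mapsto H(q)$ then yields that $q \mapsto P(q)$ is continuous into $\mathcal{A}$, so after shrinking to a smaller neighborhood $\mathcal{U}' \subseteq \mathcal{U}$ I obtain $\snorm{P(q) - P(q_0)}_{\mathcal{A}} < 1$. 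Since $P(q_0)$ is a nonzero projection in a $C^*$-algebra it has norm exactly $1$; were $P(q) = 0$ for some $q \in \mathcal{U}'$, we would get $1 = \snorm{P(q_0)} = \snorm{P(q_0) - P(q)} < 1$, a contradiction. Hence $P(q) \neq 0$ on $\mathcal{U}'$, which is precisely condition (ii).

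Combining the two steps, $\mathcal{U}' \subseteq \Qgap$, and since $q_0 \in \Qgap$ was arbitrary, $\Qgap$ is open in the relative topology of $Q$. The only genuinely delicate points I anticipate are bookkeeping ones rather than conceptual: ensuring the contour $\mathcal{C}$ can be chosen uniformly over the whole neighborhood (which rests on the uniform gap together with the uniform norm bound on $H(q)$), and being explicit that the nonvanishing estimate must be taken in the $C^*$-norm of $\mathcal{A}$ rather than in the stronger $\Cont^1$-norm, so that the clean \emph{``a nonzero projection has norm one''} argument applies.
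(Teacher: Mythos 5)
Your proof is correct, but it reaches the conclusion by a genuinely different route than the paper's. For condition (i) the paper invokes the \emph{outer continuity} of the spectra $\sigma \bigl ( H(q) \bigr )$ (citing Mantoiu--Purice), whereas you make the underlying mechanism explicit through the elementary bound $\sigma \bigl ( H(q) \bigr ) \subseteq \bigl \{ E \in \R \, : \, \dist \bigl ( E , \sigma(H(q_0)) \bigr ) \leqslant \snorm{H(q) - H(q_0)}_{\mathcal{A}} \bigr \}$ for selfadjoint elements of a $C^*$-algebra; these are essentially the same fact, yours just being self-contained. The real divergence is in condition (ii): the paper argues that the spectrum below $\EFermi$ cannot suddenly disappear by appealing to the \emph{inner} continuity of the spectra, while you instead prove norm-continuity of $q \mapsto P(q)$ via the Riesz integral and the resolvent identity, and then use that a nonzero projection in a $C^*$-algebra has norm exactly one, so $P(q)$ cannot jump to $0$ while remaining at distance $< 1$ from $P(q_0)$. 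Your route needs only the Neumann series and the Cauchy integral rather than the inner-continuity result, and the continuity of $q \mapsto P(q)$ you establish along the way is the same ingredient the paper uses in Lemma~\ref{KSV:lem:homotopic_loops_homotopic_projections} and Proposition~\ref{prop:homotopy_disorder}; the paper's route is shorter because it delegates both halves of the argument to a single citation. The two caveats you flag --- choosing the contour uniformly on the neighborhood (which the uniform gap and the uniform norm bound on $H(q)$ do guarantee) and measuring the projection estimate in the $C^*$-norm of $\mathcal{A}$ rather than the $\Cont^1$-norm so that the ``nonzero projection has norm one'' step applies --- are exactly the right points to be careful about, and neither causes a problem here.
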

\begin{proof}
	If $Q_{\EFermi} = \emptyset$, there is nothing to prove. So let $q_0 \in Q_{\EFermi} \neq \emptyset$ be arbitrary. Then the closedness of $\sigma \bigl ( H(q_0) \bigr )$ as well as the existence of a gap at $\EFermi$ implies that for $\epsilon > 0$ small enough, the compact interval $\bigl [\EFermi - \nicefrac{\epsilon}{2},\EFermi + \nicefrac{\epsilon}{2} \bigr ]$ is fully contained in the gap, \ie 
	\begin{align*}
		\sigma \bigl ( H(q_0) \bigr ) \cap \bigl [\EFermi - \nicefrac{\epsilon}{2},\EFermi + \nicefrac{\epsilon}{2} \bigr ] = \emptyset
		, 
		\qquad \qquad 
		\sigma \bigl ( H(q_0) \bigr ) \cap \bigl ( -\infty,\EFermi+\nicefrac{\epsilon}{2} \bigr ) \neq \emptyset
		. 
	\end{align*}
	Fundamentally, the continuity of $Q \ni q \mapsto H(q) \in \mathcal{A}$ implies the inner and outer continuity of the spectra $\sigma \bigl ( H(q) \bigr )$ (see \eg \cite[Corollary~2.6]{Mantoiu_Purice:continuity_spectra:2009}). Then the outer continuity of the spectrum $\sigma \bigl ( H(q) \bigr )$ ensures the existence of an open neighborhood $\mathcal{U}_{q_0} \subset Q$ such that 
	\begin{align*}
		\sigma \bigl ( H(q) \bigr ) \cap \bigl [\EFermi - \nicefrac{\epsilon}{2},\EFermi + \nicefrac{\epsilon}{2} \bigr ] = \emptyset
		. 
	\end{align*}
	holds for all $q \in \mathcal{U}_{q_0}$. Moreover, the inner continuity of the spectra guarantees that spectrum does not suddenly collapse, \ie 
	\begin{align*}
		\sigma \bigl ( H(q_0) \bigr ) \cap \bigl ( -\infty,\EFermi+\nicefrac{\epsilon}{2} \bigr ) \neq \emptyset
	\end{align*}
	holds on a possibly smaller open neighborhood $\mathcal{U}_{q_0}$. 
	As that also implies $\EFermi \not \in \sigma \bigl ( H(q) \bigr )$ for all $q \in \mathcal{U}_{q_0}$, we have in fact $\mathcal{U}_{q_0} \subseteq Q_{\EFermi}$. Because $q_0 \in Q_{\EFermi}$ was arbitrary, this shows $Q_{\EFermi}$ is an open subset of $Q$. 
\end{proof}
\begin{proof}[Proposition~\ref{KSV:prop:pi_1_simeq_pi_1}]
	Pick any $E \in \R$. Without loss of generality, we may assume $\Qgap \neq \emptyset$. As $\Qgap$ is an open subset of the $\Cont^1$-manifold $Q$, $\imath_{\ast} : \pi_1^1(\Qgap) \longrightarrow \pi_1(\Qgap)$ is injective. It remains to show that $\imath_{\ast}$ is also surjective. 
	
	Let $\eta \in \Cont(\Sone,\Qgap)$ be a loop. Then we may also interpret $\eta$ as a loop in $Q$, and because $\pi_1^1(Q) \simeq \pi_1(Q)$, there exist many $\Cont^1$ modifications $\eta^1$ to $\eta$ as loops in $Q$. 
	
	To see that there exist also modifications which lie entirely in $\Qgap$, we note that $\Qgap$ is open by Lemma~\ref{reformulation:lem:Qgap_open}. Thus we may pick an \emph{open} tubular neighborhood $\mathcal{U}$ of the graph of $\eta$ which lies entirely in $\Qgap$. 
	Since $\mathcal{U}$ can also be seen as an open subset of the $\Cont^1$-manifold $Q$ and $\pi_1^1(Q) \simeq \pi_1(Q)$, we can find a $\Cont^1$ modification $\eta^1$ of $\eta$ which lies in $\mathcal{U}$. 
	Then by definition $[\eta^1] = [\eta] \in \pi_1(\Qgap)$. This concludes the proof. 
\end{proof}
Hence, the assumption on $Q_{\EFermi}$ in the following theorem is satisfied in many practical cases:%
\begin{theorem}[Homotopy-invariance of $\Delta \Pol$]\label{main_result:thm:Delta_Pol_Poincare_group}
	Assume $\imath_{\ast} : \pi_1^1(Q) \longrightarrow \pi_1(Q)$ is an isomorphism. Then for any $\EFermi \in \R$ and all continuously differentiable loops $\eta \in \Cont^1(\Sone,\Alg)$, the polarization depends only on the equivalence class $[\eta] \in \pi_1(Q_{\EFermi})$, \ie $\Delta \Pol$ induces a group homomorphism $\Delta \Pol_{\ast} : \pi_1(\Qgap) \longrightarrow \Z^d$ given by 
	\begin{align*}
		\Delta \Pol_{\ast}([\eta]) = \Delta \Pol(\eta) 
	\end{align*}
	where $\eta$ is a $\Cont^1$-representative of $[\eta] \in \pi_1(Q_{\EFermi})$. 
\end{theorem}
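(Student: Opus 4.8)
The plan is to obtain the statement by combining the two preceding propositions: Proposition~\ref{KSV:prop:pi_1_simeq_pi_1} upgrades the regularity hypothesis on $Q$ to the corresponding statement for the gapped subset $\Qgap$, while Proposition~\ref{KSV:prop:Delta_Pol_1_morphism} already furnishes a group morphism out of the \emph{smooth} homotopy group $\pi_1^1(\Qgap)$. What remains is to push this morphism down to the \emph{continuous} homotopy group $\pi_1(\Qgap)$ and to verify that the resulting assignment is the one in the statement.

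First I would fix an arbitrary $\EFermi\in\R$ and feed the hypothesis---that $\imath_{\ast}\colon\pi_1^1(Q)\to\pi_1(Q)$ is an isomorphism---into Proposition~\ref{KSV:prop:pi_1_simeq_pi_1}. This immediately yields that the inclusion-induced homomorphism $\imath_{\ast}\colon\pi_1^1(\Qgap)\to\pi_1(\Qgap)$ is itself an isomorphism. (The case $\Qgap=\emptyset$ is trivial, as both groups vanish.) Next, specializing Proposition~\ref{KSV:prop:Delta_Pol_1_morphism} to $p=1$ provides a group morphism $\Delta\Pol_{\ast}^1\colon\pi_1^1(\Qgap)\to\Z^d$ satisfying $\Delta\Pol_{\ast}^1([\eta]_1)=\Delta\Pol(\eta)$. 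I would then simply define
\[
	\Delta\Pol_{\ast} := \Delta\Pol_{\ast}^1\circ(\imath_{\ast})^{-1}\colon\pi_1(\Qgap)\longrightarrow\Z^d .
\]
Being a composite of two group morphisms, $\Delta\Pol_{\ast}$ is automatically a homomorphism, so that part of the claim costs nothing.

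The point requiring the most care---and the real content of the theorem---is verifying that this composite agrees with the advertised recipe $[\eta]\mapsto\Delta\Pol(\eta)$ and is independent of the chosen $\Cont^1$-representative. Surjectivity of $\imath_{\ast}$ guarantees that every continuous class $[\eta]\in\pi_1(\Qgap)$ actually contains a $\Cont^1$-loop, so the recipe is applicable. For well-definedness, suppose $\eta,\eta'$ are two $\Cont^1$-representatives of the same class, so that $\imath_{\ast}[\eta]_1=[\eta]=[\eta']=\imath_{\ast}[\eta']_1$; injectivity of $\imath_{\ast}$ then forces $[\eta]_1=[\eta']_1$ in $\pi_1^1(\Qgap)$, whence $\Delta\Pol(\eta)=\Delta\Pol_{\ast}^1([\eta]_1)=\Delta\Pol_{\ast}^1([\eta']_1)=\Delta\Pol(\eta')$. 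Unwinding the same injectivity shows $\Delta\Pol_{\ast}([\eta])=\Delta\Pol(\eta)$ for \emph{any} $\Cont^1$-representative $\eta$, as desired. I would remark that the well-definedness of $\Delta\Pol_{\ast}^1$ inside $\pi_1^1(\Qgap)$---which is what we invoke here---rests in turn on Lemma~\ref{KSV:lem:homotopic_loops_homotopic_projections} together with the $\Cont^1$-homotopy invariance of the King-Smith--Vanderbilt formula \cite[Corollary~2]{Schulz-Baldes_Teufel:random_polarization:2012}, so no further homotopy-of-projections argument is needed at this stage.
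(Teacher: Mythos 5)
Your proposal is correct and follows essentially the same route as the paper: Proposition~\ref{KSV:prop:pi_1_simeq_pi_1} upgrades the hypothesis to an isomorphism $\imath_{\ast} : \pi_1^1(\Qgap) \rightarrow \pi_1(\Qgap)$, and the morphism is defined as the composition $\Delta \Pol_{\ast} := \Delta \Pol_{\ast}^{1} \circ \imath_{\ast}^{-1}$ with Proposition~\ref{KSV:prop:Delta_Pol_1_morphism}. Your explicit verification that the composite is independent of the chosen $\Cont^1$-representative (via injectivity of $\imath_{\ast}$) is left implicit in the paper's two-line proof, but it is the right justification and adds nothing beyond what the cited propositions already deliver.
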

\begin{proof}
	First of all, Lemma~\ref{KSV:prop:pi_1_simeq_pi_1} states that also $\pi_1^1(\Qgap)$ and $\pi_1(\Qgap)$ are isomorphic, and we will denote the isomorphism also with $\imath_{\ast}$. Then the composition of the group morphisms $\imath_{\ast}$ and $\Delta \Pol_{\ast}^1 : \pi_1^1(\Qgap) \longrightarrow \Z^d$ (Proposition~\ref{KSV:prop:Delta_Pol_1_morphism}) yields yet another group morphism, 
	\begin{align*}
		\Delta \Pol_{\ast} := \Delta \Pol_{\ast}^{1} \circ \imath_{\ast}^{-1} : \pi_1(Q_{\EFermi}) \longrightarrow \Z^d 
		.
	\end{align*}
\end{proof}
%

\section{Two-band systems} 
\label{topology}
We have seen that for periodic lattice systems the adiabatic polarization $\Delta \Polad([\eta])$ can be seen as Chern numbers associated to the Bloch bundle. The goal of this section is to provide a technique to \emph{compute} these Chern numbers with particular interest in models where $r = 2$.

\subsection{Periodic two-band systems} 
\label{two-bands}
A \emph{periodic two-band system} is a continuous map ${H}:Q \longrightarrow \Alg_{\mathrm{per}}$
of the form
\begin{align}\label{eq:two_level}
	{H}(k;q) = {h}_0(k;q) \, \id_{2^m} + \sum_{j = 1}^{2m+1} {h}_j(k;q) \, \Sigma_j 
\end{align}
where the functions ${h}_j(\cdot;q)\in\Cont(\T^d)$ are real valued for all $q \in Q$. The $2^m\times 2^m$ matrices $\{\Sigma_1,\ldots,\Sigma_{2m+1}\}$ are selfadjoint and provide a (non-degenerate) irreducible representation of the \emph{complex Clifford algebra} ${\rm Cl}_{\C}(2m)$.\footnote{Note that the same set of matrices provides also a \emph{degenerate} representation for the Clifford algebra ${\rm Cl}_{\C}(2m+1)$ since $\Sigma_1 \cdots \Sigma_{2m}=(\ii)^m\Sigma_{2m+1}$
\cite{Lee:1948}.} Moreover the matrices $\Sigma_j$ can be explicitly constructed as tensor products of $m$ Pauli matrices. With the above assumptions $\hat{H}(q):=\pi_{\text{per}} \bigl ( H(q) \bigr )$ is a selfadjoint operator on $\mathcal{H}$ for all $q \in Q$.

$H(k;q)$ can be diagonalized explicitly: if 
\begin{align}\label{eq:two_level3}
	 \sabs{h}(k;q) := 
	 \left ( \sum_{j = 1}^{2m+1} h_j(k;q)^2 \right )^{\nicefrac{1}{2}} 
\end{align}
is strictly positive for all $k \in \T^d$, then for this $q \in Q$ the \emph{local gap condition} is verified, the two eigenprojections 
\begin{align}
	{P}_{\pm}(k;q) := \frac{1}{2} \left ( \id_{2^m} \pm \sum_{j = 1}^{2m+1}\; \frac{{h}_{j}(k;q)}{\abs{h}(k;q)} \, \Sigma_j \right ) 
	\label{topology_Bloch:eqn:fiber_projection01}
\end{align}
exist and they satisfy ${P}_{\pm}(k;q) {P}_{\mp}(k;q)=0$, ${P}_{+}(k;q)+{P}_{-}(k;q)= \id_{2^m}$. In view of $\big({H}(k;q)-{h}_0(k;q) \, \id_{2^m}\big)^2 = \sabs{h}(k;q)^2 \, \id_{2^m}$, the matrix $H(k;q)$ can have at most two distinct eigenvalues. In fact, the orthogonal projections split 
\begin{align}\label{eq:two_level_splitting}
	{H}(k;q) = H_+(k;q) \oplus H_-(k;q)
\end{align}
into its two spectral components 
\begin{align*}
	{H}_\pm(k;q) :=P_{\pm}(k;q) \; {H}(k;q) \; P_{\pm}(k;q) 
	= \bigl ( h_0 \pm \sabs{h} \bigr )(k;q) \, {P}_{\pm}(k;q)
	.
\end{align*}
For any $(k;q)$ the rank of $P_{\pm}(k;q)$ is exactly $2^{m-1}$. This can be checked by simply taking the trace of \eqref{topology_Bloch:eqn:fiber_projection01} and exploiting that the generators of the Clifford algebra $\Sigma_j$ are traceless. 
\begin{remark}
	Clearly, for $m = 1$ all models are two-band systems. 
	However, many other models such as the \emph{Kane-Mele hamiltonian} \cite{Kane_Mele:Z2_ordering_spin_quantum_Hall_effect:2005} are not of the form \eqref{eq:two_level}, because they also include products of sigma matrices $\Sigma^{(p)}_{j_1,\ldots, j_p}:=\Sigma_{j_1}\ldots\Sigma_{j_p}$ (with $1\leqslant j_1<\ldots< j_p\leqslant 2m$ and $p=2,\ldots, 2m-1$). However, if they can be written as 
	\begin{align*}
		H = H_0 + \sum_{p=2}^{2m-1} \lambda^{p-1} H_p
		, \qquad\qquad 
		H_p := \sum_{1\leqslant j_1<\ldots< j_p\leqslant 2m} h_{j_1,\ldots, j_p} \, \Sigma^{(p)}_{j_1,\ldots, j_p}
		, 
	\end{align*}
	where $H_0$ is a two-band hamiltonian, then $H$ can be viewed as a perturbation of \eqref{eq:two_level}. Thus, topological quantities associated to $H$ coincide with those of $H_0$ provided $\lambda$ is small enough (\cf Section~\ref{reformulation:perturbations}). 
\end{remark}

\subsection{The Bloch bundle of a time-dependent two-band system} 
\label{topology_Bloch:Bloch_bundle}
For any loop $\eta \in \Cont(\Sone,Q_{\EFermi})$ we can consider the \emph{time-dependent} two-band system associated to \eqref{eq:two_level} via the prescription \eqref{intro:eqn:relation_model_hamiltonian_loops}, \ie
\begin{align*}
	{H}_{\eta}(k,t) 
	= {h}_{\eta , 0}(k,t) \, \id_{2^m} + \sum_{j = 1 }^{2m+1} {h}_{\eta,j}(k,t) \, \Sigma_j 
\end{align*}
with ${h}_{\eta , j}(k,t):={h}_{j}\big(k; \eta(\nicefrac{2\pi t}{T}) \big)$. If for all $q\in\eta$ the \emph{local gap condition} $\sabs{h}(\cdot\;;q)>0$ is verified (\eg if there exists an energy $E$ such that $\eta(\Sone) \subseteq Q_E$), we can define from \eqref{topology_Bloch:eqn:fiber_projection01} the Fermi projection 
\begin{align}
	{P}_{\eta}(k,t) := P_-\big(k; \eta(\nicefrac{2\pi t}{T}) \big) 
	= \frac{1}{2}\left ( \id_{2^m} - \sum_{j = 1}^{2m+1} \frac{{h}_{\eta,j}(k,t)}{\sabs{h_{\eta}}(k,t)} \, \Sigma_j 
	\right ) 
	. 
	\label{topology_Bloch:eqn:fiber_projection}
\end{align}
The above formula describes a continuous family of orthogonal projections ${P}_{\eta}(k,t)$ on $\C^{2^m}$ indexed by $(k,t) \in \T^{d+1}$; this family defines the \emph{Bloch bundle} (\cf Section~\ref{KSV:Bloch_bundle})
\begin{align}
	\Blochb(\eta) := \bigsqcup_{(k,t) \in \T^{d+1}} \ran {P}_{\eta}(k,t) 
	\label{topology_Bloch:eqn:Bloch_bundle}
\end{align}
of rank $2^{m-1}$. An efficient way for studying the vector bundle $\Blochb(\eta)$ is to introduce a spherical para\-metri\-zation. In fact, an inspection of equation~\eqref{topology_Bloch:eqn:fiber_projection} implies the projection depends only on the \emph{unit vector} ${\rm u}(k,t)\in\mathbb{S}^{2m}$ with components ${\rm u}_j(k,t):=\nicefrac{{h}_{\eta,j}(k,t)}{\sabs{{h}}_{\eta}(k,t)}$ which suggests to write ${P}_{\eta}(k,t)$ in terms of \emph{spherical coordinates}
\begin{equation}\label{sperical_coordinates}
\begin{aligned}
	Y_1(\theta_1,\ldots,\theta_{2m-1},\varphi)&:=\sin\theta_1\;\ldots\;\sin\theta_{2m-2}\;\sin\theta_{2m-1}\;\cos\varphi&\\
	Y_2(\theta_1,\ldots,\theta_{2m-1},\varphi)&:=\sin\theta_1\;\ldots\;\sin\theta_{2m-2}\;\sin\theta_{2m-1}\;\sin\varphi&\\
	&\;\;\;\;\vdots\\
	Y_j(\theta_1,\ldots,\theta_{2m-1},\varphi)&:=\sin\theta_1\;\ldots\;\sin\theta_{2m+1-j}\;\cos\theta_{2m+2-j}&\\
	&\;\;\;\;\vdots\\
	Y_{2m}(\theta_1,\ldots,\theta_{2m-1},\varphi)&:=\sin\theta_1\;\cos\theta_2\\
	Y_{2m+1}(\theta_1,\ldots,\theta_{2m-1},\varphi)&:=\cos\theta_1&\\
\end{aligned}
\end{equation}
where the angular coordinates $\theta_1,\ldots,\theta_{2m-1}$ range over $[0,\pi]$ and $\varphi$ ranges over $[0,2\pi]$. With the spherical coordinates \eqref{sperical_coordinates} one can construct a family of orthogonal projections on $\C^{2^m}$ given by
\begin{align}
	{P}_{\mathbb{S}^{2m}}({\rm u}):= \frac{1}{2}\left ( \id_{2^m} - \sum_{j = 1}^{2m+1} Y_j({\rm u}) \, \Sigma_j 
	\right ) 
	\label{topology_Bloch:eqn:fiber_projection_spheric}
\end{align}
with ${\rm u}:=(\theta_1,\ldots,\theta_{2m-1},\varphi)$. The projections \eqref{topology_Bloch:eqn:fiber_projection_spheric} define a \emph{reference vector bundle} $\pi:\mathcal{E}_{ \mathbb{S}^{2m}} \rightarrow \mathbb{S}^{2m}$ over the sphere $\mathbb{S}^{2m}$ with total space 
\begin{align*}
	\mathcal{E}_{ \mathbb{S}^{2m}} := \bigsqcup_{{\rm u} \in \mathbb{S}^{2m}} \ran {P}_{\mathbb{S}^{2m}}({\rm u})
	\, . 
\end{align*}
This vector bundle is also known as the \emph{Hopf bundle} in the literature; It is non-trivial with Chern class $c_m(\mathcal{E}_{ \mathbb{S}^{2m}})\simeq(-1)^{m-1}\frac{(m-1)!}{\Omega_{2m}}\dd v$ where $\Omega_{2m}$ is the volume of the sphere $\mathbb{S}^{2m}$ and $\dd v\in H^{2m}_{\rm dR}(\mathbb{S}^{2m})$ is the volume form (see \cite[Exemple 1.27]{Karoubi:cyclic_homology_K_theory:1987} and \cite[Corollary 4.4]{Hatcher:vector_bundles_K_theory:2009}).

With the help of the inverse functions
\begin{equation}\label{topology_Bloch:eqn:angle_variables01}
\begin{aligned}
	\theta_{\eta,j}(k,t) &:= \arctan \left(
	\frac{\sqrt{h_{\eta , 1}(k,t)^2+{h}_{\eta , 2}(k,t)^2+\ldots+{h}_{\eta , 2m+1-j}(k,t)^2}
	}{h_{\eta , 2m+2-j}(k,t)}\right) \\
	\varphi_{\eta}(k,t) &:= 2\arctan\left( \frac{{h}_{\eta , 2}(k,t)}{\sqrt{h_{\eta , 1}(k,t)^2+{h}_{\eta , 2}(k,t)^2}+{h}_{\eta , 1}(k,t)}\right)
 \end{aligned}
\end{equation}
one obtains a continuous map
\begin{align*}
	\Phi_{\eta} : \T^{d+1}& \longrightarrow \mathbb{S}^{2m} 
	, 
	\quad 
	(k,t) \longmapsto \bigl ( \theta_{\eta,1} ,\ldots, \theta_{\eta,2m-1} , \varphi_{\eta} \bigr ) 
	, 
\end{align*}
which allows us to write ${P}_{\eta} = {P}_{\mathbb{S}^{2m}} \circ \Phi_{\eta}$. Restated in a more sophisticated way, for a given loop $\eta$ the map $\Phi_{\eta}$ reconstructs the Bloch bundle $\Blochb(\eta)$ as the \emph{pullback} \cite{Husemoller:fiber_bundles:1966} of the reference vector bundle $\mathcal{E}_{ \mathbb{S}^{2m}}$, \ie $\Blochb(\eta) \simeq \Phi_{\eta}^* (\mathcal{E}_{ \mathbb{S}^{2m}})$.

\subsection{The case of two internal degrees of freedom} 
\label{topology_Bloch:m_equal_1}
Let us now specialize to the simplest, but still non-trivial case of a system with only two internal degrees of freedom, \ie $m=1$. In this case we identify the $\Sigma_1,\Sigma_2,\Sigma_3$ with the Pauli matrices $\sigma_1,\sigma_2,\sigma_3$. With this choice the reference projector \eqref{topology_Bloch:eqn:fiber_projection} reads
\begin{align}
	{P}_{\mathbb{S}^{2}}(\theta,\varphi) = \frac{1}{2} \left (
	\begin{matrix}
		1 - \cos \theta & -\e^{- \ii \varphi} \, \sin \theta \\
		-\e^{+ \ii \varphi} \, \sin \theta & 1 + \cos \theta \\
	\end{matrix}
	\right ) 
	, 
\end{align}
and the inverse transforms \eqref{topology_Bloch:eqn:angle_variables01} can be equivalently written as
\begin{align}
	\theta_{\eta}(k,t) &= \arccos\left( \frac{{h}_{\eta , 3}(k,t)}{\sabs{{h}_{\eta}}(k,t)} \right)
	, 
	\label{topology_Bloch:eqn:angle_variable02s}
	\\
	\varphi_{\eta}(k,t) &= \arctan \left(\frac{{h}_{\eta , 2}(k,t)}{{h}_{\eta , 1}(k,t)}\right)
	\notag 
	. 
\end{align}
It is possible to provide an explicit local trivialization for the complex line bundle $\mathcal{E}_{ \mathbb{S}^{2}}$: One needs at least two charts to cover $\Stwo$, and we shall always choose what we call an \emph{$\Nvar\Svar$ covering}: let $\mathcal{C}_{\Nvar}$ and $\mathcal{C}_{\Svar}$ be two closed, contractible and mutually disjoint sets with non-empty interior which contain only either the \emph{north pole} ($\theta=0$) or the \emph{south pole} ($\theta=\pi$) pole (denoted by $\Nvar$ and $\Svar$, respectively). Then we define $\mathcal{U}_{\Nvar} := \Stwo \setminus \mathcal{C}_{\Svar}$ and $\mathcal{U}_{\Svar} := \Stwo \setminus \mathcal{C}_{\Nvar}$ which serve as our open covering of the sphere. From the definition it follows that $\mathcal{U}_{\Nvar} \cap \mathcal{U}_{\Svar} \neq \emptyset$. On the northern hemisphere, we may choose
\begin{align}
	\Psi_{\Nvar}(\theta,\varphi) := \left (
	\begin{matrix}
		\e^{- \ii \varphi} \, \sin \tfrac{\theta}{2} \\
		-\cos \tfrac{\theta}{2} \\
	\end{matrix}
	\right )
\end{align}
as local section, \ie $(\theta,\varphi) \mapsto \Psi_{\Nvar}(\theta,\varphi)$ is continuous and span $\ran P_{\Stwo}(\theta,\varphi)$ for all $(\theta,\varphi) \in \mathcal{U}_{\Nvar}$. However, this parametrization cannot be extended unambiguously up to the {south pole} $\Svar$ since $\Psi_{\Nvar}(\Svar) = \bigl ( \e^{- \ii \varphi} , 0 \bigr )$ depends non-trivially on $\varphi$. But the parametrization 
\begin{align*}
	\Psi_{\Svar}(\theta,\varphi) := \e^{+ \ii \varphi} \, \Psi_{\Nvar}(\theta,\varphi)
	= \left (
	\begin{matrix}
		\sin \tfrac{\theta}{2} \\
		-\e^{+ \ii \varphi} \, \cos \tfrac{\theta}{2} \\
	\end{matrix}
	\right )
\end{align*}
\emph{is} well-defined at $\Svar$ (but not at $\Nvar$), and $(\theta,\varphi) \mapsto \Psi_{\Svar}(\theta,\varphi)$ is a local section on $\mathcal{U}_{\Svar}$. These local sections are glued together with the transition function 
\begin{align*}
	g_{\Nvar\Svar} : \mathcal{U}_{\Nvar} \cap \mathcal{U}_{\Svar} \longrightarrow U(1) 
	, 
	\quad 
	(\theta,\varphi) \mapsto \e^{+ \ii \varphi} 
	. 
\end{align*}
We recall that all complex line bundles $\mathcal{E} \rightarrow X$ are completely classified by the first Chern class $c_1(\mathcal{E}) \in H^2(X,\Z)$. 
Moreover, the Chern–Weil theory allows us to consider $c_1(\mathcal{E})$
as a differential 2-form (modulo exact forms), \ie $c_1(\mathcal{E}) \in H_{\rm dR}^2(X)$ \cite{Milnor_Stasheff:characteristic_classes:1974}.
In the particular case of the reference line bundle $\pi :\sphereb \longrightarrow \Stwo$ we can provide an explicit representative
for $c_1(\sphereb)$ simply following \cite[pp. 71-75]{Bott_Tu:diff_forms_algebraic_topology:1982} which uses the identification between \emph{top} Chern classes of complex vector bundles and \emph{Euler classes} of the corresponding \emph{realifications}. This gives
$$
c_1(\sphereb) =\dd \zeta
$$
where $\zeta:=\{\zeta_{\Nvar}, \zeta_{\Svar}\}$ is the collection of two local $1$-forms defined by
\begin{align*}
	\zeta_{\Nvar} :& \negmedspace= 
	\begin{cases}
		- \frac{1}{\ii 2 \pi} \, \chi_{\Svar} \, \dd \ln g_{\Nvar\Svar}
		&\qquad \mbox{on\quad $\mathcal{U}_{\Nvar} \cap \mathcal{U}_{\Svar}$} \\
		0 &\qquad \mbox{on\quad $\mathcal{C}_{\Nvar}$} \\
	\end{cases}
	\\
	\zeta_{\Svar} :& \negmedspace= 
	\begin{cases}
		+ \frac{1}{\ii 2 \pi} \, \bigl ( 1 - \chi_{\Svar} \bigr ) \, \dd \ln g_{\Nvar\Svar} &\qquad \mbox{on\quad $\mathcal{U}_{\Nvar} \cap \mathcal{U}_{\Svar}$} \\
		0 &\qquad \mbox{on\quad $\mathcal{C}_{\Svar}$} \\
	\end{cases}
\end{align*}
where $\chi_{\Svar}$ is any smooth function supported in $\mathcal{U}_{\Svar}$.
 The \emph{first Chern number} of the line bundle 
$\sphereb$ is by definition
$$
C(\sphereb):=\int_{\Stwo}c_1(\sphereb)=\int_{\Stwo} \dd \zeta.
$$
Now, let $\mathcal{K}_{\Nvar}\supset \mathcal{C}_{\Nvar}$ an open set such that $\partial\mathcal{K}_{\Nvar}$ is a sufficiently regular closed path around $\Nvar$. Moreover, we assume without loss of generality that $\chi_{\Svar} \vert_{\partial \mathcal{K}_{\Nvar}} = 1$. Then we can 
use the Stoke's theorem to write: 
\begin{align}
	C(\sphereb) &
	= \int_{\mathcal{K}_{\Nvar}}\dd \zeta 
	+ \int_{\Stwo \setminus \mathcal{K}_{\Nvar}} \dd \zeta=
	 \ointclockwise_{\partial \mathcal{K}_{\Nvar}} \zeta_{\Nvar} 
	+ \ointctrclockwise_{\partial \mathcal{K}_{\Nvar}} \zeta_{\Svar} 
	\notag 
	= \ointclockwise_{\partial \mathcal{K}_{\Nvar}} (\zeta_{\Nvar}-\zeta_{\Svar}).
\end{align}
Using the explicit form of the local forms $\zeta_{\Nvar}$ and $\zeta_{\Svar}$ and of the transition function $g_{\Nvar\Svar}$ we arrive at the important formula
\begin{align}
	C(\sphereb) &
	= \frac{1}{2\pi}\ointclockwise_{\Lambda}\dd \varphi 
	\label{eq:Chern_sbundle}
\end{align}
where $\Lambda$ is any closed regular path which goes around the north pole $\Nvar$ in the counter-clockwise direction (\ie the direction of increasing $\varphi$). Since $\varphi \in [0,2\pi]$ wraps around once, we obtain $C(\sphereb)=1$, \ie $\sphereb$ is {non-trivial}.

Now we can return to the study of the Bloch bundle $\Blochb(\eta) \simeq \Phi_{\eta}^* (\mathcal{E}_{ \mathbb{S}^{2}})$. The topology of this line bundle is fully described by the first Chern class $c_1 \bigl ( \Blochb(\eta) \bigr )$ which can be reconstructed from $c_1(\sphereb)$ with the map $\Phi_{\eta}^*$. In fact, the \emph{functoriality} (or \emph{naturality}) of the characteristic classes \cite{Milnor_Stasheff:characteristic_classes:1974} implies 
\begin{align}\label{eq:first_chern_class}
	c_1 \bigl ( \Blochb(\eta) \bigr ) = c_1 \bigl ( \Phi_{\eta}^* (\sphereb) \bigr ) 
	= \Phi_{\eta}^* \, c_1(\sphereb) 
\end{align}
where the last $\Phi_{\eta}^*$ denotes the induced group morphism between the cohomology groups $H^2(\Stwo,\Z)$ and $H^2(\T^{d+1},\Z)$ (resp{.} between the de Rham groups $H_{\rm dR}^2(\Stwo)$ and $H_{\rm dR}^2(\T^{d+1})$).

The pullback structure of the Bloch line bundle $\Blochb(\eta)$ allows us to define a local trivialization inherited from $\sphereb$. The relevant covering of the base space $\T^{d+1}$ is given by
\begin{align*}
	\mathcal{U}_{\Nvar,\eta} := \Phi_{\eta}^{-1} \bigl ( \mathcal{U}_{\Nvar} \bigr ) 
	, 
	\qquad\quad
	\mathcal{U}_{\Svar,\eta} : = \Phi_{\eta}^{-1} \bigl ( \mathcal{U}_{\Svar} \bigr ) 
	.
\end{align*}
Similarly, we will define the \emph{north} and \emph{south pole variety} as 
\begin{align*}
	\Nvar_{\eta} :& \negmedspace= \Phi_{\eta}^{-1} (\{\Nvar\})
	= \bigl \{ (k,t) \in \T^{d+1} \; \vert \; \theta_{\eta}(k,t) = 0 \bigr \}
	, 
	\\
	\Svar_{\eta} :& \negmedspace= \Phi_{\eta}^{-1} (\{\Svar\})
	= \bigl \{ (k,t) \in \T^{d+1} \; \vert \; \theta_{\eta}(k,t) = \pi \bigr \}
	. 
\end{align*}
On $\mathcal{U}_{\Nvar,\eta}$ and $\mathcal{U}_{\Svar,\eta}$, the compositions $\Psi_{\sharp,\eta}(k,t):=\Psi_{\sharp} \circ \Phi_{\eta}(k,t) = \Psi_{\sharp} \bigl ( \theta_{\eta}(k,t),\varphi_{\eta}(k,t) \bigr )$, with $\sharp=\Nvar,\Svar$ respectively, are local frames which are glued together through the transition function $g_{\Nvar\Svar,\eta}:=g_{\Nvar\Svar}\circ \Phi_{\eta}= \e^{+ \ii \varphi_\eta}$. This give us an immediate criterium for the triviality of the Bloch bundle $\Blochb(\eta)$: 
\begin{theorem}
	If $\Nvar_{\eta} = \emptyset$ or $\Svar_{\eta} = \emptyset$, then the Bloch line bundle $\Blochb(\eta)$ is trivial. 
\end{theorem}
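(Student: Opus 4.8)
The plan is to exploit the explicit local sections $\Psi_{\Nvar}$ and $\Psi_{\Svar}$ constructed above, together with the elementary fact that a complex \emph{line} bundle is trivial precisely when it admits a global, continuous, nowhere-vanishing section. By the symmetry $\theta \mapsto \pi - \theta$ interchanging the two poles, it suffices to treat one of the two hypotheses; I will carry out the case $\Svar_{\eta} = \emptyset$ in detail, the case $\Nvar_{\eta} = \emptyset$ being identical after exchanging the roles of $\Nvar$ and $\Svar$ (and correspondingly of $\Psi_{\Nvar}$ and $\Psi_{\Svar}$).

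First I would observe that the frame $\Psi_{\Nvar}(\theta,\varphi) = \bigl ( \e^{-\ii\varphi}\sin\tfrac{\theta}{2} , -\cos\tfrac{\theta}{2} \bigr )$, although introduced only on the chart $\mathcal{U}_{\Nvar}$, is in fact a continuous, nowhere-vanishing section of $\sphereb$ on the \emph{entire} punctured sphere $\Stwo \setminus \{\Svar\}$. Indeed, its two entries vanish simultaneously only if both $\sin\tfrac{\theta}{2} = 0$ and $\cos\tfrac{\theta}{2} = 0$, which never occurs, so $\Psi_{\Nvar}$ is nowhere zero; continuity fails only at the south pole $\theta = \pi$, where $\cos\tfrac{\theta}{2} = 0$ and the surviving entry $\e^{-\ii\varphi}\sin\tfrac{\theta}{2} = \e^{-\ii\varphi}$ depends non-trivially on $\varphi$ (precisely the obstruction already noted when this section was introduced). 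A direct check shows $P_{\Stwo}(\theta,\varphi)\,\Psi_{\Nvar}(\theta,\varphi) = \Psi_{\Nvar}(\theta,\varphi)$, so $\Psi_{\Nvar}$ spans $\ran P_{\Stwo}(\theta,\varphi)$ at every point of $\Stwo \setminus \{\Svar\}$.

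Next I would pull this section back along $\Phi_{\eta}$. Since $\Svar_{\eta} = \Phi_{\eta}^{-1}(\{\Svar\}) = \emptyset$ by hypothesis, the image $\Phi_{\eta}(\T^{d+1})$ lies entirely in $\Stwo \setminus \{\Svar\}$, which is exactly the region on which $\Psi_{\Nvar}$ is a bona fide global section. Hence the composition $\Psi_{\Nvar,\eta} := \Psi_{\Nvar}\circ\Phi_{\eta}$ is a continuous, nowhere-vanishing section of the pullback bundle $\Blochb(\eta) \simeq \Phi_{\eta}^*(\sphereb)$, now defined on all of $\T^{d+1}$. Since a line bundle admitting such a section is trivial, this proves the claim.

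The only point that genuinely requires care --- and hence the main obstacle --- is the first step: one must verify that $\Psi_{\Nvar}$ extends continuously and without zeros across the whole complement of the single point $\Svar$, and not merely over the original chart $\mathcal{U}_{\Nvar} = \Stwo \setminus \mathcal{C}_{\Svar}$. Conceptually this reflects the contractibility of $\Stwo \setminus \{\Svar\}$: under the hypothesis the map $\Phi_{\eta}$ factors through a contractible space and is therefore null-homotopic, and the pullback of any bundle along a null-homotopic map is trivial (by the homotopy invariance of pullbacks used already via \cite{Husemoller:fiber_bundles:1966}). The explicit-section argument above is simply the constructive incarnation of this fact, and once the global nature of $\Psi_{\Nvar}$ on $\Stwo \setminus \{\Svar\}$ is granted, the conclusion is immediate.
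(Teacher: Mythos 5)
Your argument is correct and is precisely the one the paper leaves implicit when it calls the theorem an ``immediate criterium'': when one pole variety is empty, the corresponding local frame $\Psi_{\Nvar,\eta}$ (or $\Psi_{\Svar,\eta}$) becomes a global, continuous, nowhere-vanishing section of the line bundle $\Blochb(\eta)$, which forces triviality. Your verification that $\Psi_{\Nvar}$ genuinely extends over all of $\Stwo \setminus \{\Svar\}$ (equivalently, one may shrink $\mathcal{C}_{\Svar}$ using compactness of $\Phi_{\eta}(\T^{d+1})$), and your remark that this is the constructive form of the null-homotopy argument, are both accurate.
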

So let us proceed to compute \eqref{eq:first_chern_class}: We recall that the $\T^2_{j,n} \hookrightarrow \T^{d+1}$, with $1\leqslant j<n\leqslant d+1$, are the generators of the homology group $H_2(\T^{d+1})$ described in Section \ref{KSV:Bloch_bundle}. The integration of the 2-form $c_1 \bigl ( \Blochb(\eta) \bigr )$ over these two-dimensional surfaces produces a set of independent Chern numbers
\begin{align*}
	C_{j,n}\big([\eta]\big) &:= \int_{\T^2_{j,n}}c_1 \bigl ( \Blochb(\eta) \bigr )= \int_{\T^2_{j,n}}\Phi_{\eta}^* \, c_1(\sphereb). 
	\label{topology_Bloch:eqn:Chern_number_integral_formula01}
\end{align*}
The Chern class $c_1 \bigl ( \Blochb(\eta) \bigr )$ can be realized as the
exterior derivative of the locally defined $1$-forms $\bigl \{ \dd \zeta_{\Nvar,\eta} , \dd \zeta_{\Svar,\eta} \bigr \}$ where $\dd \zeta_{\sharp,\eta}:=\Phi_{\eta}^* (\dd \zeta_{\sharp})$ are defined on $\mathcal{U}_{\sharp,\eta}$ with $\sharp=\Nvar,\Svar$.

Now we can pick an open set $\mathcal{K}_{\Nvar,\eta} \supset \mathcal{C}_{\Nvar,\eta} := \Phi_{\eta}^{-1} (\mathcal{C}_{\Nvar})$ which contains the north pole variety $\Nvar_\eta$ with a sufficiently regular boundary $\partial \mathcal{K}_{\Nvar,\eta} \subset \mathcal{U}_{\Nvar,\eta} \cap \mathcal{U}_{\Svar,\eta}$. \emph{If} we can embed $\T^2_{j,n} \hookrightarrow \T^{d+1}$ such that $\T^2_{j,n} \cap \mathcal{K}_{\Nvar,\eta} = \emptyset$, then Stoke's theorem implies $C_{j,n}([\eta]) = 0$. Put another way, the Chern class $c_1 \bigl ( \Blochb(\eta) \bigr )$ behaves like the exact $2$-form $\dd \zeta_{\Svar,\eta}$ on $\T^2_{j,n}$. This gives yet another simple criterion: if $\Nvar_{\eta} \cap \T^2_{j,n} = \emptyset$ or $\Svar_{\eta} \cap \T^2_{j,n} = \emptyset$, then $C_{j,n}([\eta]) = 0$. 

However, if both the intersections are non-empty, then we exploit
\begin{align*}
	\partial \bigl ( \T^2_{j,n} \cap \mathcal{K}_{\Nvar,\eta} \bigr ) = \partial \bigl ( \T^2_{j,n}\setminus(\T^2_{j,n} \cap \mathcal{K}_{\Nvar,\eta}) \bigr )=\T^2_{j,n} \cap \partial \mathcal{K}_{\Nvar,\eta}
\end{align*}
and $\dd \ln g_{\Nvar\Svar} \circ \Phi_{\eta} = \ii \, \dd \varphi_{\eta}$ to obtain a description of $C_{j,n}([\eta])$ as a winding number, 
\begin{align}
	C_{j,n}([\eta]) &= \ointclockwise_{\partial (\T^2_{j,n} \cap K_{\Nvar,\eta})} \bigl ( \zeta_{\Svar,\eta} - \zeta_{\Nvar,\eta} \bigr )
	= \frac{1}{2\pi} \ointclockwise_{\Lambda_{j,n}} \dd \varphi_{\eta} 
	. 
	\label{topology_Bloch:eqn:Chern_number_integral_formula}
\end{align}
Here $\Lambda_{j,n}$ is any closed regular path which goes around the north singular set $\Nvar_\eta\cap\T^2_{j,n}$ in the counter-clockwise direction.

\section{The uniaxial strain model} 
\label{nn_model}
\renewcommand{\Qgap}{Q_0} 
We now compute \eqref{topology_Bloch:eqn:Chern_number_integral_formula} for the simplest kind of model for piezo effects in graphene, the uniaxial strain model. It allows for nearest neighbor interactions with three hopping parameters $q_0$, $q_1$ and $q_2$ as well as a stagger parameter $q_3$: 
\begin{align}
	\hat{H}(q_0,q_1,q_2,q_3) = \left (
	\begin{matrix}
		+ q_3 \id_{\ell^2(\Gamma)} & q_0 \, \id_{\ell^2(\Gamma)} + q_1 \, \mathfrak{s}_1 + q_2 \, \mathfrak{s}_2 \\
		\overline{q_0} \, \id_{\ell^2(\Gamma)} + \overline{q_1} \, \mathfrak{s}_1^* + \overline{q_2} \, \mathfrak{s}_2^* & -{q_3} \id_{\ell^2(\Gamma)} \\
	\end{matrix}
	\right )
	\label{intro:eqn:nn_hamiltonianXX}
\end{align}
In principle, the parameters can be complex, but to ensure selfadjointness of $\hat{H}$, $q_3$ needs to be real. 

To understand the action of this hamiltonian, we recall Figure~\ref{intro:figure:honeycomb}: the honeycomb lattice consists of two sublattices, and each (white/black) atom has three (black/wite) nearest neighbors. The offdiagonal terms describe nearest-neighbor hopping. If an electron is initially located at the white atom at $\gamma$, it hops to its neighboring black sites located at $\gamma$, $\gamma - \gamma_1$ and $\gamma - \gamma_2$ with rates $q_0$, $q_1$ and $q_2$ (green vectors). 
Applying stagger $q_3 \neq 0$ implies that white and black atoms within a unit cell are no longer equivalent, thus breaking the inversion symmetry.

\subsection{Description of the gapped parameter space} 
\label{nn_model:gap_structure}
Instead of the periodic two-band operator \eqref{intro:eqn:nn_hamiltonianXX}, we will study its counterpart in the Brillouin algebra $\Alg_{\text{per}}$, namely 
\begin{align}\label{nn_model:eqn:parametric_hamiltonian}
	H(k;q) &=
	\Re \bigl ( \varpi(k;q) \bigr ) \, \sigma_1 + \Im \bigl ( \varpi(k;q) \bigr ) \, \sigma_2 + q_3 \, \sigma_3
	, 	
\end{align}
where we have introduced the function 
\begin{align*}
	\varpi(k;q) = q_0 + q_1 \, \e^{- \ii k_1} + q_2 \, \e^{- \ii k_2}
\end{align*}
and the short-hand notation $q=(q_0,q_1,q_2,q_3)\in\C^3\times\R$, $k=(k_1,k_2)\in\T^2$. 
For fixed $(k;q)$, the eigenvalues of $H(k;q)$ are $E_{\pm}(k;q)=\pm \sqrt{q_3^2 + \sabs{\varpi(q;k)}^2}$, and hence the total spectrum 
\begin{align*}
	\sigma \bigl ( H(q) \bigr ) := \bigcup_{k\in\T^2} \sigma \bigl ( H(k;q) \bigr ) 
	= I_-(q) \cup I_+(q) 
\end{align*}
consists of two bands $I_{\pm}(q):= \ran E_{\pm}(\cdot;q)$. Clearly, $\sigma \bigl ( H(q) \bigr )$ is symmetric around $E=0$ and has a gap there if and only if $E_{+}(k;q) > 0$ holds for all $k \in \T^2$. This allows us to define the gapped parameter space $\bigl ( \C^3 \times \R \bigr )_0 = \bigl ( \C^3 \times \R \bigr ) \setminus Q_{\mathrm{ng}}$ at Fermi energy $E=0$ where 
\begin{proposition}[\cite{Hasegawa_Konno_Nakano_Kohomoto:zero_modes_honeycomb:2006}]
	\begin{align*}
		Q_{\mathrm{ng}} &:= \Bigl \{ (0,q_1,q_2,0) \; \; \big \vert \; \; \sabs{q_1} = \sabs{q_2} \Bigr \} 
		\, \cup
		\\
		&\qquad \cup 
		\Bigl \{ (q_0,q_1,q_2,0) \; \; \big \vert \; \; q_0 \neq 0 , \; \; \babs{\abs{\nicefrac{q_1}{q_0}} - \abs{\nicefrac{q_2}{q_0}}} \leqslant 1 \leqslant \babs{\abs{\nicefrac{q_1}{q_0}} + \abs{\nicefrac{q_2}{q_0}}} \Bigr \} 
		. 
	\end{align*}
\end{proposition}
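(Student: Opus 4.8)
The plan is to reduce the statement to an elementary geometric fact about sums of two complex numbers of prescribed modulus. First I would note that since $\sigma\bigl(H(q)\bigr)$ is symmetric about $E=0$ with bands $I_\pm(q)=\ran E_\pm(\cdot;q)$ and $E_+=-E_-\geqslant 0$, the Fermi energy $E=0$ lies in a spectral gap precisely when $E_+(k;q)>0$ for every $k\in\T^2$; the minimum over $k$ is attained because $\T^2$ is compact and $E_+$ is continuous. As $E_+(k;q)=\sqrt{q_3^2+\sabs{\varpi(k;q)}^2}$, this quantity vanishes for some $k$ if and only if $q_3=0$ \emph{and} $\varpi(k;q)=0$ for that $k$. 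Hence $q\in\Qgapless$ if and only if $q_3=0$ and the equation $\varpi(k;q)=0$ admits a solution $k\in\T^2$, and it only remains to characterize this solvability in terms of $q_0,q_1,q_2$.

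Next I would set $z_1:=\e^{-\ii k_1}$ and $z_2:=\e^{-\ii k_2}$, so that $z_1,z_2$ range independently over the unit circle $\Sone$, and the solvability of $\varpi=0$ becomes the question of whether $-q_0$ lies in the set $R:=\{\,q_1 z_1 + q_2 z_2 : \sabs{z_1}=\sabs{z_2}=1\,\}$. The key geometric step is to identify $R$ as the closed annulus $\{\,w\in\C : \babs{\sabs{q_1}-\sabs{q_2}}\leqslant \sabs{w}\leqslant \sabs{q_1}+\sabs{q_2}\,\}$ centered at the origin. Indeed, as $z_j$ runs over $\Sone$ the term $q_j z_j$ runs over the entire circle of radius $\sabs{q_j}$, so $R$ is the Minkowski sum of two origin-centered circles; writing $w$ as a sum of a vector of length $\sabs{q_1}$ and a vector of length $\sabs{q_2}$ is possible exactly when $\sabs{w}$ obeys the triangle inequality together with its reverse, which yields the stated annulus. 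I would record the degenerate cases separately: if $q_1=0$ or $q_2=0$ then $R$ collapses to a single circle, and if both vanish then $R=\{0\}$; in each case the modulus criterion is read off correctly.

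Consequently $\varpi(\cdot;q)$ possesses a zero if and only if $\babs{\sabs{q_1}-\sabs{q_2}}\leqslant \sabs{q_0}\leqslant \sabs{q_1}+\sabs{q_2}$. Finally I would split into two cases. If $q_0=0$, the inequality collapses to $\babs{\sabs{q_1}-\sabs{q_2}}\leqslant 0$, i.e. $\sabs{q_1}=\sabs{q_2}$, which together with $q_3=0$ is the first set in the statement. If $q_0\neq 0$, dividing through by $\sabs{q_0}>0$ gives $\babs{\,\abs{\nicefrac{q_1}{q_0}}-\abs{\nicefrac{q_2}{q_0}}\,}\leqslant 1\leqslant \abs{\nicefrac{q_1}{q_0}}+\abs{\nicefrac{q_2}{q_0}}$, which together with $q_3=0$ is the second set. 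This exhausts $\Qgapless$.

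The only non-formal ingredient is the annulus identification of the second step; everything else is bookkeeping. The forward inclusion $R\subseteq$ annulus is just the ordinary and reverse triangle inequalities, whereas the reverse inclusion requires producing explicit phases: given $w$ in the annulus one fixes the angle $\gamma$ between $q_1 z_1$ and $q_2 z_2$ via the law of cosines $\sabs{w}^2=\sabs{q_1}^2+\sabs{q_2}^2+2\sabs{q_1}\sabs{q_2}\cos\gamma$ (solvable precisely on the annulus), and then rigidly rotates the resulting configuration so that its resultant points toward $w$. This is elementary, but it is where the content lies, so I would make sure the coincident-modulus situation $\sabs{q_1}=\sabs{q_2}$ and the boundary cases $\sabs{w}=\babs{\sabs{q_1}\pm\sabs{q_2}}$ (including $w=0$) are dispatched without gaps.
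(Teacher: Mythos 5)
Your argument is correct and follows essentially the same route as the paper's proof: both reduce the gapless condition to $q_3=0$ together with solvability of $\varpi(k;q)=0$, and both rest on the fact that $\sabs{q_1\e^{-\ii k_1}+q_2\e^{-\ii k_2}}$ sweeps out exactly the interval $\bigl[\babs{\sabs{q_1}-\sabs{q_2}},\sabs{q_1}+\sabs{q_2}\bigr]$. The paper phrases this via the squared modulus of the normalized equation (law of cosines) and treats $q_0=0$ as a separate case, while you phrase it as a Minkowski sum of circles being an annulus and then specialize; the content is identical.
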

\begin{proof}
	The proof is straightforward: from the form of the eigenvalues, $E_{\pm}(k;q)$, it is clear that $q_3 = 0$ is a necessary condition for there to be no gap. So let $q_3 = 0$. Then $\sabs{\varpi(q;k)} = 0$ is equivalent to $\varpi(q;k) = 0$. In case $q_0 = 0$, then $\sabs{q_1} = \sabs{q_2}$ is a necessary, and indeed sufficient condition (the points of intersection depend on the phase of the complex numbers $q_1$ and $q_2$). In the remaining case, $q_0 \neq 0$ can be factored out and we can rewrite $\varpi(q;k) = 0$ as 
	\begin{align*}
		1 = - \tfrac{q_1}{q_0} \, \e^{- \ii k_1} - \tfrac{q_2}{q_0} \, \e^{- \ii k_2} 
		, 
	\end{align*}
	and taking the square of the modulus of both sides yields 
	\begin{align*}
		1 = \abs{\frac{q_1}{q_0}}^2 + \abs{\frac{q_2}{q_0}}^2 + 2 \, \Re \left( \frac{\overline{q_1} \, q_2}{q_0^2} \, \e^{+ \ii (k_1 - k_2)} \right) 
		. 
	\end{align*}
	The right-hand side takes values in $\bigl [ (\rho_1 - \rho_2)^2 , (\rho_1 + \rho_2)^2 \bigr ]$, $\rho_j := \babs{\nicefrac{q_j}{q_0}}$, and thus the above equation has a solution if and only if $q$ is in the second region of $Q_{\mathrm{ng}}$. 
\end{proof}
The gapped parameter space is independent of phase and sign of the hopping parameters. Moreover, as the model is trivial if two of the three hopping parameters vanish, we can set $q_0=1$ by choosing a suitable energy scale for the hamiltonian \eqref{nn_model:eqn:parametric_hamiltonian}. Hence, without loss of generality, we can restrict our attention to the parameter space
\begin{align}\label{eq:paraspaces_setting}
	Q = \{ 1 \} \times [0,2] \times [0,2] \times[-1,1] \subset \C^3 \times \R
	.
\end{align}
When necessary, we will smoothen the corners of this cube so that $Q$ is a $\Cont^{\infty}$-manifold. (But even with sharp corners $\pi_1^1(Q) \simeq \pi_1(Q)$ holds which is all we really need.) 
We have sketched the parameter space of gapped configurations $Q_0=Q\setminus Q_{\mathrm{ng}}$ in Figure~\ref{intro:figure:spectrum_H_nn}~(b). Then one immediately deduces the following result:
\begin{proposition}[Fundamental group of $Q_0$]\label{nn_model:prop:pi_1_Qgap}
	The fundamental group $\pi_1(\Qgap) \simeq \Z^2$ is generated by 
	\begin{align}
		\eta_1(t) &= \bigl ( 1 + \eps \, \cos t , 0 , - \eps \, \sin t)
		, 
		\quad \quad 
		\eta_2(t) = \bigl ( 0 , 1 + \eps \, \cos t , - \eps \, \sin t)
		, 
		\label{nn_model:eqn:generators_pi_1}
	\end{align}
	for some $\eps$ small enough. 
\end{proposition}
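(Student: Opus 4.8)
The plan is to reduce the statement to an elementary computation in three‑dimensional topology and then read off the generators. Using the description of $\Qgapless$ just obtained together with the reduction to $Q=\{1\}\times[0,2]\times[0,2]\times[-1,1]$, the non‑gapped set inside $Q$ is the planar, \emph{convex} pentagon
\[
	R:=\bigl\{(1,q_1,q_2,0) \; \big| \; \abs{q_1-q_2}\leqslant 1\leqslant q_1+q_2,\; q_1,q_2\in[0,2]\bigr\}
\]
sitting in the slice $q_3=0$; hence $\Qgap=Q\setminus R$ is a solid box with a flat pentagonal plate deleted from its mid‑plane. Since $Q$ is contractible and $R$ is a $2$‑disk, all of the topology of $\Qgap$ is produced by the way $R$ is embedded. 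The complement of $R$ inside $\{q_3=0\}$ is precisely the union of the three disjoint open triangles $\Delta_0,\Delta_1,\Delta_2$ of Figure~\ref{intro:figure:spectrum_H_nn}(a), and these are the only places where $q_3>0$ communicates with $q_3<0$ within $\Qgap$.

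The homotopy type I would extract by van Kampen. Cover $\Qgap$ by the open sets $U_{\pm}:=\Qgap\cap\{\pm q_3>-\delta\}$ for $\delta$ small. Pushing points vertically away from the plate shows that each $U_{\pm}$ deformation retracts onto a contractible ``upper'' block and a contractible ``lower'' block joined \emph{only} across the three triangular holes; thus $U_{\pm}$, and likewise the thin slab $U_+\cap U_-$, are homotopy equivalent to two points joined by three arcs, i.e. to $\Sone\vee\Sone$, and the inclusions induce isomorphisms on $\pi_1$. Van Kampen then gives $\Qgap\simeq\Sone\vee\Sone$, so that $\pi_1(\Qgap)$ is free of rank two and in particular $H_1(\Qgap)\cong\Z^2$. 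Because the polarization map of Theorem~\ref{intro:thm:homotopy_invariance_polarization} is a homomorphism into the abelian group $\Z^d$, it factors through the abelianization $H_1(\Qgap)\cong\Z^2$, which is the invariant actually used in the sequel.

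It remains to exhibit $\eta_1,\eta_2$ as a generating set. The two independent classes are represented by loops threading two of the three holes: the point $(1,0,0)$ is the common corner of $\Delta_0$ and $\Delta_1$, and $(0,1,0)$ the common corner of $\Delta_0$ and $\Delta_2$, while the small circles $\eta_1,\eta_2$ of~\eqref{nn_model:eqn:generators_pi_1} encircle these points inside the walls $q_2=0$ and $q_1=0$ respectively. I would check $[\eta_1]\neq 0$ by the following obstruction: a contracting disk for $\eta_1$ must stay in the admissible region $q_2\geqslant 0$, but near $(1,0,0)$ the plate is $\{q_3=0,\; q_2\geqslant\abs{q_1-1}\}$, so such a disk is forced to meet $R$ at the pinch point; hence $\eta_1$ is non‑contractible, and symmetrically for $\eta_2$. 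That $\{[\eta_1],[\eta_2]\}$ is a genuine basis (not merely non‑trivial) I would confirm with the winding‑number formula~\eqref{topology_Bloch:eqn:Chern_number_integral_formula}, which sends $\eta_1,\eta_2$ to the two standard generators.

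The main obstacle is the non‑generic way $R$ meets $\partial Q$: the plate is glued to the faces $q_1=2$ and $q_2=2$ and \emph{pinches} down to the isolated points $(1,0,0)$ and $(0,1,0)$, which themselves lie on the faces $q_2=0$ and $q_1=0$. This is exactly why the naive heuristic ``the complement of a disk in a ball is simply connected'' fails, and it is what makes the loops around the pinch points non‑trivial; consequently all the care needed in the deformation retractions lives near these pinch points and near the smoothed corners of $Q$, after which the van Kampen assembly is routine. I would also flag the mild imprecision that the honest homotopy type is $\Sone\vee\Sone$, so that $\pi_1(\Qgap)$ is the free group $F_2$ whose abelianization is the stated $\Z^2$; since $\Delta\Pol_{\ast}$ takes values in an abelian group, only this abelianization is needed.
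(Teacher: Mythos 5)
Your argument is correct and goes well beyond what the paper actually offers here: the paper's ``proof'' is a two-sentence appeal to Figure~\ref{intro:figure:spectrum_H_nn}(b), whereas you give an actual computation. Your identification of the gapless set as the convex pentagon $R$ pinching down to $(1,0,0)$ and $(0,1,0)$, the observation that the three triangular holes are the only passages between $\{q_3>0\}$ and $\{q_3<0\}$, and the van Kampen assembly are all sound. The quickest way to make the retractions precise is to note that $Q_0$ is the homotopy pushout of the two half-boxes $Q_0\cap\{\pm q_3\geqslant 0\}$ (each contractible: pushing $q_3$ towards $\pm 1$ never re-enters the deleted plate) over their intersection, which is exactly the three disjoint convex triangles; hence $Q_0$ is homotopy equivalent to the unreduced suspension of three points, i.e.\ a graph with two vertices and three edges. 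More importantly, what you flag as a ``mild imprecision'' is in fact an error in the Proposition as stated: $\pi_1(Q_0)$ is the \emph{free} group on two generators, not $\Z^2$, since the commutator $[\eta_1][\eta_2][\eta_1]^{-1}[\eta_2]^{-1}$ is non-trivial in a wedge of two circles. Only the abelianization $H_1(Q_0)\cong\Z^2$ is correct, and --- as the paper itself concedes in the discussion of abelianization at the end of Section~\ref{intro:parameter_space} --- that is all $\Delta\Pol_{\ast}$ can detect, so nothing downstream is affected. (The paper's own sketch betrays the slip by writing the class of the third family of loops additively as ``$\eta_2-\eta_1$''.)

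Two smaller points. First, your obstruction argument for $[\eta_1]\neq 0$ (``a contracting disk is forced to meet $R$ at the pinch point'') is not a proof as written: a null-homotopy is merely a continuous map of a disk, and ``forced to meet'' requires an intersection- or degree-theoretic justification. It is also unnecessary. Once $Q_0$ is identified with the theta graph whose edges $a_0,a_1,a_2$ are labelled by the three triangles, one reads off $[\eta_1]=a_0a_1^{-1}$ and $[\eta_2]=a_0a_2^{-1}$ (up to orientation and basepoint) simply by recording which triangles each loop crosses and in which vertical direction --- $\eta_1$ threads $\Delta_1$ and $\Delta_0$, $\eta_2$ threads $\Delta_2$ and $\Delta_0$ --- and these two words freely generate the fundamental group of the graph. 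Second, invoking the winding-number formula of Section~\ref{nn_model:Chern_numbers} to certify that $\{[\eta_1],[\eta_2]\}$ is a basis is logically admissible but circular in spirit, since the Proposition is what organizes that very computation; the edge-reading argument is self-contained and should replace it.
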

A detailed proof of this quite obvious fact would be unnecessarily technical, so let us just sketch the basic idea using Figure~\ref{intro:figure:spectrum_H_nn}~(b): It is clear that $\eta_1$ and $\eta_2$ each generate distinct fundamental loops. One may think that loops which wrap around the gapless region and run from $q_I= \bigl ( \nicefrac{3}{2},\nicefrac{1}{4},0 \bigr )$ to $q_{II} = \bigl ( \nicefrac{1}{4},\nicefrac{3}{2},0 \bigr )$ and back represent yet another distinct class of loops. However, these loops can be deformed into a composition of loops equivalent to $\eta_2-\eta_1$. 

\subsection{The Chern numbers} 
\label{nn_model:Chern_numbers}
On $Q$, the three prefactors \eqref{nn_model:eqn:parametric_hamiltonian} of the Pauli matrices simplify to 
\begin{align}
	h_1(k,q) &:= \Re \bigl ( \varpi(q;k) \bigr ) 
	= 1 + q_1 \, \cos k_1 + q_2 \, \cos k_2 
	\notag \\
	h_2(k;q) &:= \Im \bigl ( \varpi(q;k) \bigr ) 
	= q_1 \, \sin k_1 + q_2 \, \sin k_2 
	\label{nn_model:eqn:parameter_hamiltonian_real}
	\\
	h_3(k;q) &= q_3 
	\notag 
\end{align}
The symmetry of the model manifests itself in 
\begin{align*}
	h_1 \bigl ( k_1 ,\cancel{ k_2}; \eta_1(t) \bigr ) &= h_2 \bigl ( \cancel{k_1} , k_2 ; \eta_2(t) \bigr )
\end{align*}
where the $\eta_j$ are the generators of the fundamental groups from Proposition~\ref{nn_model:prop:pi_1_Qgap}; Hence, it suffices to calculate the Chern numbers for one of the generators. The main purpose of this section is to prove the following 
\begin{theorem}\label{nn_model:thm:Chern_numbers}
	Let $\eta \in \Cont(\Sone,Q_0)$ be a loop and $[\eta] = n_1 \, [\eta_1] + n_2 \, [\eta_2] \in \pi_1(Q_0)$ its equivalence class where the $\eta_j$ are given by \eqref{nn_model:eqn:generators_pi_1}. Then the matrix of Chern numbers (\cf Section~\ref{KSV:Bloch_bundle}) is given 
	by 
	\begin{align*}
		C([\eta]) &= n_1 \, C ([\eta_1]) + n_2 \, C ([\eta_2])
		\\
		&= n_1 \, \left (
		\begin{array}{c c : c}
			0 & 0 & 1 \\
			0 & 0 & 0 \\ \hdashline
			-1 & 0 & 0 \\
		\end{array}
		\right )
		+ n_2 \, \left (
		\begin{array}{c c : c}
			0 & 0 & 0 \\
			0 & 0 & 1 \\ \hdashline
			0 & -1 & 0 \\
		\end{array}
		\right )
		. 
	\end{align*}
\end{theorem}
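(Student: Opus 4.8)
The plan is to use homotopy invariance to reduce the statement to the two generators $\eta_1,\eta_2$, and then to evaluate each of the three independent Chern numbers as a winding number via \eqref{topology_Bloch:eqn:Chern_number_integral_formula}. The first step is the additivity $C([\eta]) = n_1\,C([\eta_1]) + n_2\,C([\eta_2])$. Each entry $C_{j,n}([\eta]) = \int_{\T^2_{j,n}} c_1\bigl(\Blochb(\eta)\bigr)$ depends only on the homotopy class $[\eta]$ by Proposition~\ref{topology_Bloch:prop:Bloch_bundle_homotopy_class}; since the formula \eqref{topology_Bloch:eqn:Chern_number_integral_formula} exhibits $C_{j,n}$ as a degree (winding number) of $\Phi_{\eta}$, and degrees add under concatenation of loops, the very same reasoning as in Proposition~\ref{KSV:prop:Delta_Pol_1_morphism} shows that $[\eta] \mapsto C([\eta])$ is a group homomorphism $\pi_1(Q_0) \to \Z^3$. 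Inserting $[\eta] = n_1[\eta_1] + n_2[\eta_2]$ then gives the claimed linearity, so only $C([\eta_1])$ and $C([\eta_2])$ remain to be found. Moreover the symmetry $h_1(k_1;\eta_1(t)) = h_2(k_2;\eta_2(t))$ interchanges the roles of $k_1$ and $k_2$, so it suffices to compute $C([\eta_1])$ and transport the result to $\eta_2$.

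Next I would substitute $\eta_1$ into \eqref{nn_model:eqn:parameter_hamiltonian_real}. Writing $a(t) := 1 + \eps\cos t$ one finds $h_1 = 1 + a\cos k_1$, $h_2 = a\sin k_1$ and $h_3 = -\eps\sin t$, so that $h_1 + \ii h_2 = 1 + a\,\e^{\ii k_1}$ is independent of $k_2$. The north and south pole varieties are then located by solving $h_1 = h_2 = 0$: this forces $a\,\e^{\ii k_1} = -1$, hence $k_1 = \pi$ and $a = 1$, i.e. $\cos t = 0$. Combining with the sign of $h_3 = -\eps\sin t$ yields $\Nvar_{\eta_1} = \{(\pi,k_2,\tfrac{3\pi}{2}) : k_2 \in \T\}$ and $\Svar_{\eta_1} = \{(\pi,k_2,\tfrac{\pi}{2}) : k_2 \in \T\}$, each a circle running in the $k_2$-direction.

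Then I would apply the triviality criterion of Section~\ref{topology_Bloch:m_equal_1}. Fixing a generic $t \neq \tfrac{\pi}{2},\tfrac{3\pi}{2}$ makes $\Nvar_{\eta_1} \cap \T^2_{1,2} = \emptyset$, and fixing a generic $k_1 \neq \pi$ makes $\Nvar_{\eta_1} \cap \T^2_{2,3} = \emptyset$; hence $C_{1,2}([\eta_1]) = C_{2,3}([\eta_1]) = 0$, so in particular the quantum Hall block $\Omega_{\mathrm{B}}$ vanishes. For the remaining entry I would fix any $k_2^\ast$ and note that $\Nvar_{\eta_1}$ meets $\T^2_{1,3}$ in the single point $(\pi,\tfrac{3\pi}{2})$. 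Linearizing near it with $k_1 = \pi + \delta_1$ and $t = \tfrac{3\pi}{2} + \delta_t$ gives $h_1 + \ii h_2 \approx -\eps\,\delta_t - \ii\,\delta_1$, whose Jacobian in $(\delta_1,\delta_t)$ is nondegenerate of definite sign; thus the north pole is a regular value with a single, transverse preimage, and the winding-number formula \eqref{topology_Bloch:eqn:Chern_number_integral_formula} delivers $C_{1,3}([\eta_1]) = \pm 1$. Assembling the antisymmetric matrix and invoking the symmetry to obtain $C([\eta_2])$ then finishes the argument.

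The main obstacle I anticipate is pinning the sign of $C_{1,3}$ down to exactly $+1$: the linearization is orientation-reversing relative to the naive product orientation, so one must fix the orientation of the cycle $\T^2_{1,3}\hookrightarrow\T^3$ consistently with the orientation of $\Stwo$ that produces $C(\sphereb) = 1$ in \eqref{eq:Chern_sbundle}, and match it against the clockwise/counterclockwise convention in \eqref{topology_Bloch:eqn:Chern_number_integral_formula}. Once these conventions are fixed, the remaining content is the transversality check that the poles are regular values, which is immediate from the nondegeneracy of the linearization above.
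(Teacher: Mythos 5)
Your strategy coincides with the paper's almost step for step: reduce to the generators by linearity/homotopy invariance, use the $k_1\leftrightarrow k_2$ symmetry to treat only $\eta_1$, locate the pole varieties $\Nvar_{\eta_1}$ and $\Svar_{\eta_1}$ (your $(\pi,k_2,\tfrac{3\pi}{2})$ and $(\pi,k_2,\tfrac{\pi}{2})$ agree with the paper's $(-\pi,k_2,-\tfrac{\pi}{2})$ and $(-\pi,k_2,+\tfrac{\pi}{2})$ modulo the fundamental domain), kill $C_{1,2}$ and $C_{2,3}$ by choosing embedding points for which one of the intersections is empty, and evaluate $C_{1,3}$ as a winding number via \eqref{topology_Bloch:eqn:Chern_number_integral_formula}. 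Your linearization $h_1+\ii h_2\approx -\eps\,\delta_t-\ii\,\delta_1$ near the north pole is also correct.

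The one place where your argument stops short of the claim is exactly the point you flag yourself: the sign of $C_{1,3}([\eta_1])$. The theorem asserts the value $+1$, not $\pm 1$, and the sign is not cosmetic — it determines $\Delta\mathcal{P}(\eta_1)=(1,0)$ rather than $(-1,0)$. A transversality/regular-value argument gives $\sabs{C_{1,3}}=1$ but by itself leaves the orientation bookkeeping (orientation of $\T^2_{1,3}\hookrightarrow\T^3$, orientation of $\Stwo$ giving $C(\sphereb)=1$, and the clockwise convention in \eqref{topology_Bloch:eqn:Chern_number_integral_formula}) unresolved. The paper sidesteps this entirely by brute force: it parametrizes the boundary circle $\gamma_{1,3}(s)=p_{\Nvar}+\delta(\cos s,0,\sin s)$, expands $\partial_{k_1}\varphi_{\eta_1}$ and $\partial_t\varphi_{\eta_1}$ to leading order in $\delta$, and evaluates
\begin{align*}
	C_{1,3}([\eta_1]) = \frac{\eps}{2\pi}\int_0^{2\pi}\frac{\dd s}{\cos^2 s+\eps^2\sin^2 s}+\order(\delta)=1
	,
\end{align*}
which fixes magnitude and sign simultaneously with all conventions already built into the explicit formulas. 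To close your proof you would either have to carry out this (or an equivalent) explicit computation, or carefully pin down the three orientation conventions and verify that your orientation-reversing Jacobian indeed yields $+1$ rather than $-1$; as written, that step is asserted but not done.
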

We follow the ideas put forth in Section~\ref{topology_Bloch:m_equal_1} in order to compute $C_1([\eta_1])$ component-by-component. In view of \eqref{nn_model:eqn:parameter_hamiltonian_real} and \eqref{topology_Bloch:eqn:angle_variable02s}, the angle coordinates are 
\begin{align*}
	\theta_{\eta_1}(k_1,\cancel{k_2},t) &= \arccos\left( \frac{- \eps \, \sin t}{\sqrt{1 + 2(1 + \eps \, \cos t) \, \cos k_1+ (1 + \eps \, \cos t)^2 + \eps^2 \, \sin^2 t}}\right)
	, 
	\\
	\varphi_{\eta_1}(k_1,\cancel{k_2},t) &= \arctan\left( \frac{ (1 + \eps \, \cos t) \, \sin k_1}{1 + (1 + \eps \, \cos t) \, \cos k_1}\right)
	. 
\end{align*}
The north and south pole varieties are determined by $\theta_{\eta}$: for $\Nvar_{\eta_1}$ and $\Svar_{\eta_1}$, we need to solve 
\begin{align*}
	 \eps \, \sin t = \mp 
	 \sqrt{ 1 + 2(1 + \eps \, \cos t) \, \cos k_1+ (1 + \eps \, \cos t)^2 + \eps^2 \, \sin^2 t}
	 , 
\end{align*}
and a short computation yields 
\begin{align*}
	\Nvar_{\eta_1} &= \left \{ \bigl ( -\pi , k_2 , -\tfrac{\pi}{2} \bigr ) \; \; \big \vert \; \; k_2 \in [-\pi,+\pi) \right \} 
	, 
	\\
	\Svar_{\eta_1} &= \left \{ \bigl ( -\pi , k_2 , +\tfrac{\pi}{2} \bigr ) \; \; \big \vert \; \; k_2 \in [-\pi,+\pi) \right \} 
	.
\end{align*}
Our first task is to verify whether the intersections of the tori 
\begin{align*}
	\T^2_{1,3} &= \left \{ (k_1,k_2,t) \in \T^3 \; \; \big \vert \; \; k_2 = \ast \in [-\pi,+\pi) \right \} 
	\\
	\T^2_{2,3} &= \left \{ (k_1,k_2,t) \in \T^3 \; \; \big \vert \; \; k_1 = \ast \in [-\pi,+\pi) \right \} 
	\\
	\T^2_{1,2} &= \left \{ (k_1,k_2,t) \in \T^3 \; \; \big \vert \; \; t = \ast \in [-\pi,+\pi) \right \} 
\end{align*}
with the north and south pole varieties are empty, because if at least one of them is, the Chern number has to be $0$. The intersections with the first torus consists only of a single point, 
\begin{align*}
	\Nvar_{\eta_1} \cap \T^2_{1,3} &= \left \{ \bigl ( -\pi , \ast , -\tfrac{\pi}{2} \bigr ) \right \}
	, 
	\qquad \qquad 
	\Svar_{\eta_1} \cap \T^2_{1,3} = \left \{ \bigl ( -\pi , \ast , \tfrac{\pi}{2} \bigr ) \right \} 
	. 
\end{align*}
Whether the intersections with $\T^2_{2,3}$ are empty or $\Nvar_{\eta_1}$ and $\Svar_{\eta_1}$ themselves depends on the choice of embedding point: 
\begin{align*}
	\Nvar_{\eta_1} \cap \T^2_{2,3} &= 
	\begin{cases}
		\emptyset & \ast \neq -\pi \\
		\Nvar_{\eta_1} & \ast = -\pi \\
	\end{cases}
	, 
	\qquad \qquad 
	\Svar_{\eta_1} \cap \T^2_{2,3} = 
	\begin{cases}
		\emptyset & \ast \neq -\pi \\
		\Svar_{\eta_1} & \ast = -\pi \\
	\end{cases}
\end{align*}
From a topological perspective, the embedding point (\ie the choice of $k_2 = \ast$) is not important, so we conclude $C_{23}([\eta_1]) = 0$. For the sake of completeness, we will postpone the explicit computation for the case $\ast = \pi$ to the end of the section. 
In case of the last torus, at least one of the intersections is always empty
independently of the choice of $\ast$, and thus $C_{1,2}([\eta_1]) = 0$.

To compute $C_{1,3}([\eta_1])$, we pick an open set $\mathcal{K}_{\Nvar,\eta_1}$ which contains $p_{\Nvar} = \bigl ( -\pi,\ast,-\tfrac{\pi}{2} \bigr )$ so that the boundary $\partial \mathcal{K}_{\Nvar,\eta_1}$ is parametrized by the circle 
\begin{align*}
	\gamma_{1,3}(s) = p_{\Nvar} + \delta \, \bigl ( \cos s , 0 , \sin s \bigr ) 
\end{align*}
for some $\delta > 0$ small enough. The exterior derivative 
\begin{align*}
		\dd \varphi_{\eta_1}(k_1,\cancel{k_2},t)&= \partial_{k_1} \varphi_{\eta_1}(k_1,t) \, \dd k_1 + \partial_t \varphi_{\eta_1}(k_1,t) \, \dd t 
\end{align*}
depends on 
\begin{align*}
	\partial_{k_1} \varphi_{\eta_1}(k_1,t) &= - \frac{(1 + \eps \, \cos t) (1 + \eps \, \cos t + \cos k_1)}{1 + 2 \, (1 + \eps \, \cos t) \, \cos k_1 + (1 + \eps \, \cos t)^2}
	, 
	\\
	\partial_t \varphi_{\eta_1}(k_1,t) &= \frac{\eps \, \sin k_1 \, \sin t}{1 + 2 \, (1 + \eps \, \cos t) \, \cos k_1 + (1 + \eps \, \cos t)^2}
	. 
\end{align*}
Expanding the partial derivatives evaluated at $\gamma_{1,3}(s)$ for small $\delta$, we obtain 
\begin{align*}
	\partial_{k_1} \varphi_{\eta_1} \bigl ( \gamma_{01}(s) \bigr ) &= - \frac{- \delta \, \eps \, \sin s + \order(\delta^2)}{\delta^2 \bigl ( \cos^2 s + \eps^2 \, \sin^2 s \bigr ) + \order(\delta^3)} 
	= - \frac{1}{\delta} \frac{\eps \, \sin s}{\cos^2 s + \eps^2 \, \sin^2 s} 
	+ \order(1) 
	, 
	\\
	\partial_t \varphi_{\eta_1} \bigl ( \gamma_{01}(s) \bigr ) &= \frac{- \delta \, \eps \, \cos s + \order(\delta^2)}{\delta^2 \bigl ( \cos^2 s + \eps^2 \, \sin^2 s \bigr ) + \order(\delta^3)} 
	= \frac{1}{\delta} \frac{\eps \, \cos s}{\cos^2 s + \eps^2 \, \sin^2 s} 
	+ \order(1) 
	. 
\end{align*}
Now we plug in the parametrization into equation~\eqref{topology_Bloch:eqn:Chern_number_integral_formula} to obtain 
\begin{align*}
	C_{1,3}([\eta_1]) &= \frac{1}{2\pi} \ointclockwise_{\partial (\T^2_{01} \cap K_{\Nvar,\eta_1})} \dd \varphi_{\eta_1} 
	= \frac{1}{2\pi} \int_0^{2\pi} \dd s \, \nabla \varphi_{\eta_1} \bigl ( \gamma_{1,3}(s) \bigr ) \cdot \dot{\gamma}_{13}(s) 
	\\
	&= \frac{\eps}{2\pi} \int_0^{2\pi} \frac{\dd s}{\cos^2 s + \eps^2 \, \sin^2 s} + \order(\delta) 
	= 1 
	. 
\end{align*}
To arrive at the last equality, we note that the leading-order integral can be computed explicitly and that Chern numbers need to be integers. 

For the benefit of the reader, we will double-check that $C_{2,3}([\eta_1]) = 0$ explicitly, even if we choose the “bad” embedding point $\ast = -\pi$. Here, the singular string $\Nvar_{\eta_1} \subset \T^2_{2,3}$ is fully contained in the torus. Pick an open set $\mathcal{K}_{\Nvar,\eta_1} \supset \Nvar_{\eta_1}$ so that $\partial\mathcal{K}_{\Nvar,\eta_1}$ can be parametrized by 
\begin{align*}
	\gamma_{2,3}^{\pm}(s) := \bigl ( -\pi , s , -\tfrac{\pi}{2} \pm \delta \bigr ) 
\end{align*}
for $\delta > 0$ small enough. Then from $\dot{\gamma}^{\pm}_{2,3}(s) = (0,1,0)$ and $\partial_{k_2} \varphi_{\eta_1} \bigl ( \gamma^{\pm}_{2,3}(s) \bigr ) = 0$, we conclude that 
$$
\nabla \varphi_{\eta_1} \bigl ( \gamma_{2,3}^{\pm}(s) \bigr ) \cdot \dot{\gamma}^{\pm}_{2,3}(s) 
	= 0 
$$
which implies that $C_{2,3} ([\eta_1])$ vanishes. 

\subsection{The effect of perturbations} 
\label{nn_model:perturbation}
We are interested in discussing the effect of a small perturbation $\lambda V$ (periodic or random) to the operator \eqref{nn_model:eqn:parametric_hamiltonian}. As in Section \ref{intro:randomness} we will assume $\lambda\ll1$ and $\|V\|_{\Alg}=1$.

The assumption that there exists a global gap of size at least $g$ for the hamiltonian \eqref{nn_model:eqn:parametric_hamiltonian} is equivalent to 
$$
q_3^2 + \sabs{\omega(q;k)}^2>\frac{g^2}{4},\qquad\quad\forall\; k\in\T^2.
$$
This condition is of course realized if $|q_3|> \frac{g}{2}$. For $(q_1,q_2,0) \in Q_{0}$, the above condition is realized if one of the following inequalities is verified
$$
q_2<q_1-1-\frac{g}{2},\qquad q_2>q_1+1+\frac{g}{2}, \qquad q_2<-q_1+1-\frac{g}{2}.
$$
The space $Q_{0,g}\subset Q_{0}$ obtained by the configurations which verify the above relations is represented in Figure \ref{intro:figure:random_parameter_space}. It is clear that $Q_{0,g} \rightarrow Q_{0}$ when $g \rightarrow 0$, \ie $Q_{0,g}$ \emph{deformation retracts} to the thin set $Q_{0}$ in a topological sense \cite{Hatcher:algebraic_topology:2002}. This means that $Q_{0,g}$ is homotopically equivalent to $Q_0$ (at least for $g<2$) and so $\pi_1(Q_{0,g})=\pi_1(Q_{0})$.

Any $\eta\in Q_{0,g}$ can also be considered as a loop in $Q_0$ and its homotopy class can be expressed as $\eta=n_1[\eta_1]+n_2[\eta_2]$. Moreover, for all $\lambda\in[0,\lambda_\ast]$ with $\lambda_{\ast} < \frac{g}{2}$ we can compute the Chern class $C_{j,n,\lambda} ([\eta])$ for the perturbed hamiltonian $H_{\lambda,\eta}(t)$. The homotopy equivalence of the spectral projections (Proposition~\ref{prop:homotopy_disorder}) implies $C_{j,n,\lambda} ([\eta]) = C_{j,n} ([\eta])$ agrees with the Chern numbers from the periodic case.

%
\begin{appendix}
	\section{Symmetric classes for two-band periodic systems} 
\label{two-bands-symmetries}

Let us consider a  \emph{two-band periodic operator} in the  algebra $\mathfrak{A}_{\text{per}}=\mathfrak{S}\otimes\text{Mat}_{2^m}(\C)$ of the form
\begin{align}\label{eq:sym:two_level}
	\hat{H} = \sum_{j = 1}^{2m+1} \mathfrak{h}_j \otimes \Sigma_j 
\end{align}
where $\mathfrak{h}_j=\mathfrak{h}_j^\ast\in \mathfrak{S}$.
In definition \eqref{eq:sym:two_level} there is no loss of generality compared to \eqref{eq:two_level} since $\hat{H}$ and $\hat{H} - \mathfrak{h}_0 \otimes \id_{2^m}$ have the same spectral projections. 

An explicit realization for the  Clifford matrices $\Sigma_j$ is given by
\begin{equation}\label{eq:clifford_matrix}
\begin{aligned}
&\Sigma_1&:=&\quad \sigma_1\otimes\sigma_3\otimes\sigma_3\otimes\ldots\otimes\sigma_3\\
&\Sigma_2&:=&\quad \sigma_2\otimes\sigma_3\otimes\sigma_3\otimes\ldots\otimes\sigma_3\\
&\Sigma_3&:=&\quad \id_2\otimes\sigma_1\otimes\sigma_3\otimes\ldots\otimes\sigma_3\\
&\Sigma_4&:=&\quad \id_2\otimes\sigma_2\otimes\sigma_3\otimes\ldots\otimes\sigma_3\\
&\ \vdots &&\ \ \ \ \ \ \ \ \ \ \ \ \ \ \ \ \ \vdots\ \\
&\Sigma_{2m-1}&:=&\quad \id_2\otimes\id_2\otimes\id_2\otimes\ldots\otimes\sigma_1\\
&\Sigma_{2m}&:=&\quad \id_2\otimes\id_2\otimes\id_2\otimes\ldots\otimes\sigma_2\\
&\Sigma_{2m+1}&:=&\quad \sigma_3\otimes\sigma_3\otimes\sigma_3\otimes\ldots\otimes\sigma_3.
\end{aligned}
\end{equation}
where each term is a tensor products of $m$ Pauli matrices $\sigma_1,\sigma_2,\sigma_3$. This choice is essentially unique, up to unitary equivalences \cite{Lee:1948}. Since $\sigma_1,\sigma_3$ are real matrices and $\sigma_2$ is purely imaginary it follows that $\overline{\Sigma}_j=(-1)^{j+1}\ \Sigma_j$.

The peculiarity of these models resides in the existence of a symmetry which is reflected in the spectral properties. First of all, let us recall the \emph{complex conjugation}  which acts antilinearly as  $C:\Psi\mapsto\overline{\Psi}$ for $\Psi\in \ell^2(\Gamma)\otimes\C^r$ ($r$ is arbitrary). Moreover, we will also need the \emph{parity operator} $\wp$ on $\ell^2(\Gamma)$ which in particular acts by conjugation on selfdajoint $\mathfrak{h}\in\mathfrak{S}$ as
$\wp\, \mathfrak{h}\,\wp=C\,\mathfrak{h}\,C$.
Moreover $\wp=\wp^\ast$ and $\wp^2=\id_{\ell^2(\Gamma)}$.

Let us start with the \emph{odd} case $m=2\nu-1$ and introduce the matrix 
\begin{equation}\label{eq:symm_odd}
\Theta:=\Sigma_1\Sigma_3\ldots\Sigma_{4\nu-1},\qquad\quad \Theta^\ast=\Theta^{-1},\qquad\quad\Theta^2=(-1)^{\nu}\,\id_{2^m} 
. 
\end{equation}
Since $\Theta$ is a product of $2\nu$ matrices (with odd indices), exploiting
the Clifford algebra relations one obtains
$\Theta\,\Sigma_j=(-1)^{j}\,\Sigma_j\,\Theta$. If one introduces the unitary operator $U_\Theta:= \wp\otimes \Theta$, with involution property $U_\Theta^2=(-1)^\nu\,\id_{\mathcal{H}}$, one can verify that
\begin{equation}\label{symm_two_lewel}
U_\Theta\; \hat{H}\;U_\Theta^\ast=-\,C\; \hat{H} \;C
. 
\end{equation}
The relation \eqref{symm_two_lewel} says that the operator $\hat{H}$
has a \emph{particle-hole} (PH) symmetry in the language of
\cite{Schnyder_Ryu_Furusaki_Ludwig:classification_topological_insulators:2008}

For the even case $m=2\nu$ one introduces the matrix 
\begin{equation}\label{eq:symm_even}
\Upsilon:=\Theta\Sigma_{4\nu+1},\qquad\quad \Upsilon^\ast=\Upsilon^{-1},\qquad\quad\Upsilon^2=(-1)^{\nu}\,\id_{2^m}
\end{equation}
Since $\Upsilon$ is a product of $2\nu + 1$ matrices (with odd indices), one obtains $\Upsilon\,\Sigma_j=(-1)^{j+1}\,\Sigma_j\,\Upsilon$. The unitary operator $U_\Upsilon:= \wp\otimes \Upsilon$, with involution property $U_\Upsilon^2=(-1)^\nu\,\id_{\mathcal{H}}$, provides
\begin{equation}\label{symm_two_lewel_even}
U_\Upsilon\; \hat{H}\;U_\Upsilon^\ast=\,C\; \hat{H} \;C
. 
\end{equation}
The relation \eqref{symm_two_lewel_even} says that the operator $\hat{H}$
has a \emph{time-reversal} (TR) symmetry again according to
\cite{Schnyder_Ryu_Furusaki_Ludwig:classification_topological_insulators:2008}

One has the following table of symmetries for $\hat{H}$ depending on the spinorial dimension: 

\begin{table}[htdp]\label{tab:anti-symm}
\begin{center}
\begin{tabular}{|c|c|c|c|c|}
\hline
$m$ & $\nu$ & {\bf PH} & {\bf TR} & {\bf AZ} \\
\hline
\hline
even & even & $0$ & $+1$ & AI\\
\hline
even & odd & $0$ & $-1$ & AII\\
\hline
odd & even & $+1$ & $0$ & D\\
\hline
odd& odd & $-1$ & $0$ & C\\
\hline
\end{tabular}
\end{center}
\caption{
Table of the discrete symmetries for two-band periodic hamiltonian $\hat{H}$ of type \eqref{eq:sym:two_level}. The classification depend on the dimension of the internal degrees of freedom through the parity of $m$ and $\nu:= \lfloor (m+1)/2 \rfloor$. The sign of the PH-symmetry (resp. TR-symmetry) is the sign of $U_\Theta^2$ (resp. $U_\Upsilon^2$). The $0$ means absence of symmetry. The column AZ provides the Cartan's label according to the Altland-Zirnbauer classification \cite{Altland_Zirnbauer:superconductors_symmetries:1997, Schnyder_Ryu_Furusaki_Ludwig:classification_topological_insulators:2008}. }
\end{table}
\end{appendix}

\printbibliography

\end{document}